\newcommand{\bm}[1]{\mbox{\boldmath{$#1$}}}
\begin{document}

\title{Yang-Yang functions, Monodromy and knot polynomials
}


\author{Peng Liu         \and
        Wei-Dong Ruan 
}


\institute{Peng Liu \at
              School Of Applied Science, Beijing Information Science and Technology University, Beijing,10010,China\\
              \email{pliu@mail.ustc.edu.cn}           
           \and
           Wei-Dong Ruan \at
              Institute of Mathematics, Academy of Mathematics and Systems Science, Chinese Academy of Sciences, Beijing, 100190, China
}

\date{Received: date / Accepted: date}

\maketitle

\begin{abstract}
We derive a structure of $\mathbb{Z}[t,t^{-1}]$-module bundle from a family of Yang-Yang functions. For the fundamental representation of the complex simple Lie algebra of classical type, we give explicit wall-crossing formula and prove that the monodromy representation of the $\mathbb{Z}[t,t^{-1}]$-module bundle is equivalent to the braid group representation induced by the universal R-matrices of $U_{h}(g)$. We show that two transformations induced on the fiber by the symmetry breaking deformation and respectively the rotation of two complex parameters commute with each other.
\keywords{Yang-Yang functions \and monodromy \and wall-crossing formula \and knot polynomials}
\end{abstract}

\section{Introduction}
\label{sec:intro}
Yang-Yang function was named by N. Nekrasov, A. Rosly and S. Shatashvili in \cite{Nekrasov}. Originated from C. N. Yang and C. P. Yang's paper \cite{Yang66one}\cite{Yang69}, it was used for the analysis of the non-linear Schr\"odinger model. Behind this function hides a quantum integrable system \cite{Nekrasov}\cite{V11}, thus it aroused much interest. Yang-Yang function can also be realized as the power in the correlation function of the free field realization of Virasoro vertex operators \cite{GW}. D. Gaitto and E. Witten used this realization to derive Jones polynomial for knots from four dimensional Chern-Simons gauge theory. Our interest is to figure out the structure underlying the derivation of knot invariants from the Yang-Yang function. Similar to the work of V. G. Drinfeld and T. Kohno \cite{Chari} \cite{Drinfeld1989c} \cite{Kohno87}, where the monodromy of Knizhnik-Zamolodchikov system was proved to be equivalent to the braid group representation, induced by the universal R-matrices of the quantum enveloping algebra $U_{h}(g)$ of semisimple Lie algebra $g$, we prove that the monodromy representation of the $ \mathbb{Z}[t,t^{-1}]$-module bundle constructed from a family of Yang-Yang functions associated with the fundamental representation of the classical complex simple Lie algebra is equivalent to the braid group representation, induced by the universal R-matrices of $U_{h}(g)$. Furthermore, transformations induced on the fiber by two parameter deformations, the symmetry breaking parameter $c$ from $c=0$ to $c\rightarrow \infty$ and respectively the rotation of two singular complex parameters $z_{1}$ and $z_{2}$, commute with each other. By studying the monodromy, creation and annihilation matrices from the parameter deformations, one can derive the HOMFLY-PT polynomial and Kauffman polynomial for knots. We expect the existence of new knot invariants, different from HOMFLY-PT and Kauffman polynomials, from general representations of Lie algebras. It will be investigated elsewhere.

This paper is organized as follows. Firstly we give the definition of Yang-Yang function in general case and derive from it the $ \mathbb{Z}[t,t^{-1}]$-module bundle structure, then state two main theorems in section 2. In section 3, the parameter rotation for Yang-Yang function is considered. Four types of variations of critical point of Yang-Yang function are studied in detail. For the fundamental representation of classical complex simple Lie algebra, general wall-crossing formula and monodromy representation are derived. The first main theorem follows. In section 4, we discuss the relation between two monodromy representations of $c=0$ and $c\rightarrow +\infty$, which leads to a proof of the second main theorem.

\section{Yang-Yang function, monodromy of its associated bundle and main theorems} \label{sec:main}

\subsection{Yang-Yang function and its critical point}
Let $g$ be a finite dimensional complex simple Lie algebra with Chevalley generators $\{E_{i},F_{i},H_{i}\},i=1,...,n$ and Cartan matrix $C_{ij}=\frac{2(\alpha_{i},\alpha_{j})}{(\alpha_{j},\alpha_{j})}$, where $\alpha_{i}$ primary roots of $g$ and $(,)$ the inner product induced from Killing form. Let $P^{+}$ be the positive Weyl chamber in the weight space of $g$. Fundamental weights $\omega_{i}$ are a set of bases of weight space satisfying $\frac{2(\omega_{i},\alpha_{j})}{(\alpha_{j},\alpha_{j})}=\delta_{ij}$. Weyl vector is defined as $\rho=\sum^{n}_{i=1}\omega_{i}$, thus $\frac{2(\rho,\alpha_{i})}{(\alpha_{i},\alpha_{i})}=1$ for any $i$ $(1\leq i\leq n)$. Let ${\bm \lambda}=(\lambda_{1},...,\lambda_{m})$ be a sequence of weights $\lambda_{a}\in P^{+}$. Denote by $V_{{\bm\lambda}}$ the representation $V_{\lambda_{1}}\otimes...\otimes V_{\lambda_{m}}$ and $\Omega_{{\bm \lambda}}$ the set of weights in $V_{{\bm \lambda}}$. ${\bm l}=(l_{1},...,l_{n})$ a sequence of nonnegative integers is just a $n$ partition of $l=\sum_{i=1}^{n}l_{i}$ and ${\bm 0}=(0,...,0)$.

Yang-Yang function is defined by
\begin{equation}
\begin{split}
{\bm W}({\bm w},{\bm z},{\bm\lambda},{\bm l})=&\sum_{j=1}^{l}\sum _{a=1}^{m}(\alpha_{i_{j}},\lambda _{a})\log(w_{j}-z_{a}) -\sum _{1\leq j<k\leq l}(\alpha_{i_{j}},\alpha_{i_{k}})\log(w_{j}-w_{k})\\
&-\sum _{1\leq a< b\leq m}(\lambda_{a},\lambda_{b})\log(z_{a}-z_{b}).
\end{split}
\end{equation}

It is a function of complex variables ${\bm w}=(w_{1},...,w_{l})$, distinct complex parameters ${\bm z}=(z_{1},...,z_{m})$, weights ${\bm\lambda}$ and $n$ partition ${\bm l}$ of $l$.
To each $z_{a}$ and $w_{j}$, a dominant integral weight $\lambda_{a}\in P^{+}$ and a primary root
$\alpha_{i_{j}}$ are associated respectively, where $i_{j}\in \{ 1,...,n\}$. Let $l_{k}=\#\{j\mid 1\leq j\leq l, i_{j}=k\}$ be the number of $\alpha_{k}$ in $\{\alpha_{i_{j}}\}_{j=1,...,l}$.
Denote by $\alpha({\bm l})$ the sum of all these primary roots $\alpha({\bm l})=l_{1}\alpha_{1}+...l_{n}\alpha_{n}$. Then ${\bm W}({\bm w},{\bm z},{\bm \lambda},{\bm 0})$ is a constant function of ${\bm w}$.

The critical points of ${\bm W}({\bm w},{\bm z},{\bm\lambda},{\bm l})$ satisfy
\begin{equation}
  \frac{\partial {\bm W}({\bm w},{\bm z},{\bm \lambda},{\bm l})}{\partial w_{j}}=0, j=1,...,l
\end{equation} equivalently,
\begin{equation}
  \sum_{a}\frac{(\alpha_{i_{j}},\lambda_{a})}{w_{j}-z_{a}}=\sum_{s\neq j}\frac{(\alpha_{i_{j}},\alpha_{i_{s}})}{w_{j}-w_{s}}, j=1,...,l.
\end{equation}
By definition, if ${\bm w}$ is a solution of $\frac{\partial {\bm W}}{\partial w_{j}}=0$ and $(\alpha_{i_{j}},\alpha_{i_{s}})\neq 0$, then $w_{j}\neq w_{s}$. Also if $(\alpha_{i_{j}},\lambda_{a})\neq 0$, then $w_{j}\neq z_{a}$. Under the permutation of all the coordinates $w_{j}$ associated with the same primary root, the critical point equation above is invariant, thus we do not distinguish these critical points.

The critical points of Yang-Yang function have a close relation with the singular vectors of the tensor product space $V_{{\bm \lambda}}=V_{\lambda_{1}}\otimes V_{\lambda_{2}}\otimes ...\otimes V_{\lambda_{m}}$. Let $Sing V=\{v\in V \mid E_{i} v=0, \forall i \}$ be the subspace of singular vectors in $V$. The following fact is known.

\begin{theorem}\label{LMV} If $\lambda_{1}+...+\lambda_{m}-\alpha({\bm l})\in P^{+}$, then for all nondegenerate critical points of Yang-Yang function ${\bm W}({\bm w},{\bm z},{\bm \lambda},{\bm l})$, there exists a singular vector in $Sing V_{\bm \lambda}$.
\end{theorem}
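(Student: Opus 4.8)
\noindent The plan is to attach to each nondegenerate critical point the associated Bethe vector and then to establish the highest-weight property by reducing it to the critical point equations themselves. First I would recall the universal weight function (Bethe vector) built from the data $({\bm z},{\bm\lambda},{\bm l})$: for every way $\sigma$ of distributing the $l$ colored variables $w_{1},\dots,w_{l}$ (the $j$-th carrying the root $\alpha_{i_{j}}$) among the $m$ tensor factors, one forms the monomial in the lowering operators $F_{i}$ acting on the product of highest-weight vectors $v_{{\bm\lambda}}=v_{\lambda_{1}}\otimes\cdots\otimes v_{\lambda_{m}}$ prescribed by $\sigma$, multiplies it by an explicit rational coefficient assembled from the factors $(w_{j}-z_{a})^{-1}$ and $(w_{j}-w_{k})^{-1}$ that records the nested/forest structure of $\sigma$, and sums over all $\sigma$. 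This produces a vector $\omega({\bm w})=\omega({\bm w},{\bm z},{\bm\lambda},{\bm l})\in V_{{\bm\lambda}}$ of weight $\lambda_{\infty}:=\lambda_{1}+\cdots+\lambda_{m}-\alpha({\bm l})$, rational in ${\bm w}$ (and in ${\bm z}$). The hypothesis $\lambda_{\infty}\in P^{+}$ is exactly what allows $\lambda_{\infty}$ to occur as the highest weight of an irreducible summand of $V_{{\bm\lambda}}$, so that the assertion is not vacuous.

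The key step is the algebraic identity
\[ E_{i}\cdot\omega({\bm w})=\sum_{j:\,i_{j}=i}\frac{\partial{\bm W}}{\partial w_{j}}({\bm w})\,\cdot\,\omega_{j}({\bm w}), \]
where $\omega_{j}({\bm w})$ is again a weight function, of weight $\lambda_{\infty}+\alpha_{i}$, obtained from $\omega$ by deleting the $j$-th variable. I would prove it by expanding $E_{i}\omega$ with the Leibniz rule on the tensor product, commuting each $E_{i}$ through the strings of $F$'s via $[E_{i},F_{j}]=\delta_{ij}H_{i}$ and $[H_{i},F_{j}]=-C_{ij}F_{j}$, and then checking that, after the combinatorial coefficients are summed, the diagonal $H_{i}$-contributions collapse precisely onto the residue expression $\sum_{a}\frac{(\alpha_{i},\lambda_{a})}{w_{j}-z_{a}}-\sum_{s\neq j}\frac{(\alpha_{i},\alpha_{i_{s}})}{w_{j}-w_{s}}$, that is, onto $\partial{\bm W}/\partial w_{j}$, while every other term cancels. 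Organizing the bookkeeping by the forest structure of $\sigma$ makes these cancellations transparent; this is the standard, if lengthy, combinatorial lemma behind the Bethe ansatz for Gaudin-type master functions, and for the classical-type $g$ with fundamental $\lambda_{a}$ needed later it can be made completely explicit. Granting the identity, a critical point kills every $\partial{\bm W}/\partial w_{j}$, hence $E_{i}\omega=0$ for all $i$, so $\omega\in\mathrm{Sing}\,V_{{\bm\lambda}}$.

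The genuine obstacle is to show that $\omega({\bm w})\neq 0$ at a nondegenerate critical point; otherwise the singular vector produced is trivial. Here I would use the Gaudin-type norm formula: pairing $\omega({\bm w})$ with the analogous weight function built from the opposite Borel subalgebra — equivalently, evaluating the contravariant (Shapovalov) form on $\omega({\bm w})$ — yields, up to an explicit nonzero scalar, the Hessian determinant $\det\!\bigl(\partial^{2}{\bm W}/\partial w_{j}\partial w_{k}\bigr)$ of the Yang-Yang function at the critical point. Since the critical point is nondegenerate this determinant is nonzero, so the contravariant norm of $\omega$ is nonzero, which forces $\omega\neq 0$; being a nonzero vector of dominant weight $\lambda_{\infty}$ annihilated by every $E_{i}$, it spans a highest-weight line of type $\lambda_{\infty}$ in $V_{{\bm\lambda}}$, and $\lambda_{\infty}\in P^{+}$ is precisely what permits such a line to exist. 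Establishing the norm formula is the technically demanding ingredient; I would import it from the Reshetikhin--Varchenko / Feigin--Frenkel--Reshetikhin theory of critical points of master functions — where it is obtained either by a residue-and-induction argument on the number of variables or by deformation to the trigonometric setting, in which the pairing becomes the Shapovalov form of the quantum tensor product — rather than reprove it. Combining the vanishing of $E_{i}\omega$ at the critical point with $\omega\neq 0$ gives the desired singular vector.
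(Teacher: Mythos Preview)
Your proposal is correct and follows essentially the same route the paper takes: construct the Bethe (weight) vector, use the identity expressing $E_{i}\omega$ through the gradients $\partial{\bm W}/\partial w_{j}$ to get singularity at critical points, and invoke the Shapovalov--Hessian norm formula to guarantee nonvanishing at nondegenerate ones. The paper does not spell out these steps but simply cites \cite{MV05norm}, \cite{V11}, and Theorem~11.1 of \cite{RV94} for exactly the three ingredients you sketch; your write-up is a more detailed version of the same argument.
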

If $\lambda_{1}+...+\lambda_{m}-\alpha({\bm l})\in P^{+}$, for each critical point of Yang-Yang function, Bethe vector can be constructed\cite{MV05norm}.   Shapovarov form of Bethe vector is proved to be equal to Hessian of Yang-Yang function at the critical point\cite{V11}. Thus nondegeneration of critical point guarantees Bethe vector is nonzero. By theorem 11.1 in \cite{RV94}, each nonzero Bethe vector belongs to $Sing V_{{\bm \lambda}}$. For details of proof, we refer to \cite{V11}\cite{MV05norm}\cite{RV94}. The master function in the reference is just the exponential of Yang-Yang function ${\bm W}({\bm w},{\bm z},{\bm \lambda},{\bm l})$.

Formally denote by $w_{0}$ the critical point of constant function${\bm W}({\bm w},{\bm z},{\bm \lambda},{\bm 0})$.

\begin{theorem}\label{1sing} Let $g$ be a classical complex simple Lie algebra. If ${\bm z}=(z_{1},z_{2})$, ${\bm \lambda}=(\omega_{1},\omega_{1})$, the nondegenerate critical points of the family of Yang-Yang functions $$\{{\bm W}({\bm w},{\bm z},{\bm \lambda},{\bm l})\mid V_{2\omega_{1}-\alpha({\bm l})}  \subset  V_{{\bm \lambda}}\}$$ together with  the formal degenerate critical point $w_{0}$ of ${\bm W}({\bm w},{\bm z},{\bm \lambda},{\bm 0})$ one to one correspond to the weight vectors in $SingV_{{\bm \lambda}}.$
\end{theorem}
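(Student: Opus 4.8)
The plan is to reduce the statement to three ingredients: multiplicity-freeness of $V_{\omega_{1}}\otimes V_{\omega_{1}}$ for classical $g$, an explicit solution of the critical point equations for the few partitions ${\bm l}$ that actually occur, and Theorem~\ref{LMV} (together with the facts recalled after it) to match critical points with singular vectors. First I would decompose $V_{\bm \lambda}=V_{\omega_{1}}\otimes V_{\omega_{1}}$ into irreducibles. In the standard $\varepsilon$-coordinates one gets $V_{\omega_{1}}\otimes V_{\omega_{1}}=V_{2\omega_{1}}\oplus V_{\omega_{2}}$ for $A_{n}$ with $n\geq 2$, $=V_{2\omega_{1}}\oplus V_{0}$ for $A_{1}$, and $=V_{2\omega_{1}}\oplus V_{\omega_{2}}\oplus V_{0}$ for $B_{n},C_{n},D_{n}$; in every case the decomposition is multiplicity-free, so each weight space of $SingV_{\bm \lambda}$ is at most one-dimensional and $\dim SingV_{\bm \lambda}\in\{2,3\}$. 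Correspondingly, the partitions ${\bm l}$ with $V_{2\omega_{1}-\alpha({\bm l})}\subset V_{\bm \lambda}$ are exactly ${\bm l}={\bm 0}$ (weight $2\omega_{1}$); ${\bm l}=(1,0,\dots,0)$, i.e. $\alpha({\bm l})=\alpha_{1}$, whenever $V_{\omega_{2}}$ occurs (weight $\omega_{2}=2\omega_{1}-\alpha_{1}$); and the unique ${\bm l}$ with $\alpha({\bm l})=2\omega_{1}$ whenever $V_{0}$ occurs (weight $0$), which reads ${\bm l}=(2,\dots,2,2)$ for $B_{n}$, $(2,\dots,2,1)$ for $C_{n}$, $(2,\dots,2,1,1)$ for $D_{n}$, and $(1)$ for $A_{1}$.

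Next I would treat the pieces one at a time. For ${\bm l}={\bm 0}$ there is nothing to do: by definition $w_{0}$ is the formal degenerate critical point of the constant function ${\bm W}({\bm w},{\bm z},{\bm \lambda},{\bm 0})$, and it is assigned to the highest weight vector of $V_{2\omega_{1}}\subset SingV_{\bm \lambda}$. For ${\bm l}=(1,0,\dots,0)$ there is a single variable $w_{1}$ attached to $\alpha_{1}$ and no $w$--$w$ interaction, so the critical point equation is $(\alpha_{1},\omega_{1})\big((w_{1}-z_{1})^{-1}+(w_{1}-z_{2})^{-1}\big)=0$; since $(\alpha_{1},\omega_{1})=\tfrac12(\alpha_{1},\alpha_{1})\neq 0$ this has the unique solution $w_{1}=\tfrac12(z_{1}+z_{2})$, which lies in the admissible domain (it differs from $z_{1}$ and $z_{2}$ because $z_{1}\neq z_{2}$), and whose Hessian $-(\alpha_{1},\omega_{1})\big((w_{1}-z_{1})^{-2}+(w_{1}-z_{2})^{-2}\big)=-\,8(\alpha_{1},\omega_{1})/(z_{1}-z_{2})^{2}$ is nonzero. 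Hence it is a nondegenerate critical point, and by Theorem~\ref{LMV} and the remarks following it, it produces a nonzero singular vector, necessarily of weight $\omega_{2}$ (for $A_{1}$, where $\alpha_{1}=2\omega_{1}$, the identical computation produces the weight-$0$ singular vector, and there are no further cases).

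The real work is the partition with $\alpha({\bm l})=2\omega_{1}$ in types $B_{n},C_{n},D_{n}$, where the number of variables grows with the rank ($2n$, $2n-1$, $2n-2$ respectively). I would exhibit a critical point explicitly. The Bethe equations for ${\bm l}=(2,\dots,2,\ast)$ have a chain structure along the Dynkin diagram, and I would solve them from the $\alpha_{n}$-end toward node $1$: at each node the equations for the variables attached to $\alpha_{k}$ form a rational system possessing a distinguished solution invariant under the involution exchanging $z_{1}$ and $z_{2}$, and this telescopes down to a single explicit critical configuration with all coordinates in the admissible domain. One then checks that in this ordering the Hessian is block triangular with invertible diagonal blocks, so the critical point is nondegenerate; by Theorem~\ref{LMV} it gives the weight-$0$ singular vector. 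For the reverse count, uniqueness of the nondegenerate orbit follows either from a direct inspection of the solved equations, or from the fact that a nondegenerate critical point has nonzero Bethe vector (Hessian $=$ Shapovalov norm, \cite{V11}) lying in $(SingV_{\bm \lambda})_{0}$, which is one-dimensional. Assembling the cases: $w_{0}$, the point $\tfrac12(z_{1}+z_{2})$ for ${\bm l}=(1,0,\dots,0)$, and the critical point just constructed give $2$ items in type $A$ and $3$ in types $B,C,D$, matching $\dim SingV_{\bm \lambda}$, and the assignment to the distinct weight vectors $2\omega_{1}$, $\omega_{2}$, $0$ is a bijection.

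I expect the step that demands the most care is the explicit solution of the $\alpha({\bm l})=2\omega_{1}$ Bethe equations and the accompanying nondegeneracy check in types $B_{n},C_{n},D_{n}$; everything else is either classical representation theory (the multiplicity-free decompositions) or the one-variable computation above. An alternative to the hands-on solution of that system is to invoke completeness of the Bethe ansatz for $V_{\omega_{1}}^{\otimes 2}$ with two marked points, which yields exactly one orbit of critical points in each admissible weight, again nondegenerate by the Hessian–Shapovalov identity.
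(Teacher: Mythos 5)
Your proposal is correct and follows essentially the same route as the paper: decompose $V_{\omega_1}\otimes V_{\omega_1}$ multiplicity-freely (your $V_{2\omega_1}\oplus V_{\omega_2}\,[\oplus V_0]$ is exactly the paper's $V_{2\omega_1}\oplus V_{2\omega_1-\alpha_1}\,[\oplus V_0]$), invoke Theorem~\ref{LMV} for injectivity, exhibit an explicit nondegenerate critical point for each admissible ${\bm l}\neq{\bm 0}$, and assign the formal critical point $w_0$ to $v_{2\omega_1}$. The paper does the nontrivial explicit solves for $\alpha({\bm l})=2\omega_1$ via Lemma~\ref{rlemma} applied to \eqref{equ:BNSB}, \eqref{equ:CNSB}, \eqref{equ:DNSB}, which is precisely the chain-solving-from-the-$\alpha_n$-end strategy you sketch.
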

\begin{proof} By Littelmann-Littlewood-Richardson rule\cite{littelmann97}, the following direct sum decompositions can be derived for each case of classical Lie algebras.
\begin{equation}
\begin{split}
A_{n}:&\quad V_{\omega_{1}}\otimes V_{\omega_{1}}=V_{2\omega_{1}}\oplus V_{2\omega_{1}-\alpha_{1}}\\
B_{n}:&\quad V_{\omega_{1}}\otimes V_{\omega_{1}}=V_{2\omega_{1}}\oplus V_{2\omega_{1}-\alpha_{1}}\oplus V_{2\omega_{1}-2\alpha_{1}-...-2\alpha_{n}}\\
C_{n}:&\quad V_{\omega_{1}}\otimes V_{\omega_{1}}=V_{2\omega_{1}}\oplus V_{2\omega_{1}-\alpha_{1}}\oplus  V_{2\omega_{1}-2\alpha_{1}-...-2\alpha_{n-1}-\alpha_{n}}\\
D_{n}:&\quad V_{\omega_{1}}\otimes V_{\omega_{1}}=V_{2\omega_{1}}\oplus V_{2\omega_{1}-\alpha_{1}}\oplus  V_{2\omega_{1}-2\alpha_{1}-...-2\alpha_{n-2}-\alpha_{n-1}-\alpha_{n}}\\
\end{split}
\end{equation}
For $B_{n}$ Lie algebra, the singular vectors of $Sing V_{\omega_{1}}\otimes V_{\omega_{1}}$ are $$v_{2\omega_{1}},v_{2\omega_{1}-\alpha_{1}},v_{2\omega_{1}-2\alpha_{1}-...-2\alpha_{n}}.$$
By theorem \ref{LMV}, there is an injective map between the set of the nondegenerate critical points and $SingV_{\omega_{1}}\otimes V_{\omega_{1}}$.
Corresponding to each singular vector, the explicit solution of the critical point equation can be derived by lemma in section \ref{subsct:monodromy}. They are nondegenerate except when $l=0$. Thus the map is bijective on $SingV_{\omega_{1}}\otimes V_{\omega_{1}}\backslash v_{2\omega_{1}}$. When $l=0$, the formal degenerate critical point $w_{0}$ corresponds to the singular vector $v_{2\omega_{1}}$ in $SingV_{\omega_{1}}\otimes V_{\omega_{1}}$. Therefore, one to one correspondence is proved. For $A_{n}$, $C_{n}$, $D_{n}$, the proofs are similar.
\end{proof}

With an additional deformation parameter $c\in \mathbb{R}$, symmetry breaking Yang-Yang function is defined by

\begin{equation}
\begin{split}
{\bm W}_{c}({\bm w},{\bm z},{\bm \lambda},{\bm l})=&\sum _{j,a}(\alpha_{i_{j}},\lambda _{a})\log(w_{j}-z_{a}) -\sum _{j<k}(\alpha_{i_{j}},\alpha_{i_{k}})\log(w_{j}-w_{k})\\
&-\sum _{a< b}(\lambda_{a},\lambda_{b})\log(z_{a}-z_{b})-c\sum _{j}(\rho,\alpha_{i_{j}})w_{j}+c\sum _{a}(\rho,\lambda_{a})z_{a}.
\end{split}
\end{equation}

It is clear that ${\bm W}_{0}({\bm w},{\bm z},{\bm \lambda},{\bm l})={\bm W}({\bm w},{\bm z},{\bm \lambda},{\bm l})$. In fact, when $c\in \mathbb{Z}_{\geq0}$, ${\bm W}_{c}({\bm w},{\bm z},{\bm \lambda},{\bm l})$ is just a limitation of the parameter deformation of $\eta$ with an additional parameter $z_{\infty}=1$ and the dominant integral weight $\lambda_{\infty}=c\rho\in P^{+}$ associated with it.
\begin{equation}\label{lpt}
\begin{split}
&{\bm W}_{c}({\bm w},{\bm z},{\bm \lambda},{\bm l})\\
=&\lim_{\eta\rightarrow0} [\sum _{j,a}(\alpha_{i_{j}},\lambda _{a})\log(w_{j}-z_{a}) +\sum _{j}(\frac{c\rho}{\eta},\alpha_{i_{j}})\log(1-\eta w_{j})\\
&-\sum _{j<k}(\alpha_{i_{j}},\alpha_{i_{k}})\log(w_{j}-w_{k})-\sum _{0<a< b}(\lambda_{a},\lambda_{b})\log(z_{a}-z_{b}))\\
&-\sum _{a}(\frac{c\rho}{\eta},\lambda_{a})\log(1-\eta z_{a}) ].
\end{split}
\end{equation}

The critical point equation of ${\bm W}_{c}({\bm w},{\bm z},{\bm \lambda},{\bm l})$ is
\begin{equation}\label{CE2}
\sum^{m}_{a=1}\frac{(\alpha_{i_{j}},\lambda_{a})}{w_{j}-z_{a}}=\sum_{s\neq j}\frac{(\alpha_{i_{j}},\alpha_{i_{s}})}{w_{j}-w_{s}}+c(\rho,\alpha_{i_{j}}), j=1,...,l.
\end{equation}

\begin{lemma}\label{split}
If $w_{j}$ is the coordinate of the critical point of ${\bm W}_{c}({\bm w},{\bm z},{\bm \lambda},{\bm l})$, then $$\lim_{c\rightarrow+\infty} w_{j}\in \{z_{a}\}_{a=1,...,m}.$$
\end{lemma}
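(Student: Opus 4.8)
The plan is to analyze the critical point equation \eqref{CE2} in the regime $c\to+\infty$ and show that each $w_j$ is forced, in the limit, to collide with one of the singular points $z_a$. The starting observation is that the left side of \eqref{CE2} has a simple pole at each $z_a$ (with residue $(\alpha_{i_j},\lambda_a)$), while the terms on the right side involving the other $w_s$ stay bounded away from the $z_a$ unless those $w_s$ themselves approach a $z_a$ or each other. Since the only unbounded contribution available to match $c(\rho,\alpha_{i_j})\to\infty$ on a compact set of the $w$-configuration space away from the walls would have to come from some denominator $w_j-z_a$ or $w_j-w_s$ tending to $0$, I would argue that the critical configuration cannot remain in a fixed compact region of the complement of all diagonals; some collision must occur. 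First I would rule out (or rather, handle separately) the possibility that $w_j\to w_s$ for $i_j,i_s$ with $(\alpha_{i_j},\alpha_{i_s})\neq 0$: by the remark after \eqref{CE2} (and after equation (3)) such coordinates are automatically distinct at any critical point, and more to the point I would show any such cluster of colliding $w$'s must, as a group, also be attracted to some $z_a$, so the conclusion is unchanged.

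The cleanest way I would organize this is by rescaling. Fix $c$ large and look at a critical point ${\bm w}={\bm w}(c)$. Suppose for contradiction there is a sequence $c\to+\infty$ along which some $w_j(c)$ stays in a compact subset $K$ of $\mathbb{C}$ disjoint from $\{z_a\}$ and from the other limiting clusters. Sum \eqref{CE2} over the indices $j$ in a maximal such "bounded, isolated" cluster $J$: the pairwise terms $\sum_{j,s\in J}(\alpha_{i_j},\alpha_{i_s})/(w_j-w_s)$ within the cluster are antisymmetrized and their symmetric part is controlled, the cross terms with $z_a$ and with far-away $w_s$ remain bounded on $K$, yet the right side contains $c\sum_{j\in J}(\rho,\alpha_{i_j})$, which is bounded below by a positive multiple of $c$ as long as $J$ is nonempty (each $(\rho,\alpha_{i_j})>0$ since $\rho$ lies in the interior of $P^+$). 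This gives $O(1)=O(c)$, a contradiction. Hence every $w_j(c)$ either tends to some $z_a$ or escapes to $\infty$; I would exclude the escape-to-infinity alternative by the same summation trick applied after the substitution $w_j=v_j/c$ (or by a direct degree count: as $w_j\to\infty$ the left side of \eqref{CE2} and the $w$-$w$ terms all decay like $1/w_j$, so they cannot balance the constant $c(\rho,\alpha_{i_j})$ unless $w_j$ is itself of order $1/c$, i.e. bounded, contradicting $w_j\to\infty$). Therefore $\lim_{c\to+\infty} w_j\in\{z_a\}_{a=1,\dots,m}$.

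The main obstacle I anticipate is the bookkeeping around clusters of several $w$'s colliding simultaneously — both clusters of $w$'s colliding with each other but staying bounded away from every $z_a$, and clusters drifting toward a common $z_a$ at possibly different rates. The summation-over-a-cluster argument is designed precisely to neutralize the internal pairwise poles (they cancel in the symmetrized sum, using $(\alpha_{i_j},\alpha_{i_s})=(\alpha_{i_s},\alpha_{i_j})$), so that only the net "charge" $\sum_{j\in J}(\rho,\alpha_{i_j})$ of the cluster and the genuinely external interactions survive; the positivity $(\rho,\alpha_{i})>0$ then does the rest. One should also note that Lemma \ref{split} as stated only asserts the limit lands in $\{z_a\}$, not which $z_a$ nor the rate, so I do not need the finer asymptotics here — those presumably belong to the explicit solutions invoked in Section \ref{subsct:monodromy}. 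A secondary technical point is ensuring the critical points persist (do not run off to the boundary of the relevant configuration space prematurely) as $c$ varies continuously; this can be taken for granted given the explicit families constructed later, or argued by a properness/compactness statement for the critical locus of ${\bm W}_c$.
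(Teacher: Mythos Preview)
Your proposal is correct and follows essentially the same route as the paper: partition the coordinates into those tending to some $z_a$ and a residual set $M$, sum the critical-point equations \eqref{CE2} over $j\in M$, use the antisymmetry $(\alpha_{i_j},\alpha_{i_s})/(w_j-w_s)+(\alpha_{i_s},\alpha_{i_j})/(w_s-w_j)=0$ to kill the internal $M$--$M$ interactions, observe that the remaining cross terms and the $z_a$ terms are bounded, and derive a contradiction from $c\sum_{j\in M}(\rho,\alpha_{i_j})\to+\infty$. Your treatment is in fact slightly more explicit than the paper's on the escape-to-infinity possibility (the paper absorbs this into the set $M$ without comment), but the argument is the same.
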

\begin{proof}
Because $(\rho,\alpha_{i_{j}})>0$, it is clear for any $j$, $$\lim_{c\rightarrow +\infty} w_{j}\in \{z_{a}\}_{a=1,...,m}\cup \{w_{k}\}_{k\neq j}.$$
Divide the set $\{w_{1},...,w_{l}\}$ into a disjoint union of $Z_{1},...,Z_{m}$ and $M$, where $Z_{a}, a=1,...,m$ contains the coordinates $w_{j}$ which tend to $z_{a}$ and $M$ the rest of them. Assume $M=\{w_{m_{1}},...,w_{m_{p}}\}\neq \emptyset$, then $p>1$. Sum the equations of $j\in M$
\begin{equation}\label{sum}
\sum_{j=m_{1}}^{m_{p}} \sum _{a}\dfrac {(\alpha_{i_{j}},\lambda_{a})} {w_{j}-z_{a}}=\sum_{j=m_{1}}^{m_{p}} \sum _{s\neq j}\dfrac {(\alpha_{i_{j}},\alpha_{i_{s}})} {w_{j}-w_{s}}+c\sum_{j=m_{1}}^{m_{p}}(\rho,\alpha_{i_{j}}).
\end{equation}
When $c\rightarrow +\infty$, the left hand side of \eqref{sum} is bounded.
$$\sum_{j=m_{1}}^{m_{p}} \sum _{s\neq j}\dfrac {(\alpha_{i_{j}},\alpha_{i_{s}})} {w_{j}-w_{s}}=\sum_{w_{j}\in M} \sum _{w_{s}\in Z_{1}\cup...\cup Z_{m}}\dfrac {(\alpha_{i_{j}},\alpha_{i_{s}})} {w_{j}-w_{s}},$$ where the summation in $M$ is canceled, thus also bounded.
\eqref{sum} leads to contradiction. Therefore, $M=\emptyset$. The lemma follows.
\end{proof}

Denote also by $w_{0}$ the critical point of the constant function${\bm W}_{c}({\bm w},{\bm z},{\bm \lambda},{\bm 0})$. For ${\bm z}=z_{1}$ and ${\bm \lambda}=\omega_{1}$, we have the following theorem

\begin{theorem}\label{1prm} Let $g$ be a classical complex simple Lie algebra. If $c\in \mathbb{Z}_{\geq0}$ and $c\geq2$, the nondegenerate critical points of the family of Yang-Yang functions $$\{{\bm W}_{c}({\bm w},z_{1},\omega_{1},{\bm l})\}_{ \omega_{1}-\alpha({\bm l})\in \Omega_{{\omega_{1}}}}$$ together with  the formal degenerate critical point $w_{0}$ one to one correspond to the weight vectors in $V_{\omega_{1}}.$
\end{theorem}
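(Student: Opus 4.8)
The plan is to run the argument of Theorem \ref{1sing} again, with the deformation parameter $c$ now playing the role of a second marked point. By \eqref{lpt}, for $c\in\mathbb{Z}_{\geq 0}$ the function $\bm W_{c}(\bm w,z_{1},\omega_{1},\bm l)$ is the $\eta\to 0$ limit of the genuine Yang-Yang function attached to the weight sequence $\bm\lambda'=(\omega_{1},c\rho)$ and parameters $\bm z'=(z_{1},1)$, so the representation that controls the count is $V_{\omega_{1}}\otimes V_{c\rho}$, and a critical point of $\bm W_{c}(\bm w,z_{1},\omega_{1},\bm l)$ should correspond to a singular vector of $V_{\omega_{1}}\otimes V_{c\rho}$ of weight $\omega_{1}+c\rho-\alpha(\bm l)$. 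First I would record the tensor decomposition: for every weight $\nu\in\Omega_{\omega_{1}}$ of the vector representation of a classical $g$ one has $(\nu,\alpha_{i}^{\vee})\geq -2$, the value $-2$ occurring only for the short simple root $\alpha_{n}$ of $B_{n}$ (on the weight $-e_{n}$); since $(c\rho,\alpha_{i}^{\vee})=c$, the hypothesis $c\geq 2$ forces $c\rho+\nu=c\rho+\omega_{1}-\alpha(\bm l)\in P^{+}$ for every $\bm l$ with $\omega_{1}-\alpha(\bm l)\in\Omega_{\omega_{1}}$, and then the Littelmann rule \cite{littelmann97} (equivalently the small-representation tensor formula) gives $V_{\omega_{1}}\otimes V_{c\rho}=\bigoplus_{\nu\in\Omega_{\omega_{1}}}V_{c\rho+\nu}$, each summand with multiplicity equal to the multiplicity of $\nu$ in $V_{\omega_{1}}$, which is $1$ in all four classical types. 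Hence $Sing(V_{\omega_{1}}\otimes V_{c\rho})$ has one-dimensional weight spaces, one for each weight vector of $V_{\omega_{1}}$, and the formal partition $\bm l=\bm 0$ corresponds to the highest weight $\omega_{1}$.

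Next I would apply the injectivity in Theorem \ref{LMV} to $\bm\lambda'=(\omega_{1},c\rho)$: the condition $\omega_{1}+c\rho-\alpha(\bm l)\in P^{+}$ holds for every member of the family by the previous paragraph, so each nondegenerate critical point of $\bm W_{c}(\bm w,z_{1},\omega_{1},\bm l)$ produces a singular vector in the weight-$(c\rho+\nu)$ subspace of $V_{\omega_{1}}\otimes V_{c\rho}$, and the map is injective; since that weight subspace is at most one-dimensional, each admissible $\bm l$ carries at most one nondegenerate critical point. To get exactly one, I would use the lemma of Section \ref{subsct:monodromy} to exhibit, for each $\bm l$ with $\omega_{1}-\alpha(\bm l)\in\Omega_{\omega_{1}}$, an explicit solution of the critical point equation \eqref{CE2} with $\bm z=z_{1}$, $\bm\lambda=\omega_{1}$ — a configuration of the $w_{j}$ determined by $z_{1}$ and $c$ — and check its Hessian is nonzero, so that it is nondegenerate precisely when $\bm l\neq\bm 0$. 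Adjoining the formal degenerate point $w_{0}$ for $\bm l=\bm 0$, which matches the highest-weight vector $v_{\omega_{1}}$, yields a bijection between the nondegenerate critical points of the family together with $w_{0}$ and the weight vectors of $V_{\omega_{1}}$; as in Theorem \ref{1sing} the types $A_{n},B_{n},C_{n},D_{n}$ are handled in the same way, $B_{n}$ being the one that makes $c\geq 2$ necessary.

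The main obstacle is the explicit resolution of \eqref{CE2} together with the nonvanishing-Hessian check for every admissible $\bm l$: for the larger partitions (for instance $\bm l=(1,\dots,1),(1,\dots,1,2),\dots,(2,\dots,2)$ in type $B_{n}$) the critical configuration involves several coordinates $w_{j}$ and the Hessian is an $l\times l$ determinant whose nonvanishing must be argued type by type. Lemma \ref{split} — which pushes the $w_{j}$ onto $\{z_{a}\}$ as $c\to+\infty$ — is the qualitative guide for where the roots sit, but the honest statement still needs the explicit roots supplied by the lemma of Section \ref{subsct:monodromy}. A secondary point to verify carefully is that the $\eta\to 0$ limit of \eqref{lpt} really transports Theorem \ref{LMV} and the decomposition above to $\bm W_{c}$, i.e. that no nondegenerate critical point is created or destroyed in the limit and that the singular vectors of $V_{\omega_{1}}\otimes V_{c\rho}$ remain the correct bookkeeping objects; this is precisely where integrality $c\in\mathbb{Z}_{\geq 0}$ (so that $c\rho\in P^{+}$) is used, in addition to $c\geq 2$ for dominance of $c\rho+\nu$.
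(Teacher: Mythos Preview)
Your proposal is correct and follows essentially the same route as the paper: interpret $\bm W_{c}$ via \eqref{lpt} as governed by $V_{\omega_{1}}\otimes V_{c\rho}$, decompose that tensor product by the Littelmann rule, apply Theorem~\ref{LMV} for injectivity, and then exhibit the explicit (nondegenerate) solutions of \eqref{CE2} supplied by Lemma~\ref{lem:exsol} in Section~\ref{subsubsct:B} (obtained in the paper as limits of the solutions in \cite{LMV16}), adjoining $w_{0}$ for $\bm l=\bm 0$. Your uniform dominance argument $(\nu,\alpha_{i}^{\vee})\geq -2$ with equality only at the short root of $B_{n}$ is a cleaner way to see both the decomposition and the role of $c\geq 2$ than the paper's type-by-type listing, but the logic is otherwise identical.
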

\begin{proof}

By Littelmann-Littlewood-Richardson rule\cite{littelmann97}, if $c\in \mathbb{Z}_{\geq0}$ and $c\geq2$, the following direct sum decompositions can be derived.
\begin{equation}
\begin{split}
A_{n}: V_{\omega_{1}}\otimes V_{c\rho}=&V_{\omega_{1}+c\rho}\oplus V_{\omega_{1}+c\rho-\alpha_{1}}\oplus ...\oplus  V_{\omega_{1}+c\rho-\alpha_{1}-...-\alpha_{n}}\\
B_{n}: V_{\omega_{1}}\otimes V_{c\rho}=&V_{\omega_{1}+c\rho}\oplus V_{\omega_{1}+c\rho-\alpha_{1}}\oplus ...\oplus  V_{\omega_{1}+c\rho-\alpha_{1}-...-\alpha_{n}}\oplus \\
&V_{\omega_{1}+c\rho-\alpha_{1}-...-2\alpha_{n}}\oplus V_{\omega_{1}+c\rho-\alpha_{1}-...-2\alpha_{n-1}-2\alpha_{n}}
 \oplus \\&...\oplus  V_{\omega_{1}+c\rho-2\alpha_{1}-...-2\alpha_{n-1}-2\alpha_{n}}\\
C_{n}: V_{\omega_{1}}\otimes V_{c\rho}=&V_{\omega_{1}+c\rho}\oplus V_{\omega_{1}+c\rho-\alpha_{1}}\oplus ...\oplus  V_{\omega_{1}+c\rho-\alpha_{1}-...-\alpha_{n}}\oplus  \\ &V_{\omega_{1}+c\rho-\alpha_{1}-...-2\alpha_{n-1}-\alpha_{n}}
\oplus ...\oplus  V_{\omega_{1}+c\rho-2\alpha_{1}-...-2\alpha_{n-1}-\alpha_{n}}\\
D_{n}: V_{\omega_{1}}\otimes V_{c\rho}=&V_{\omega_{1}+c\rho}\oplus V_{\omega_{1}+c\rho-\alpha_{1}}\oplus V_{\omega_{1}+c\rho-\alpha_{1}-\alpha_{2}}\oplus ...\oplus \\ &V_{\omega_{1}+c\rho-\alpha_{1}-...-\alpha_{n}} \oplus V_{\omega_{1}+c\rho-\alpha_{1}-...-2\alpha_{n-2}-\alpha_{n-1}-\alpha_{n}}
\oplus\\& ...\oplus  V_{\omega_{1}+c\rho-2\alpha_{1}-...-2\alpha_{n-2}-\alpha_{n-1}-\alpha_{n}}\\
\end{split}
\end{equation}

For $A_{n}$, it is clear that the singular vectors of $SingV_{\omega_{1}}\otimes V_{c\rho}$ $$v_{\omega_{1}+c\rho},v_{\omega_{1}+c\rho-\alpha_{1}},...,v_{\omega_{1}+c\rho-\alpha_{1}-...-\alpha_{n}},$$
one to one correspond to the weight vectors $$v_{\omega_{1}},v_{\omega_{1}-\alpha_{1}},...,v_{\omega_{1}-\alpha_{1}-...-\alpha_{n}},$$  in $V_{\omega_{1}}$.
By theorem \ref{LMV}, there is an injective map between the set of the nondegenerate critical points and $SingV_{\omega_{1}}\otimes V_{c\rho}$.
Corresponding to each singular vectors, ${\bm l}$ satisfies the admissible condition and the explicit solutions of critical points with two parameters $z_{1}=0, z_{2}=1$ are given in \cite{LMV16}. They are nondegenerate except when $l=0$. Thus the map is bijective on $SingV_{\omega_{1}}\otimes V_{c\rho}\backslash v_{\omega_{1}+c\rho}$. When $l=0$, the formal degenerate critical point $w_{0}$ corresponds to the singular vector $v_{\omega_{1}+c\rho}$ in $SingV_{\omega_{1}}\otimes V_{c\rho}$. By \eqref{lpt}, the critical point of ${\bm W}_{c}({\bm w},z_{1},\omega_{1},{\bm l})$ are the limitation of the parameter deformation of those solutions. We give the explicit expressions of them in the lemma \ref{lem:exsol} in section \ref{subsubsct:B}. Therefore, one to one correspondence is proved. For $B_{n}$, $C_{n}$, $D_{n}$, the proofs are similar.
\end{proof}

\begin{remark} Except for $B_{n}$, the direct sum decompositions above are also correct if $c=1$ .\end{remark}

\subsection{$V_{\lambda}\otimes V_{\lambda}$ realized as the fiber of $\mathbb{Z}[t,t^{-1}]$-module bundle}

For any dominant integral weight $\lambda \in P^{+}$, there exists a linear automorphism of $V_{\lambda}\otimes V_{\lambda}$ called R-matrix:  $$R:V_{\lambda}\otimes V_{\lambda}\rightarrow V_{\lambda}\otimes V_{\lambda}$$ satisfying Yang-Baxter equation \cite{Kassel}
$$(R\otimes id_{V_{\lambda}})(id_{V_{\lambda}}\otimes R)(R\otimes id_{V_{\lambda}})=(id_{V_{\lambda}}\otimes R)(R\otimes id_{V_{\lambda}})(id_{V_{\lambda}}\otimes R).$$

To study R-matrix, $m=2$ is sufficient. In the following sections, we consider the fundamental representation $\lambda=\omega_{1}$ of complex simple Lie algebra of classical types. Assume that $z_{1}$ and $z_{2}$ have the same real part and $Im z_{1}> Im z_{2}$.

\subsubsection{$V_{\lambda}\otimes V_{\lambda}$ and the thimble space of Yang-Yang functions}

By lemma \ref{split}, when $c\rightarrow +\infty$, equations of \eqref{CE2} splits into two separated sets involving only $z_{1}$ or $z_{2}$ and its nondegenerate solutions space is just the tensor product of two nodegenerate solutions spaces with only one parameter $z_{1}$ or $z_{2}$. Therefore, by theorem\ref{1prm}, when $c\rightarrow +\infty$, the set of nondegenerate critical points of \eqref{CE2} together with formal critical point $w_{0}$ one to one correspond to the weight vectors in $V_{\omega_{1}}\otimes V_{\omega_{1}}$.

The thimble of ${\bm W}_{c}({\bm w},{\bm z},{\bm \lambda},{\bm l})$, defined \cite{HL1} as the cycle formulated by the gradient flow of the real part of Yang-Yang function starting from the critical point is a $l$ dimensional manifold. Therefore, for all vectors $v_{\omega_{1}-\alpha({\bm k})}\otimes v_{\omega_{1}-\alpha({\bm l})}\in V_{\omega_{1}}\otimes V_{\omega_{1}}$, there exists a unique thimble of ${\bm W}_{c}({\bm w},{\bm z},{\bm \lambda},{\bm k+\bm l})$ denoted by $J_{{\bm k},{\bm l}}$ and $J_{{\bm k},{\bm l}}=J_{{\bm k}}\times J_{{\bm l}}$, where $J_{{\bm k}}$ and $J_{{\bm l}}$ are thimbles of ${\bm W}_{c}({\bm w},z_{1},\omega_{1},{\bm k})$ and ${\bm W}_{c}({\bm w},z_{2},\omega_{1},{\bm l})$ respectively. Therefore, we have

\begin{theorem}\label{thmsb2c}When $c\rightarrow +\infty$, the basis of $V_{\omega_{1}}\otimes V_{\omega_{1}}$ one to one correspond to thimbles generated from a family of Yang-Yang functions $$\{{\bm W}_{c}({\bm w},{\bm z},{\bm \lambda},{\bm k+\bm l})\}_{ \omega_{1}-\alpha({\bm k}),\omega_{1}-\alpha({\bm l})\in \Omega_{{\omega_{1}}}}.$$
\end{theorem}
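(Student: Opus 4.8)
The plan is to assemble Theorem~\ref{thmsb2c} from the pieces already laid out in the excerpt, so that the proof is essentially a bookkeeping argument. First I would invoke Lemma~\ref{split}: since $(\rho,\alpha_{i_j})>0$ for every $j$, as $c\to+\infty$ every coordinate $w_j$ of a critical point of the two-parameter equation~\eqref{CE2} is forced to one of $z_1,z_2$, and the set $M$ of ``floating'' coordinates must be empty. Hence each critical-point configuration partitions the $l$ coordinates $\{w_1,\dots,w_l\}$ into a block $Z_1$ clustering at $z_1$ and a block $Z_2$ clustering at $z_2$. On each block the cross terms $(w_j-z_{a})^{-1}$ with $a$ the ``wrong'' parameter and the cross terms $(w_j-w_s)^{-1}$ linking the two blocks all remain bounded, so in the limit the rescaled equations decouple: the $Z_1$-block solves the single-parameter critical equation for ${\bm W}_c({\bm w},z_1,\omega_1,{\bm k})$ and the $Z_2$-block solves it for ${\bm W}_c({\bm w},z_2,\omega_1,{\bm l})$, where ${\bm k}+{\bm l}$ records the original partition ${\bm l}$ of the combined system.

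Next I would apply Theorem~\ref{1prm} to each factor: for $c\in\mathbb{Z}_{\geq0}$, $c\geq2$, the nondegenerate critical points of $\{{\bm W}_c({\bm w},z_1,\omega_1,{\bm k})\}$ together with the formal point $w_0$ are in bijection with the weight vectors of $V_{\omega_1}$, and likewise for $z_2$. Taking products, the nondegenerate critical points of~\eqref{CE2} (with $w_0$ adjoined in each factor) are in bijection with pairs $(v_{\omega_1-\alpha({\bm k})},v_{\omega_1-\alpha({\bm l})})$, i.e.\ with a basis of $V_{\omega_1}\otimes V_{\omega_1}$ consisting of tensor products of weight vectors. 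This is exactly the statement preceding the theorem in the excerpt. It remains to upgrade ``critical points'' to ``thimbles.'' Here I would use the definition of the thimble of ${\bm W}_c$ from \cite{HL1} as the stable manifold of the gradient flow of $\mathrm{Re}\,{\bm W}_c$ emanating from a nondegenerate critical point; nondegeneracy guarantees this is a smooth $l$-dimensional cell, and it is uniquely determined by the critical point. Since in the $c\to+\infty$ regime the function ${\bm W}_c({\bm w},{\bm z},{\bm\lambda},{\bm k}+{\bm l})$ itself splits as a sum ${\bm W}_c({\bm w},z_1,\omega_1,{\bm k})+{\bm W}_c({\bm w},z_2,\omega_1,{\bm l})$ of functions in disjoint variable groups, its gradient flow is the product flow, and the thimble factors as $J_{{\bm k},{\bm l}}=J_{{\bm k}}\times J_{{\bm l}}$. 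Composing the bijection on critical points with the critical-point-to-thimble correspondence gives the one-to-one correspondence between the basis of $V_{\omega_1}\otimes V_{\omega_1}$ and the family of thimbles $\{J_{{\bm k},{\bm l}}\}$.

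The main obstacle, and the one point I would treat with care rather than wave through, is the rigorous decoupling of the flow in the limit: a priori the thimbles $J_{{\bm k},{\bm l}}$ for \emph{finite large} $c$ are not literally products, because the cross terms, though bounded, are nonzero, so one must argue that the splitting holds either exactly in the limit (working with the limiting split function, which is what the statement ``when $c\to+\infty$'' seems to intend) or up to a controlled perturbation that does not change the combinatorial/topological type of the thimble. In the limiting picture this is immediate from the additive splitting of ${\bm W}_c$; the only thing to check is that the limiting critical points are still nondegenerate away from $l=0$ (which follows from the nondegeneracy statements in Theorem~\ref{1prm}, whose explicit solutions are recorded in Lemma~\ref{lem:exsol}) so that each limiting thimble is a genuine $l$-cell. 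A secondary point is that the cases $B_n,C_n,D_n$ must be checked to obey the same bijection as $A_n$; as in the proof of Theorem~\ref{1prm}, this is routine once the corresponding decomposition of $V_{\omega_1}\otimes V_{c\rho}$ is used, and I would simply say ``for $B_n,C_n,D_n$ the argument is identical.''
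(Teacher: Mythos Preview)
Your proposal is correct and follows essentially the same route as the paper: invoke Lemma~\ref{split} to force the coordinates to cluster at $z_1$ or $z_2$ so that the critical-point equations decouple, apply Theorem~\ref{1prm} to each one-parameter factor to get the bijection with weight vectors of $V_{\omega_1}$, and then pass to thimbles via the product factorization $J_{{\bm k},{\bm l}}=J_{{\bm k}}\times J_{{\bm l}}$. If anything, you are more careful than the paper about the finite-$c$ versus limiting-$c$ distinction; the paper simply asserts the splitting in the $c\to+\infty$ regime without further comment.
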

Note that the solution of critical point equation \eqref{CE2} is not unique. It means that there exist different thimbles corresponding to different critical solutions of the same Yang-Yang function. We will see it in the example of section \ref{subsec:ex}.

\subsubsection{The thimble space as a fiber bundle of $\mathbb{Z}[t,t^{-1}]$-module}

As in \cite{ATY}, \cite{FFR} and \cite{Frenkel95}, Yang-Yang function appears naturally as an exponent in the correlation function of Wakimoto realization of Kac-Moody algebra at arbitrary level $\kappa$ :
\begin{equation}
\begin{split}
&\int _{\Gamma }\prod_{j,a}(w_{j}-z_{a})^{-\frac{(\alpha_{i_{j}},\lambda_{a})}{\kappa+h^{\vee}}}\prod_{j<s}(w_{j}-w_{s})^{\frac{(\alpha_{i_{j}},\alpha_{i_{s}})}{\kappa+h^{\vee}}}(z_{1}-z_{2})^{\frac{(\lambda,\lambda)}{\kappa+h^{\vee}}}\prod _{j}dw_{j}\\
=&\int _{\Gamma }e^{-\frac{{\bm W}({\bm w},{\bm z},{\bm \lambda},{\bm l})}{\kappa+h^{\vee}}}\prod _{j}dw_{j},
\end{split}
\end{equation}
where $h^{\vee}$ is the dual Coxeter number. The crucial problem is to figure out the transformation induced on the thimble space by parameter deformation of $e^{-\frac{{\bm W}_{c}({\bm w},{\bm z},{\bm \lambda},{\bm l})}{\kappa+h^{\vee}}}$, which leads us to study the extra structure of the thimble space.

Thimbles of the real part of Yang-Yang function ${\bm W}_{c}({\bm w},{\bm z},{\bm \lambda},{\bm l})$ are the same as thimbles of $\mid e^{-\frac{{\bm W}_{c}({\bm w},{\bm z},{\bm \lambda},{\bm l})}{\kappa+h^{\vee}}}\mid$, except that there are infinite number of pre-images with difference $2\pi \mathbbm{i}$ in their imaginary parts. Along the thimble of its real part, conservation law \cite{HL1}
of the imaginary part of Yang-Yang function implies global invariance of the phase factor $e^{- \frac{Im {\bm W}_{c}({\bm w},{\bm z},{\bm \lambda},{\bm l})}{\kappa+h^{\vee}}\mathbbm{i}}$. By theorem \ref{thmsb2c}, when $c\rightarrow +\infty$, the basis of $V_{\omega_{1}}\otimes V_{\omega_{1}}$  one to one correspond to the family of thimbles $\{J_{{\bm k},{\bm l}}\}_{ \omega_{1}-\alpha({\bm k}),\omega_{1}-\alpha({\bm l})\in \Omega_{{\omega_{1}}}}$ of $\{\mid e^{-\frac{{\bm W}_{c}({\bm w},{\bm z},{\bm \lambda},{\bm k+\bm l})}{\kappa+h^{\vee}}}\mid\}_{ \omega_{1}-\alpha({\bm k}),\omega_{1}-\alpha({\bm l})\in \Omega_{{\omega_{1}}}}$. Each of them has infinite pre-images with different global phases.
Choose a branch from infinite pre-images for each thimble. Define $q=e^{\frac{2\pi \mathbbm{i}}{\kappa+h^{\vee}}}$. Continuous deformation of the parameters in $e^{-\frac{{\bm W}_{c}({\bm w},{\bm z},{\bm \lambda},{\bm l})}{\kappa+h^{\vee}}}$ will induce a global phase factor variation along the thimble of $\mid e^{-\frac{{\bm W}_{c}({\bm w},{\bm z},{\bm \lambda},{\bm l})}{\kappa+h^{\vee}}}\mid$.
$\{J_{{\bm k},{\bm l}}\}_{ \omega_{1}-\alpha({\bm k}),\omega_{1}-\alpha({\bm l})\in \Omega_{{\omega_{1}}}}$ are elements of a certain relative homology group and have a natural integral structure \cite{GW}. Let $ \mathbb{Z}[t,t^{-1}]$ be the ring of Laurrent polynomials of $t$ over $\mathbb{Z}$, then these variations naturally make thimble space a module over ring $ \mathbb{Z}[t,t^{-1}]$, where $t$ is some minimal power of $q$ during the variation. From now on, we consider $V_{\omega_{1}}\otimes V_{\omega_{1}}$ as a $\mathbb{Z}[t,t^{-1}]$-module generated by the basis $\{v_{\omega_{1}-\alpha({\bm k})}\otimes v_{\omega_{1}-\alpha({\bm l})}\}_{ \omega_{1}-\alpha({\bm k}),\omega_{1}-\alpha({\bm l})\in \Omega_{{\omega_{1}}}}$. Let $\mathfrak{J}=\mathbb{Z}[t,t^{-1}]\{J_{{\bm k},{\bm l}}\}_{ \omega_{1}-\alpha({\bm k}),\omega_{1}-\alpha({\bm l})\in \Omega_{{\omega_{1}}}}$ be the module over $\mathbb{Z}[t,t^{-1}]$ generated by the corresponding thimbles, then we have a natural decomposition of $V_{\omega_{1}}\otimes V_{\omega_{1}}$ as a direct sum of  $\mathbb{Z}[t,t^{-1}]$-submodules.

 \begin{lemma}
\begin{equation}\label{11crspd}
  V_{\omega_{1}}\otimes V_{\omega_{1}}\cong \mathfrak{J}=\underset{\mathbbm{q}}{\oplus} E^{\mathbbm{q}},
\end{equation}
 where $E^{\mathbbm{q}}=\mathbb{Z}[t,t^{-1}]\{J_{{\bm k},{\bm l}}\}_{k+l=\mathbbm{q}}$ is a $\mathbb{Z}[t,t^{-1}]$-submodule generated by all the $\mathbbm{q}$ dimensional thimbles of $\mathfrak{J}$.
\end{lemma}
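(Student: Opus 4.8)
The plan is to deduce the lemma from Theorem~\ref{thmsb2c} by essentially formal arguments, the one substantive point being the $\mathbb{Z}[t,t^{-1}]$-freeness of $\mathfrak{J}$ on the thimbles $\{J_{{\bm k},{\bm l}}\}$ (discussed below). Granting that, I would first record the isomorphism $V_{\omega_{1}}\otimes V_{\omega_{1}}\cong\mathfrak{J}$: by Theorem~\ref{thmsb2c} the assignment $v_{\omega_{1}-\alpha({\bm k})}\otimes v_{\omega_{1}-\alpha({\bm l})}\mapsto J_{{\bm k},{\bm l}}$ is a bijection from the distinguished basis of $V_{\omega_{1}}\otimes V_{\omega_{1}}$ — which, by the convention fixed just before the lemma, is a free $\mathbb{Z}[t,t^{-1}]$-basis of $V_{\omega_{1}}\otimes V_{\omega_{1}}$ — onto the generating family of $\mathfrak{J}$, and since that family is itself a free $\mathbb{Z}[t,t^{-1}]$-basis, the bijection extends uniquely to a $\mathbb{Z}[t,t^{-1}]$-module isomorphism.

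Next I would grade the generators by dimension and split off the summands. By the definition of thimble taken from \cite{HL1}, the thimble of ${\bm W}_{c}({\bm w},z_{1},\omega_{1},{\bm k})$ is a manifold whose dimension equals the number $k$ of its $w$-coordinates, and likewise that of ${\bm W}_{c}({\bm w},z_{2},\omega_{1},{\bm l})$ has dimension $l$, so $J_{{\bm k},{\bm l}}=J_{{\bm k}}\times J_{{\bm l}}$ has dimension $k+l$; moreover this integer is invariant under the parameter deformations that define the module structure, being the rank of the relative cycle in question. Hence the generating family of $\mathfrak{J}$ is the disjoint union, over $\mathbbm{q}\ge 0$, of the blocks $\{J_{{\bm k},{\bm l}}\mid k+l=\mathbbm{q}\}$, and this block is exactly the set of all $\mathbbm{q}$-dimensional thimbles among the generators. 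Since a free module over a commutative ring is the internal direct sum of the free submodules spanned by the blocks of any partition of a basis, $\mathfrak{J}=\bigoplus_{\mathbbm{q}}E^{\mathbbm{q}}$ with $E^{\mathbbm{q}}=\mathbb{Z}[t,t^{-1}]\{J_{{\bm k},{\bm l}}\}_{k+l=\mathbbm{q}}$; transporting through the isomorphism of the previous paragraph, $E^{\mathbbm{q}}$ corresponds to the $\mathbb{Z}[t,t^{-1}]$-span of the vectors $v_{\omega_{1}-\alpha({\bm k})}\otimes v_{\omega_{1}-\alpha({\bm l})}$ with $k+l=\mathbbm{q}$, that is, of the weight vectors of $V_{\omega_{1}}\otimes V_{\omega_{1}}$ of weight $2\omega_{1}-\alpha({\bm k}+{\bm l})$ with $\alpha({\bm k}+{\bm l})$ a sum of $\mathbbm{q}$ simple roots counted with multiplicity. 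This gives \eqref{11crspd}.

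The hard part is the freeness of $\mathfrak{J}$, i.e. that the thimbles $J_{{\bm k},{\bm l}}$ are not merely a generating set but a $\mathbb{Z}[t,t^{-1}]$-basis; the grading in the previous paragraph then does all the work. I would settle this by pulling back to a fixed generic configuration of the parameters, where the $J_{{\bm k},{\bm l}}$ are the Lefschetz thimbles attached to the distinct nondegenerate critical points of Theorem~\ref{thmsb2c} (together with the formal one) and hence form a $\mathbb{Z}$-basis of the relevant relative homology group — the natural integral structure of \cite{GW}; then, since $\mathfrak{J}$ is by definition their $\mathbb{Z}[t,t^{-1}]$-span and the branch choices contribute only unit powers of $t$ acting diagonally, no $\mathbb{Z}[t,t^{-1}]$-linear relation among distinct generators can occur and no $\mathbbm{q}$-dimensional thimble can be drawn into $E^{\mathbbm{q}'}$ for $\mathbbm{q}'\neq\mathbbm{q}$.
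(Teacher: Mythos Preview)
Your proposal is correct and, in fact, more detailed than what the paper offers: the paper states this lemma without proof, treating it as an immediate consequence of the preceding construction (Theorem~\ref{thmsb2c}, the definition of $\mathfrak{J}$ as the $\mathbb{Z}[t,t^{-1}]$-span of the thimbles, and the remark that the $J_{{\bm k},{\bm l}}$ carry a natural integral structure as relative homology classes). Your argument unpacks exactly these ingredients --- the bijection of bases from Theorem~\ref{thmsb2c}, the freeness coming from the Lefschetz-thimble basis of relative homology, and the grading by $k+l=\dim J_{{\bm k},{\bm l}}$ --- so you are on the same track as the paper, just making explicit what it leaves to the reader.
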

Because $\mathfrak{J}$ depends on the parameters $(z_{1},z_{2})$, there exists a bundle $\mathfrak{J}\overset{\pi}{\rightarrow} X_{2}$, where
\begin{equation}\label{Confi}
  X_{2}=\{(z_{1},z_{2})\in \mathbb{C}^{2}\mid z_{1}\neq z_{2}\}/(z_{1},z_{2})\sim(z_{2},z_{1})
\end{equation}
is the configuration space of the two complex parameters $z_{1}$ and $z_{2}$. The fiber $\pi^{-1}(z_{1},z_{2})=\mathfrak{J}({\bm z})$ is defined to be a $\mathbb{Z}[t,t^{-1}]$-module generated by all the thimbles of $\{\mid e^{-\frac{{\bm W}_{c}({\bm w},{\bm z},{\bm \lambda},{\bm k+\bm l})}{\kappa+h^{\vee}}}\mid\}_{\omega_{1}-\alpha({\bm k}),\omega_{1}-\alpha({\bm l})\in \Omega_{{\omega_{1}}}}$.
Let $E^{\mathbbm{q}}\overset{\pi^{\mathbbm{q}}}{\rightarrow} X_{2}$ be the $\mathbb{Z}[t,t^{-1}]$-module sub-bundle of $\mathfrak{J}\overset{\pi}{\rightarrow} X_{2}$. Because of \eqref{11crspd}, we have the following decomposition of the $\mathbb{Z}[t,t^{-1}]$-module bundle.

\begin{lemma} $$\pi=\underset{\mathbbm{q}}{\oplus}\pi^{\mathbbm{q}}.$$ \end{lemma}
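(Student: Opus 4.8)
The plan is to promote the fiberwise splitting \eqref{11crspd} to a splitting of $\mathbb{Z}[t,t^{-1}]$-module bundles over $X_{2}$, by checking that the data defining $\mathfrak{J}\overset{\pi}{\rightarrow}X_{2}$ respect the grading by thimble dimension $\mathbbm{q}$. First, for every $(z_{1},z_{2})\in X_{2}$ the preceding lemma applies verbatim to the fiber $\pi^{-1}(z_{1},z_{2})=\mathfrak{J}({\bm z})$ and yields $\mathfrak{J}({\bm z})=\bigoplus_{\mathbbm{q}}E^{\mathbbm{q}}({\bm z})$, where $E^{\mathbbm{q}}({\bm z})$ is the $\mathbb{Z}[t,t^{-1}]$-span of the thimbles $J_{{\bm k},{\bm l}}$ of $\mid e^{-{\bm W}_{c}({\bm w},{\bm z},{\bm\lambda},{\bm k}+{\bm l})/(\kappa+h^{\vee})}\mid$ with $k+l=\mathbbm{q}$. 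So the identity holds fiber by fiber, and the remaining task is to see that it is compatible with the bundle structure and with the $\mathbb{Z}[t,t^{-1}]$-action.

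The key observation is that the integer $\mathbbm{q}$ attached to $J_{{\bm k},{\bm l}}$ is precisely the number $\mid{\bm k}+{\bm l}\mid=\sum_{i}(k_{i}+l_{i})$ of integration variables $w_{j}$ in ${\bm W}_{c}({\bm w},{\bm z},{\bm\lambda},{\bm k}+{\bm l})$, hence the dimension of the thimble; this is a discrete invariant that does not change under the continuous deformation of $(z_{1},z_{2})$ defining the connection on $\mathfrak{J}$. Consequently such a deformation carries $\mathbbm{q}$-dimensional thimbles to $\mathbbm{q}$-dimensional thimbles and, by the conservation law of the imaginary part of the Yang-Yang function along the gradient flow of its real part \cite{HL1}, only multiplies them by global phase factors, i.e.\ by integer powers of $t$. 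Hence the local trivializations of $\mathfrak{J}\overset{\pi}{\rightarrow}X_{2}$ may be chosen to preserve each set $\{J_{{\bm k},{\bm l}}\}_{k+l=\mathbbm{q}}$ and its $\mathbb{Z}[t,t^{-1}]$-span, so each $E^{\mathbbm{q}}\overset{\pi^{\mathbbm{q}}}{\rightarrow}X_{2}$ is genuinely a $\mathbb{Z}[t,t^{-1}]$-module sub-bundle and the fiberwise direct sum \eqref{11crspd} globalizes to $\pi=\bigoplus_{\mathbbm{q}}\pi^{\mathbbm{q}}$.

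The only point that needs care — and the main, mild obstacle — is the interaction of the grading with the $\mathbb{Z}[t,t^{-1}]$-module structure at the level of the bundle: one must check that the choice of branch (global phase) used to define the $t$-action can be made coherently within each graded piece and transported consistently, so that $t$ and $t^{-1}$ never mix distinct values of $\mathbbm{q}$. This again follows from the conservation law of the imaginary part of ${\bm W}_{c}$ along the thimble, which shows the phase is locally constant on each thimble; thus the $t$-action is by scalars on each generator $J_{{\bm k},{\bm l}}$, and scalars preserve the grading, so every $E^{\mathbbm{q}}$ is stable under $\mathbb{Z}[t,t^{-1}]$ in each fiber and along each path in $X_{2}$. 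Granting this, the lemma follows at once from \eqref{11crspd}.
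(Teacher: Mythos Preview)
Your argument is correct and follows the same route as the paper, which simply records the lemma as an immediate consequence of \eqref{11crspd} (``Because of \eqref{11crspd}, we have the following decomposition\ldots'') without further proof. You have supplied the details the paper omits---that the thimble dimension $\mathbbm{q}$ is a discrete invariant under continuous deformation of $(z_{1},z_{2})$ and that the $\mathbb{Z}[t,t^{-1}]$-action is by scalars on each generator---so your write-up is, if anything, more careful than the original.
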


In sum, the thimble space of $\{e^{-\frac{{\bm W}_{c}({\bm w},{\bm z},{\bm \lambda},{\bm k+\bm l})}{\kappa+h^{\vee}}}\}_{\omega_{1}-\alpha({\bm k}),\omega_{1}-\alpha({\bm l})\in \Omega_{{\omega_{1}}}}$ as the fiber of $\mathbb{Z}[t,t^{-1}]$-module  bundle $\mathfrak{J}\overset{\pi}{\rightarrow} X_{2}$ gives $V_{\omega_{1}}\otimes V_{\omega_{1}}$ a geometric realization. We will use it to derive $R$ matrix.

\subsection{Monodromy representation of the $\mathbb{Z}[t,t^{-1}]$-module \\fiber bundle}\label{subsct:bundle}

Fixing a point $P=(z_{1},z_{2})\in X_{2}$, we consider a continuous parameter transformation $T(s): X_{2}\rightarrow X_{2}$ defined by \begin{equation}\label{equ:T}
  T(s)\left(
                                                         \begin{array}{c}
                                                           z_{1} \\
                                                           z_{2} \\
                                                         \end{array}
                                                       \right)=\left(
                                                                 \begin{array}{cc}
                                                                   \frac{1+e^{-\mathbbm{i}\pi s}}{2} & \frac{1-e^{-\mathbbm{i}\pi s}}{2} \\
                                                                   \frac{1-e^{-\mathbbm{i}\pi s}}{2} & \frac{1+e^{-\mathbbm{i}\pi s}}{2} \\
                                                                 \end{array}
                                                               \right)\left(
                                                                        \begin{array}{c}
                                                                          z_{1} \\
                                                                          z_{2} \\
                                                                        \end{array}
                                                                      \right),s\in [0,1].
\end{equation}

It is a clockwise rotation around the middle point of $z_{1}$ and $z_{2}$:
 $$T(1)P=T(1)\left(
                                                         \begin{array}{c}
                                                           z_{1} \\
                                                           z_{2} \\
                                                         \end{array}
                                                       \right)=\left(
                                                         \begin{array}{c}
                                                           z_{2} \\
                                                           z_{1} \\
                                                         \end{array}
                                                       \right)=P.$$
Thus, $T:S^{1}\rightarrow X_{2}$ generates a fundamental group $\pi_{1}(X_{2},P)$ of $X_{2}$ with base point $P$ and also induces a $\mathbb{Z}[t,t^{-1}]$-module transformation $$\sigma\in End(\mathfrak{J}(P),\mathbb{Z}[t,t^{-1}])=End(V_{\omega_{1}}\otimes V_{\omega_{1}},\mathbb{Z}[t,t^{-1}])$$ on the fiber $\mathfrak{J}(P)\cong V_{\omega_{1}}\otimes V_{\omega_{1}}$. $\sigma$ is called monodromy of the bundle and it generates a monodromy group.  Its representation on the fiber space is called monodromy representation. Let $\boldsymbol{B}:V_{\omega_{1}}\otimes V_{\omega_{1}}\rightarrow V_{\omega_{1}}\otimes V_{\omega_{1}}$ be the monodromy representation of the bundle $\mathfrak{J}\overset{\pi}{\rightarrow} X_{2}$ induced by $T:S^{1}\rightarrow X_{2}$ and $\boldsymbol{B_{U_{h}(g)}}$ the braid group representation induced by the universal R-matrices of the quantum enveloping algebra $U_{h}(g)$. Our first main theorem is as following:

\begin{theorem}For the fundamental representation $V_{\omega_{1}}$ of classical complex simple Lie algebra $g$, the monodromy representation $\boldsymbol{B_{YY}}$ of the $\mathbb{Z}[t,t^{-1}]$-module fiber bundle $\mathfrak{J}\overset{\pi}{\rightarrow} X_{2}$ generated by the family of functions $$\{e^{-\frac{{\bm W}_{c}({\bm w},{\bm z},{\bm \lambda},{\bm k+\bm l})}{\kappa+h^{\vee}}}\}_{\omega_{1}-\alpha({\bm k}),\omega_{1}-\alpha({\bm l})\in \Omega_{{\omega_{1}}}}$$ is equivalent to $\boldsymbol{B_{U_{h}(g)}}$ by a diagonal transformation $Q$, i.e. $$\boldsymbol{B_{YY}}=Q  \boldsymbol{B_{U_{h}(g)}}Q^{-1}, Q\in End(V_{\omega_{1}}\otimes V_{\omega_{1}},\mathbb{Z}[t,t^{-1}]).$$
\end{theorem}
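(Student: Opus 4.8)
The plan is to compute the monodromy explicitly for each classical type and match it with the known $R$-matrix braiding, exploiting the $\mathbb{Z}[t,t^{-1}]$-module decomposition $V_{\omega_1}\otimes V_{\omega_1}\cong\bigoplus_{\mathbbm q}E^{\mathbbm q}$ from Lemma \eqref{11crspd}. First I would use the fact that the rotation $T(s)$ preserves the direct sum decomposition: since the weight $\alpha(\bm k)+\alpha(\bm l)$ (equivalently the total degree $\mathbbm q$) is a homological/topological invariant that cannot change under continuous deformation of $(z_1,z_2)$, the monodromy $\sigma$ restricts to each sub-bundle $E^{\mathbbm q}\to X_2$, so it suffices to describe $\boldsymbol{B_{YY}}$ on each $E^{\mathbbm q}$ separately. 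The submodules $E^{\mathbbm q}$ with $\mathbbm q$ corresponding to a single irreducible summand $V_{2\omega_1-\alpha(\bm l)}$ in the decompositions of Theorem \ref{1sing} are rank-one free $\mathbb{Z}[t,t^{-1}]$-modules on which the $R$-matrix acts by a scalar $\pm t^{N}$; the only genuinely two-dimensional block is the one spanned by the thimbles $J_{\bm 0,\bm l}$ and $J_{\bm l,\bm 0}$ that swap under $T(1)$, i.e. the ``off-diagonal'' piece attached to the weight vectors $v_{\omega_1-\alpha(\bm l)}\otimes v_{\omega_1}$ and $v_{\omega_1}\otimes v_{\omega_1-\alpha(\bm l)}$.

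Next I would carry out the monodromy computation on that two-dimensional block. Using the explicit solutions of the critical-point equation \eqref{CE2} (the lemma in Section \ref{subsct:monodromy} / Lemma \ref{lem:exsol}), I would track how the thimble $J_{\bm l}$ attached to $z_1$ is carried to the thimble attached to $z_2$ as $s$ runs from $0$ to $1$, and simultaneously track the accumulated global phase factor $e^{-\mathbbm i\,\mathrm{Im}\,{\bm W}_c/(\kappa+h^\vee)}$ along the gradient flow, using the conservation law of \cite{HL1} for the imaginary part. This is exactly the wall-crossing analysis: as $z_1$ and $z_2$ rotate past each other, a Stokes phenomenon occurs and the thimble $J_{\bm l,\bm 0}$ picks up a contribution from $J_{\bm 0,\bm l}$ weighted by a power of $q$, while $J_{\bm 0,\bm l}$ maps to $J_{\bm l,\bm 0}$ up to another power of $q$. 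Collecting these powers gives a $2\times2$ matrix of the form $\begin{pmatrix}0 & t^{a}\\ t^{b} & t^{c}(t^{d}-t^{-d})\end{pmatrix}$ (the precise exponents depending on $(\alpha_1,\alpha_1)$, $(\alpha_1,\omega_1)$ and $(\omega_1,\omega_1)$), which I recognize as the standard braiding matrix coming from the universal $R$-matrix of $U_h(g)$ acting on $V_{\omega_1}\otimes V_{\omega_1}$ up to conjugation by a diagonal matrix; the diagonal rescaling absorbing the difference in normalization of the basis vectors $J_{\bm k,\bm l}$ versus the standard weight basis is precisely the matrix $Q$. For the higher blocks (types $B_n,C_n,D_n$ have one more irreducible summand coming from the ``small'' representation $V_{2\omega_1-2\alpha_1-\cdots}$) the same mechanism produces a scalar eigenvalue with a definite sign, which matches the known eigenvalue of $\check R$ on that summand (the ``minus'' sign distinguishing the symmetric-minus piece).

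After establishing the local wall-crossing formula, I would assemble the full monodromy $\boldsymbol{B_{YY}}$ on $\mathfrak{J}(P)\cong V_{\omega_1}\otimes V_{\omega_1}$ as the block-diagonal sum of these pieces, and compare it block-by-block with $\boldsymbol{B_{U_h(g)}}$: on each irreducible isotypic block $\check R$ acts by the scalar $\varepsilon_i q^{(\lambda_i,\lambda_i+2\rho)/2 - \dots}$ with $\varepsilon_i\in\{\pm1\}$ determined by whether the summand sits in the symmetric or antisymmetric part, and the eigenvalues of $\boldsymbol{B_{YY}}$ computed above have the same ratios. The conjugating diagonal matrix $Q$ is then forced: its entries are the ratios between the thimble basis and the $R$-matrix weight basis, which are monomials in $t$, so $Q\in\mathrm{End}(V_{\omega_1}\otimes V_{\omega_1},\mathbb{Z}[t,t^{-1}])$ as claimed, and it is diagonal because it acts within each one-dimensional or swapped-pair basis line. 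I expect the main obstacle to be the bookkeeping of the global phase factors in the wall-crossing step: one must carefully choose the branch (the lift to the infinite cyclic cover recording the imaginary part of ${\bm W}_c$) for every thimble, verify that the Stokes jumps are single powers of $q$ and not more complicated, and confirm that the overall normalization — including the contribution of the $\log(z_1-z_2)$ term and the symmetry-breaking term $-c\sum(\rho,\alpha_{i_j})w_j$ — combines to give exactly the $U_h(g)$ braiding constant rather than something off by an uncontrolled scalar; getting the signs $\varepsilon_i$ right for $B_n$, $C_n$, $D_n$ is the delicate point, and it is where the classical-type case-by-case analysis in Section \ref{subsct:monodromy} does the real work.
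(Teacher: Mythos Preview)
Your overall strategy --- decompose along $E^{\mathbbm q}$, compute wall-crossing case by case, then compare with the universal $R$-matrix --- is the same as the paper's, and your anticipation that the bookkeeping of phase factors and signs is the delicate point is accurate. However, there is a substantive gap in your picture of the block structure.

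You assert that the $E^{\mathbbm q}$ are rank one except for a single two-dimensional block spanned by $J_{\bm 0,\bm l}$ and $J_{\bm l,\bm 0}$. This is not correct. Already for $A_n$ there is a two-dimensional block for \emph{every} unordered pair $i\neq j$, spanned by $J_{i,j},J_{j,i}$; these are all handled by the same Type~II mechanism, so that part is easy to repair. The real problem is in types $B_n,C_n,D_n$: the weight space corresponding to $\lambda^i+\lambda^{2n-i}=0$ (respectively $i+j=2n-1$, $o(i)+o(j)=2n-1$) is a single $\sigma$-invariant block of rank $2n{+}1$ (resp.\ $2n$, $2n$), not a collection of $2\times2$ blocks and scalars. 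On this block the wall-crossing is not the simple Stokes jump you describe: the thimble $J_{2n-i,i}$ feeds into \emph{all} of $J_{i-j,2n-i+j}$ for $0\le j\le i$ (equation \eqref{equ:Bwc}), and computing the coefficients requires the Type~III and Type~IV variations (self-rotation by $\pi$, the ``snake'' move with its $b,c_1,c_2,c_3$ contributions), which in turn rest on the explicit critical-point distributions of Lemma~\ref{lem:exsol} and its $C_n$/$D_n$ analogues together with Lemma~\ref{rlemma}. Your proposal does not contain these ingredients, and without them the monodromy on the large zero-weight block cannot be determined.

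A second, smaller gap: even granting the correct eigenvalues on each isotypic piece, you have not explained why the conjugating matrix $Q$ comes out \emph{diagonal} in the thimble basis. Two endomorphisms with the same spectrum on each weight space are conjugate, but the conjugator need not be diagonal. In the paper this is established by writing down the full matrix $\boldsymbol{B}^{c,d}_{a,b}$ (formulas \eqref{Bn1}--\eqref{Bn2}, \eqref{Cn1}--\eqref{Cn2}, \eqref{Dn1}--\eqref{Dn3}) and then exhibiting $Q$ explicitly as $\mathrm{diag}(\pm1)$ (with one entry $(q^{1/4}+q^{-1/4})^{-1}$ in the $B_n$ case). Your eigenvalue-matching argument does not by itself yield this.
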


\begin{remark}
The advantage of considering thimbles as the basis for $\mathbb{Z}[t,t^{-1}]$-module $V_{\omega_{1}}\otimes V_{\omega_{1}}$ is that the imaginary part of holomorphic function is conserved along the thimble defined by the gradient flow of the real part of the function \cite{HL1}. When the parameter varies, there is a global variation for the thimble, which can be extracted from the critical point associated with it. Therefore, it is convenient to calculate the monodromy of $\mathfrak{J}\overset{\pi}{\rightarrow} X_{2}$ valued in $\mathbb{Z}[t,t^{-1}]$ only by considering the variation of the critical point of Yang-Yang function.
\end{remark}

Denote by $\mathfrak{J}_{0}\overset{\mu}{\rightarrow} X_{2}$ the thimble space generated by the family $$\{e^{-\frac{{\bm W}_{0}({\bm w},{\bm z},{\bm \lambda},{\bm l})}{\kappa+h^{\vee}}}\mid V_{2\omega_{1}-\alpha({\bm l})}  \subset  V_{\omega_{1}}\otimes V_{\omega_{1}}\}.$$ By theorem \ref{1sing}, it is  a fiber bundle of $\mathbb{Z}[t,t^{-1}]$-module with its fiber $\mathfrak{J}_{0}(P)$ isomorphic to $Sing V_{\omega_{1}}\otimes V_{\omega_{1}}$. Denote also by $\boldsymbol{B_{YY}}$ the monodromy representation on it induced by clockwise rotation $T(1)$ in equation \eqref{equ:T}.  In section \ref{sec:conclusion}, we will define transformation $\boldsymbol{S}:\mathfrak{J}_{0}(P)\rightarrow\mathfrak{J}(P)$ induced by the parameter deformations $c\rightarrow +\infty$ from $0$. Our second main theorem is as following:

\begin{theorem}\label{thm2} $\boldsymbol{S}$ commutes with $\boldsymbol{B_{YY}}$ i.e. the following diagram commutes.
$$\begin{array}[c]{ccc}\mathfrak{J}_{0}(P)&\stackrel{\boldsymbol{S}}{\rightarrow}&\mathfrak{J}(P)\\\downarrow\scriptstyle{\boldsymbol{B_{YY}}}&&\downarrow\scriptstyle{\boldsymbol{B_{YY}}}\\\mathfrak{J}_{0}(P)&\stackrel{\boldsymbol{S}}{\rightarrow}&\mathfrak{J}(P)\end{array}$$
\end{theorem}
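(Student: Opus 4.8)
The plan is to realise both $\boldsymbol{S}$ and $\boldsymbol{B_{YY}}$ as parallel transports of one and the same flat $\mathbb{Z}[t,t^{-1}]$-module local system --- the thimble (relative homology) system --- along two paths in a combined parameter space, and then to deduce commutativity of the square from homotopy invariance of parallel transport.  Set $\widetilde{X}=[0,+\infty]\times X_{2}$, the endpoint $c=+\infty$ being admissible by Lemma~\ref{split}, and let $R\subset\widetilde{X}$ be the ``solid torus'' $[0,+\infty]\times(\mathrm{image}\,T)$ swept by the rotation loop $T:S^{1}\to X_{2}$ of \eqref{equ:T}.  Over $R$, the thimbles attached to the nondegenerate critical points of the admissible family $\{{\bm W}_{c}({\bm w},{\bm z},{\bm\lambda},{\bm l})\mid V_{2\omega_{1}-\alpha({\bm l})}\subset V_{\omega_{1}}\otimes V_{\omega_{1}}\}$ (together with the formal point $w_{0}$), carrying along the conserved phase $e^{-\mathrm{Im}\,{\bm W}_{c}\,\mathbbm{i}/(\kappa+h^{\vee})}$ as in Section~\ref{subsct:bundle}, should form a flat $\mathbb{Z}[t,t^{-1}]$-module local system $\mathcal{L}_{0}$ of rank $\dim\mathrm{Sing}(V_{\omega_{1}}\otimes V_{\omega_{1}})$.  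Its restriction to $\{0\}\times X_{2}$ is $\mathfrak{J}_{0}\overset{\mu}{\rightarrow}X_{2}$, and by construction the fibre of its restriction to $\{+\infty\}\times X_{2}$ over $P$ is the image of $\boldsymbol{S}$ in $\mathfrak{J}(P)$.  Thus $\boldsymbol{B_{YY}}$ on $\mathfrak{J}_{0}(P)$ is the monodromy of $\mathcal{L}_{0}$ around $T$ at $c=0$, and $\boldsymbol{S}$ is the parallel transport of $\mathcal{L}_{0}$ along the segment $\{(c,P):c\in[0,+\infty]\}$.

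Granting the flatness of $\mathcal{L}_{0}$ over $R$, the theorem becomes pure homotopy.  The two composites are the parallel transports of $\mathcal{L}_{0}$ along the two paths from $(0,P)$ to $(+\infty,P)$: $\boldsymbol{S}\circ\boldsymbol{B_{YY}}$ along $\gamma_{1}=\bigl(T\text{-loop at }c=0\bigr)\ast\bigl(c\text{-segment }0\to\infty\text{ at }{\bm z}=P\bigr)$, and $\boldsymbol{B_{YY}}\circ\boldsymbol{S}$ along $\gamma_{2}=\bigl(c\text{-segment }0\to\infty\bigr)\ast\bigl(T\text{-loop at }c=+\infty\bigr)$.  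These are homotopic rel endpoints inside $R$ via $\gamma_{u}=\bigl(c\text{-segment }0\to u\bigr)\ast\bigl(T\text{-loop at }c=u\bigr)\ast\bigl(c\text{-segment }u\to\infty\bigr)$, $u\in[0,+\infty]$: one simply slides the half-twist loop up the $c$-axis, which is legitimate since $T(s)$ keeps $z_{1}\neq z_{2}$ at every value of $c$.  Hence the two transports coincide.  The same ``rectangle'' argument applied to $[0,+\infty]\times(\text{an arc of }T)$ shows in addition that $\mathcal{L}_{0}$ is a sub-local-system of the full bundle $\mathfrak{J}$ along $\{+\infty\}\times(\mathrm{image}\,T)$ --- the half-twist continuation of a distinguished thimble is again distinguished --- so the $\boldsymbol{B_{YY}}$ in the right column may be taken to be the full monodromy on $\mathfrak{J}$ (it preserves the image of $\boldsymbol{S}$ and restricts there to the $\mathcal{L}_{0}$-monodromy), and the square commutes exactly as stated.

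The real content, and the step I expect to be the main obstacle, is the flatness of $\mathcal{L}_{0}$ over $R$ --- i.e.\ that the distinguished thimbles continue without degeneration over the whole solid torus.  The key point is uniform non-collision: along the half-twist $T$, at every level $c\in[0,+\infty]$, the distinguished critical points must stay nondegenerate and collide neither with $z_{1},z_{2}$ nor with one another.  At $c=0$ this is Theorem~\ref{1sing} (nondegeneracy away from $l=0$); for $c>0$ it should follow from the explicit critical solutions of the admissible equations --- those of \cite{LMV16} and of the lemma in Section~\ref{subsubsct:B}, holomorphic in the parameters --- together with the wall-crossing analysis of Section~3, which describes exactly how the distinguished critical points braid under $T$; and the limit $c\to+\infty$ is controlled by Lemma~\ref{split}, where each critical coordinate limits onto some $z_{a}$, so that the thimble degenerates only through the harmless product $J_{{\bm k},{\bm l}}=J_{{\bm k}}\times J_{{\bm l}}$ and $\boldsymbol{S}$ is well defined.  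One must also check that the identification of $t$ with the minimal power of $q=e^{2\pi\mathbbm{i}/(\kappa+h^{\vee})}$ occurring in the phase variation is locally constant on $R$, which again rests on the conservation law for $\mathrm{Im}\,{\bm W}_{c}$ along thimbles \cite{HL1}.

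A purely computational alternative is also available and is probably the quickest for the four classical types: read off $\boldsymbol{B_{YY}}$ on $\mathrm{Sing}(V_{\omega_{1}}\otimes V_{\omega_{1}})$ from the wall-crossing formulas of Section~3, read off $\boldsymbol{S}$ and $\boldsymbol{B_{YY}}$ on $V_{\omega_{1}}\otimes V_{\omega_{1}}$ from the creation, annihilation and monodromy matrices, and verify $\boldsymbol{B_{YY}}\boldsymbol{S}=\boldsymbol{S}\boldsymbol{B_{YY}}$ entrywise in each of $A_{n},B_{n},C_{n},D_{n}$.  I would nonetheless prefer the homotopy argument, since it explains conceptually why the two deformations commute and avoids the type-by-type bookkeeping.
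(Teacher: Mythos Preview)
Your homotopy argument is conceptually attractive, and the computational alternative you mention at the end is in fact exactly how the paper proceeds: it computes $\boldsymbol{B_{YY}}J={\bm d}J$ on $\mathfrak{J}_{0}(P)$ as an explicit scalar, computes $\boldsymbol{S}J=\sum_{a,b}e^{a,b}J_{a,b}$ explicitly for each of $A_{n},B_{n},C_{n},D_{n}$, and then verifies the eigenvector identity $\sum_{a,b}e^{a,b}\boldsymbol{B}^{c,d}_{a,b}={\bm d}\,e^{c,d}$ by direct inspection using the wall-crossing formulas of Section~\ref{sec:formula}. So the paper's argument is precisely your type-by-type fallback, not the flat-bundle picture.

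There is, however, a genuine gap in your preferred route. Your local system $\mathcal{L}_{0}$ cannot have constant rank $\dim\mathrm{Sing}(V_{\omega_{1}}\otimes V_{\omega_{1}})$ over all of $[0,+\infty]\times(\text{image }T)$. For each admissible ${\bm l}$, the Yang-Yang function ${\bm W}_{c}(\cdot,{\bm z},{\bm\lambda},{\bm l})$ has \emph{one} nondegenerate critical point at $c=0$ (Theorem~\ref{1sing}) but \emph{many} at $c>0$: already in the paper's $A_{n}$ computation \eqref{equ:ANSB} one sees two branches of $w_{i+1}$ for $c\neq0$, one of which runs off to infinity as $c\to0^{+}$. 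In the $B_{n}$ lowest-weight case the single critical point at $c=0$ bifurcates into $2n+1$ critical points as $c\to+\infty$. Thus $\boldsymbol{S}$ is not parallel transport of a rank-$\dim\mathrm{Sing}$ local system; it is the map sending the $c=0$ thimble to the \emph{linear combination} of $c\gg0$ thimbles it becomes homologous to after the extra critical points enter from infinity. Your assertion that ``the distinguished critical points must stay nondegenerate and collide neither with $z_{1},z_{2}$ nor with one another'' misses the actual mechanism: new critical points are born.

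The flat-bundle picture can be salvaged, but only on $(0,+\infty]\times X_{2}$, where the \emph{full} thimble system $\mathfrak{L}$ of rank $(\dim V_{\omega_{1}})^{2}$ is locally constant. One would then need to (i) show that the inclusion $\mathfrak{J}_{0}(P)\hookrightarrow\mathfrak{L}_{(\epsilon,P)}$ obtained by pushing the $c=0$ thimbles to small $c=\epsilon$ intertwines the two $\boldsymbol{B_{YY}}$'s, and (ii) invoke the rectangle homotopy on $[\epsilon,+\infty]\times(\text{image }T)$. Step~(i) is a genuine continuity statement at a degenerate boundary and is not covered by the explicit-solution lemmas you cite; without it the argument is circular, since defining ``the span of the deformed thimbles'' as a sub-local-system along $\{+\infty\}\times(\text{image }T)$ already presupposes that $\boldsymbol{B_{YY}}$ preserves that span, which is essentially the theorem. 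The paper sidesteps all of this by checking the eigenvector identity directly.
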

\section{Wall-crossing formula and monodromy representation}\label{sec:formula}

 In this section, we study the monodromy representation $\boldsymbol{B_{YY}}$ of $\mathfrak{J}\overset{\pi}{\rightarrow} X_{2}$ and prove the first main theorem. To simplify the notation, we use $l$ to label each Yang-Yang function in the following sections. For $D_{n}$ Lie algebra, there are two Yang-Yang functions of ${\bm l}=(1,...,1,1,0)$ and ${\bm l}=(1,...,1,0,1)$ with the same $l=n-1$. We will label $(1,...,1,1,0)$ by $n-1$ and $(1,...,1,0,1)$ by $n-1'$ and define an order for them.

\subsection{A simple example of wall-crossing phenomena}\label{subsec:ex}

The key for the derivation of the monodromy representation is to figure out the wall-crossing formula. To illustrate it, we start from an example of $A_{1}$ Lie algebra $g=sl(2,\mathbb{C})$, $\dim V_{\omega_{1}}=2$ and $\Omega_{\omega_{1}}=\{\omega_{1},\omega_{1}-\alpha\}$. The inner product on the weight space is $(a,b)=\frac{a\cdot b}{2}$.
For $v_{1}\otimes v_{0},v_{0}\otimes v_{1}\in V_{\omega_{1}}\otimes V_{\omega_{1}}$, two Yang-Yang functions corresponding to them equal
\begin{equation}
\begin{split}
&{\bm W}_{c}({\bm w},{\bm z},{\bm \lambda},1+0)\\=&{\bm W}_{c}({\bm w},{\bm z},{\bm \lambda},0+1)\\
=&\sum_{a}(\alpha,\omega_{1})\log\left( w-z_{a}\right) -(\omega_{1},\omega_{1})\log \left( z_{1}-z_{2}\right)
-c(w-\frac{1}{2}\left( z_{1}+z_{2}\right)).
\end{split}
\end{equation}

Its critical point equation is
\begin{equation}
      \frac{1}{w-z_{1}}+ \frac{1}{w-z_{2}}=c,
\end{equation}

which has two solutions $w^{1}(c)$ and $w^{2}(c)$ for $c\geq 2$. Assume $$\lim_{c\rightarrow +\infty}w^{1}(c)=z_{1},\lim_{c\rightarrow +\infty}w^{2}(c)=z_{2}.$$
With large $c$, let $J_{1,0}$ and $J_{0,1}$ be two thimbles associated respectively to $w^{1}(c)$ and $w^{2}(c)$.
The continuous clockwise transformation $T(1)$ induces a continuous deformation on the thimble $J_{1,0}$:$$J_{1,0}\rightarrow q^{-\frac{1}{2}(\omega_{1}-\alpha,\omega_{1})}J_{0,1}=q^{\frac{1}{4}}J_{0,1},$$ where $q^{-\frac{1}{2}(\omega_{1}-\alpha,\omega_{1})}$ is from the phase factor difference of the critical values and it is equal to the phase factor difference of$ (z_{1}-z_{2})^{\frac{(\omega_{1}-\alpha,\omega_{1})}{\kappa+h^{\vee}}}$ under the $\frac{1}{2}$ clockwise rotation. The transformation of $J_{0,1}$ is more interesting. In the process of clockwise rotation $T(s)$, $z_{1}$ will pass through $J_{0,1}$ from the right hand side of $z_{2}$, when $s=\frac{1}{2}$, the imaginary parts of two critical values equals: $$Im {\bm W}_{c}(w^{1},T(\frac{1}{2})P,{\bm \lambda},1)=Im {\bm W}_{c}(w^{2},T(\frac{1}{2})P,{\bm \lambda},1).$$ There is a gradient flow connecting $w^{2}$ to $w^{1}$ as shown in figure 14 of \cite{GW}. The homotopic class of $J_{0,1}$ after deformation is equivalent to a zig-zag $\mathbb{Z}[t,t^{-1}]$-linear combination of $J_{1,0}$ and $J_{0,1}$：
\begin{equation}\label{equ:a1wc}
  \boldsymbol{B}J_{0,1}=aJ_{1,0}+(c-b)J_{0,1}.
\end{equation}
Because in general situation thimble is of high dimension, it is convenient to just draw the variation of its critical point along the homotopic class of the thimble after rotation, rather than to draw the thimble itself. Here, the zig-zag thimble has three parts and the relations of their critical points are shown in figure \ref{fig:t10}. In the following, whenever we draw the figure of the variation of the critical point, we are showing the relations between the critical points of the different thimbles in the homotopic class of the thimble after the deformation.

The coefficient $a$ in \eqref{equ:a1wc} is from the phase factor of  $e^{-\frac{{\bm W}_{c}(w^{2},T(\frac{1}{2})P,{\bm \lambda},1)}{\kappa+h^{\vee}}}\sim  (z_{1}-z_{2})^{\frac{(\omega_{1}-\alpha,\omega_{1})}{\kappa+h^{\vee}}}$. As shown in figure \ref{fig:t10}, $b$ differs from $a$ by an additional movement of the critical point from the right of $z_{2}$ to the right of $z_{1}$ and the negative sign before $b$ is from the orientation reverse of the thimble. $c$ differs from $b$ by an anti-clockwise rotation around $z_{1}$. Thus, $a=q^{-\frac{1}{2}(\omega_{1},\omega_{1}-\alpha)}$, $b=q^{\frac{1}{2}(\omega_{1},\alpha)}\cdot a$ and  $c=q^{-(\omega_{1},\alpha)}\cdot b$.

\begin{equation}\label{equ:a1wc2}
\boldsymbol{B}J_{0,1}=q^{\frac{1}{4}}J_{1,0}+(q^{-\frac{1}{4}}-q^{\frac{3}{4}})J_{0,1}.\\
\end{equation}
The minimal phase factor under $T$ is $t=q^{\frac{1}{4}}$ and therefore the monodromy is valued in $\mathbb{C}[q^{\frac{1}{4}},q^{-\frac{1}{4}}]$.
In sum,
\begin{equation}
  \boldsymbol{B}\left(
     \begin{array}{c}
       J_{1,0} \\
       J_{0,1} \\
     \end{array}
   \right)=\left(
             \begin{array}{cc}
               0 & q^{\frac{1}{4}} \\
               q^{\frac{1}{4}} & q^{-\frac{1}{4}}-q^{\frac{3}{4}} \\
             \end{array}
           \right)
\left(
     \begin{array}{c}
       J_{1,0} \\
       J_{0,1} \\
     \end{array}
   \right).
\end{equation}

\begin{figure}
 \centering\includegraphics[width=6cm]{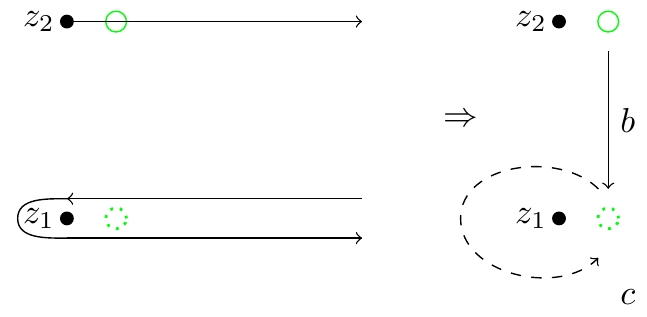}\\
\caption{Variation of the critical point on ${\bm W}$ plane.}
 \label{fig:t10}
 \end{figure}

The phenomena is called wall-crossing phenomena and the formula \eqref{equ:a1wc2} describing it is called wall-crossing formula. Terms with coefficients $b$ and $c$ are wall-crossing terms. The following properties are clear. Firstly, because critical points of the thimbles in the wall-crossing phenomena are different solutions from the same Yang-Yang function, thus the types and total number of primary roots will not be created or annihilated in the process  of wall-crossing, but only be transfered from one point to another point. We call this property conservation law of wall-crossing. Secondly, clockwise transformation $T$ make a constraint on the direction of the primary roots transfer: primary roots only move in the direction of positive real axis from $z_{2}$ to $z_{1}$. Based on the above two properties, $E^{\mathbbm{q}}$ is a $\sigma$ invariant sub-module. The monodromy representation $\boldsymbol{B}$ is naturally decomposed into a direct sum of the sub-representations on $E^{\mathbbm{q}}$ and all matrices of the sub-representation are triangular and valued in $\mathbb{Z}[t,t^{-1}]$.

\subsection{Variations of critical points, wall-crossing formula and monodromy representation}\label{subsct:monodromy}

To derive wall-crossing formula, it is necessary to analyze the variation of critical points under the transformation $T$. As in the previous example, we focus on the homotopic class of the thimble after deformation $T$ and see how the primary roots move from $z_{2}$ to $z_{1}$. For the fundamental representation $V_{\omega_{1}}$ of $g\in A_{n},B_{n},C_{n},D_{n}$, we conclude  as following four types of variations of critical points during wall-crossing and the details of derivation can be found in each case in section \ref{subsubsct:A}, \ref{subsubsct:B}, \ref{subsubsct:C} and \ref{subsubsct:D}.

\begin{description}

  \item[Type I]
\begin{figure}
 \centering\includegraphics[width=3.5cm]{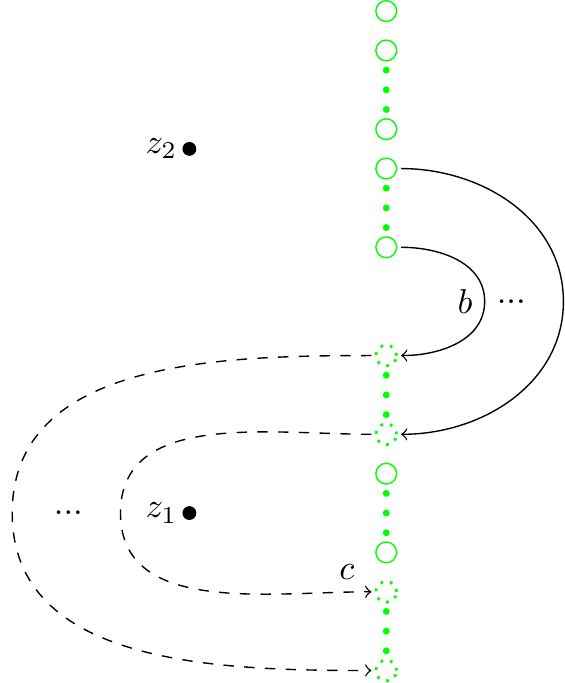}\\
\caption{Type I.}
 \label{fig:t1}
 \end{figure}

As shown in figure \ref{fig:t1}, coordinates of the critical point have the same real parts but different imaginary parts. This variation appears in the case of the finite dimensional irreducible representation of $A_{1}$ or in the case of $B_{n}$ Lie algebra.
  \item[Type II]

\begin{figure}
 \centering\includegraphics[width=3.5cm]{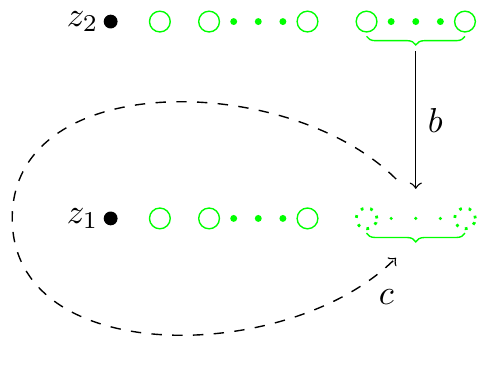}\\
\caption{Type II.}
 \label{fig:t2}
 \end{figure}
Coordinates of the critical point near $z_{2}$ have the same imaginary parts formulated as a straight line paralleled to the real axis. The variation is just translation as shown in figure \ref{fig:t2}. The homotopic class of the thimble after deformation is equivalent to three parts. There is only one dimension in  the sub-thimble reversing its orientation, thus the sign before the coefficient $b$ is always minus in this case.
  \item[Type III]
\begin{figure}
 \centering\includegraphics[width=4cm]{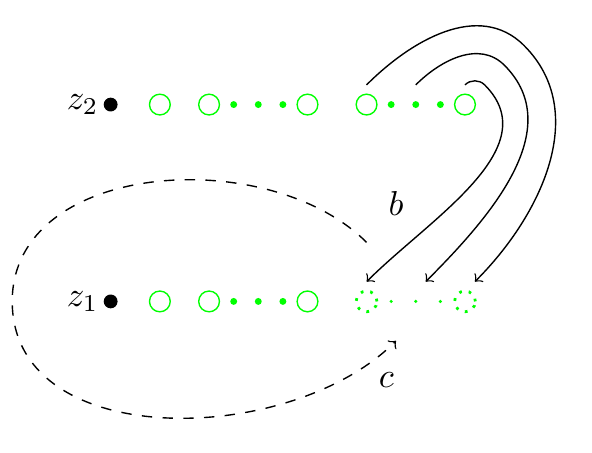}\\
\caption{Type III.}
 \label{fig:t3}
 \end{figure}

Coordinates of the critical point near $z_{2}$ have the same imaginary parts formulated as a straight line paralleled to the real axis. The variation from $z_{2}$ to $z_{1}$ as shown in figure \ref{fig:t3} has a clockwise self rotation of $\pi$, then followed by a translation. In this case, the orientation of each dimension of the sub-thimble connecting $z_{2}$ to $z_{1}$ is reversed. Thus, the sign before $b$ is $(-1)^{j}$, where $j$ is the dimension of the sub-thimble or the number of primary roots moving from $z_{2}$ to $z_{1}$. Opposite to the sign before $b$, it is $(-1)^{j+1}$ before $c$ .
  \item[Type IV]

\begin{figure}
 \centering\includegraphics[width=4cm]{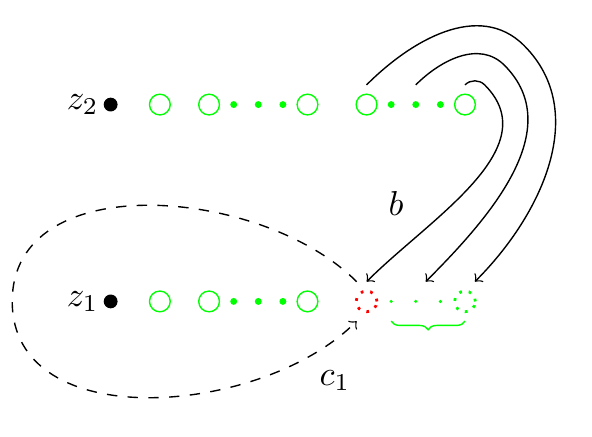} \centering\includegraphics[width=4cm]{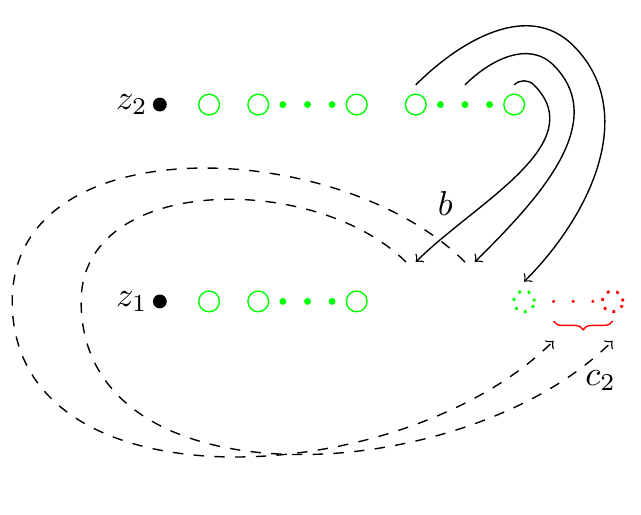}
\centering\includegraphics[width=4cm]{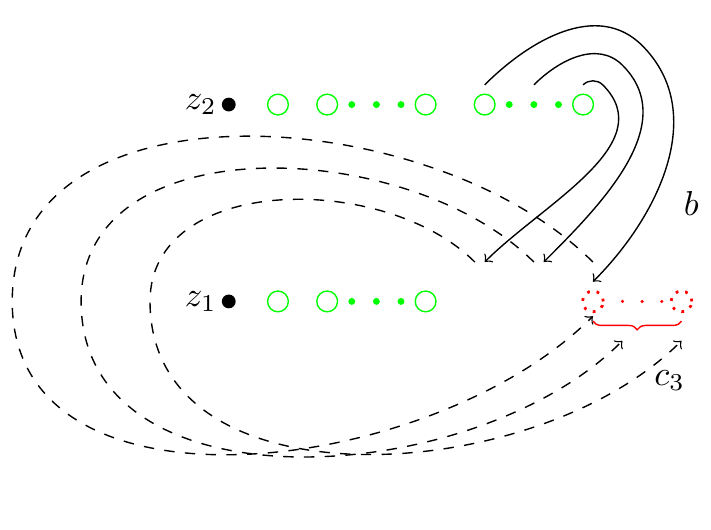}\\
\caption{The origin of $c_{1}$, $c_{2}$ and $c_{3}$ of type IV.}
 \label{fig:t4}
 \end{figure}

Coordinates of the critical point near $z_{2}$ have the same imaginary parts formulated as a straight line paralleled to the real axis. The variation leaves a trace along the homotopic class of the integration cycle like a "snake" keeping the relative order of the coordinates invariant during the moving. The coefficient $b$ is from the variation of primary roots moving from $z_{2}$ to $z_{1}$. $c_{1}$, $c_{2}$ and $c_{3}$ differs from $b$ respectively by $1$, $j-1$ and $j$ primary roots rotating around $z_{1}$ and other primary roots near $z_{1}$ in a specific manner, as is shown in the pictures of figure \ref{fig:t4}. Thus, the sign before $b$, $c_{1}$, $c_{2}$ and $c_{3}$ are $(-1)^{j}$, $(-1)^{j+1}$, $(-1)^{2j-1}=-1$ and $(-1)^{2j}=1$ respectively.
\end{description}

In the following, we derive the monodromy representations for $A_{n}$, $B_{n}$, $C_{n}$ and $D_{n}$ respectively.

\subsubsection{$A_{n}$}\label{subsubsct:A}
Denote by $\{\lambda^{i}\}_{i=0,1,...,n}$ the weights of $V_{\omega_{1}}$, where $\lambda^{i}=\omega_{1}-\sum_{j=1}^{i}\alpha_{j}$. For each weight vector $v_{\lambda^{j}}\in V_{\omega_{1}}$, the corresponding Yang-Yang function${\bm W}_{c}({\bm w},z,\omega_{1},j)$ with one parameter $z$ is
\begin{equation}
\begin{split}\label{equ:1zGYY in SB}
{\bm W}_{c}({\bm w},z,\omega_{1},j)
=\sum _{i=1}^{j}(\alpha_{i},\omega_{1})\log\left( w_{i}-z\right) &-\sum_{1\leq i< s\leq j}(\alpha_{i},\alpha_{s})\log \left( w_{i}-w_{s}\right)\\
&-c(\sum_{i=1}^{j}(\alpha_{i},\rho)w_{i}- (\omega_{1},\rho)z).
\end{split}
\end{equation}
Since $(\omega_{1},\alpha_{i})=\delta_{i,1}$ and $$(\alpha_{i},\alpha_{k})=\left\{
                                                                   \begin{array}{ll}
                                                                     2, & \hbox{$i=k$;} \\
                                                                     -1, &\hbox{$\mid i-k\mid=1$;}\\
                                                                   0, & \hbox{otherwise,}
                                                                   \end{array}
                                                                 \right.$$
\begin{equation}
\begin{split}
{\bm W}_{c}({\bm w},z,\omega_{1},j)
=\log\left( w_{1}-z\right) &+\sum_{i=1}^{j-1}\log \left( w_{i}-w_{i+1}\right)-c(\sum_{i=1}^{j}w_{i}- \frac{n}{2}z).
\end{split}
\end{equation}
Its critical point equation is as following:
\begin{equation}\label{ceA}
  \left\{
    \begin{aligned}
      \frac{1}{w_{1}-z}&=\frac{-1}{w_{1}-w_{2}}+c  \\
      0&=\frac{-1}{w_{2}-w_{1}}+\frac{-1}{w_{2}-w_{3}}+c  \\
      ...   \\
      0&=\frac{-1}{w_{j-1}-w_{j-2}}+\frac{-1}{w_{j-1}-w_{j}}+c \\
      0&=\frac{-1}{w_{j}-w_{j-1}}+c.
    \end{aligned}
  \right.
\end{equation}

It is obvious that $z$ and the solution $\{w_{i}=z+\sum_{k=0}^{i-1}\frac{1}{(j-k)c}\}_{i=1,..,j}$ are on the same horizontal line of ${\bm W}$ plane.

\begin{figure}
\centering\includegraphics[width=7.5cm]{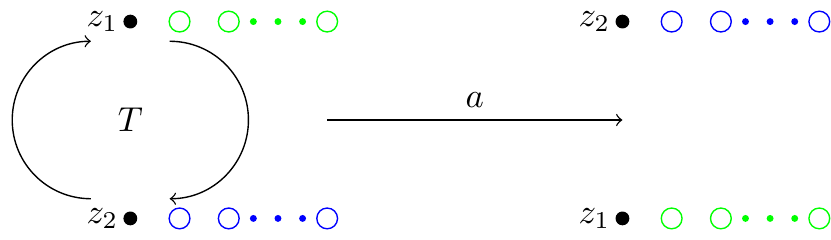}\\
\caption{Coefficient $a$ from basic clockwise rotation.}
 \label{fig:t0}
 \end{figure}

Therefore, for $v_{\lambda^{i}}\otimes v_{\lambda^{j}}\in V_{\omega_{1}}\otimes V_{\omega_{1}}$, the coordinates of the critical point are distributed respectively along two horizontal straight lines started from $z_{1}$ and $z_{2}$ on the ${\bm W}$ plane, as shown in the first picture of figure \ref{fig:t0}.

When $i\geq j$, there is no wall-crossing under the transformation $T$. $$\boldsymbol{B}J_{i,j}=q^{-\frac{1}{2}(\lambda^{i},\lambda^{j})}J_{j,i}.$$

When $i<j$, the variation is moving primary roots $\{\alpha_{i+1},\alpha_{i+2},...,\alpha_{j}\}$ from $z_{2}$ to $z_{1}$. Let $w^{a}_{k}, a=1,2$ be the coordinates of $\alpha_{k}$. The type of variation can be seen from the deformation of parameter $c\rightarrow+\infty$ from $c=0$. The critical point equation with two singularities $w^{1}_{i}$ and $w^{2}_{i}$ is as following:
\begin{equation}\label{equ:ANSB}
  \left\{
    \begin{aligned}
      \frac{1}{w_{i+1}-w^{1}_{i}}+\frac{1}{w_{i+1}-w^{2}_{i}}&=\frac{-1}{w_{i+1}-w_{i+2}}\\
      ...   \\
      0&=\frac{-1}{w_{k}-w_{k-1}}+\frac{-1}{w_{k}-w_{k+1}} \\
      ...  \\
      0&=\frac{-1}{w_{j}-w_{j-1}}+c,
    \end{aligned}
  \right.
\end{equation} where $i+2\leq k\leq j-1$.

For $c\neq 0$,  $$\begin{aligned}
w_{i+1}&=\frac{2 + c w^{1}_{i} + c w^{2}_{i} \pm \sqrt{4 + c^{2} (w^{1}_{i}-w^{2}_{i})^{2}}}{2c},\\
w_{k}&=w_{i+1}+\frac{k-i-1}{c},\quad k\geq i+2.\\
    \end{aligned}$$

$$\lim_{c\rightarrow 0}w_{i+1}=\frac{w^{1}_{i} +  w^{2}_{i}}{2}\quad \hbox{and}\quad \lim_{c\rightarrow 0}w_{k}=+\infty, \quad i+2\leq k\leq j.$$

$$\lim_{c\rightarrow +\infty}w_{i+1}=w^{1}_{i} \quad \hbox{or} \quad  w^{2}_{i}.$$

\begin{figure}
 \centering\includegraphics[width=7cm]{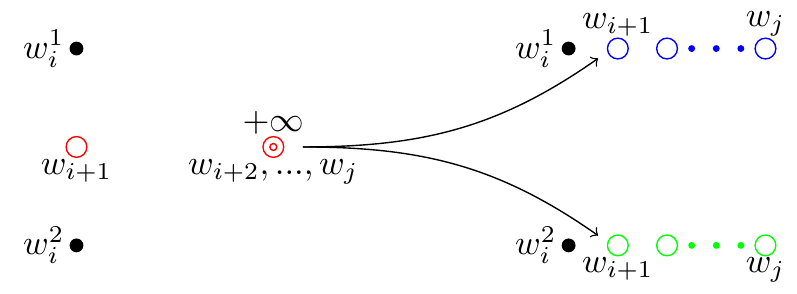}\\
\caption{Variation of critical points as $c\rightarrow +\infty$ from $0$.}
 \label{fig:va}
 \end{figure}
Note that the process above is independent of the position of $w^{1}_{i}$ and $w^{2}_{i}$. As shown in figure \ref{fig:va}, when $c\rightarrow +\infty$, the coordinates of the critical point at $c=0$ are continuously moving to $w^{1}_{i}$ or $w^{2}_{i}$ within the same horizontal line on ${\bm W}$ plane. In the process of continuous transformation $T(s)$, when $s=1/2$, the imaginary parts of two critical values equals. Thus there will be two thimbles connecting $w^{2}_{i}$ to $w^{1}_{i}$ from above and below $z_{1}$ and they are homotopically equivalent to the thimbles at $c=0$ with $Im w^{1}_{i}<Im w^{2}_{i}$ and $Im w^{1}_{i}>Im w^{2}_{i}$ respectively. The variation of $\{w_{i+1},...,w_{j}\}$ is just a translation and thus a type $II$ variation. As shown in figure \ref{fig:a1}, the homotopic class of $\boldsymbol{B}J_{i,j}$ is equivalent to  $\mathbb{Z}[t,t^{-1}]$-linear combination of three parts $$\boldsymbol{B}J_{i,j}=aJ_{j,i}+(c-b)J_{i,j}$$ and the difference of them coming from the translation of the primary roots $\{\alpha_{i+1},\alpha_{i+2},...,\alpha_{j}\}$.
In fact, under the basic clockwise rotation as shown in figure \ref{fig:t0}, $a$ is from the phase factor difference of $e^{-\frac{{\bm W}_{c}({\bm w},{\bm z},{\bm \lambda},i+j)}{k+h^{v}}}$ i.e. that of $(z_{1}-z_{2})^{(\lambda^{i},\lambda^{j})}$. Comparing with $a$, $b$ has an additional phase factor $q^{\frac{1}{2}(\lambda^{i},\lambda^{i}-\lambda^{j})}$ of translating  $\{\alpha_{i+1},\alpha_{i+2},...,\alpha_{j}\}$ from $z_{2}$ to $z_{1}$. $c$ differs from $b$ by translation of $\{\alpha_{i+1},\alpha_{i+2},...,\alpha_{j}\}$ around $z_{1}$ and $\{\alpha_{1},\alpha_{2},...,\alpha_{i}\}$. Thus$$c=b\cdot q^{-(\lambda^{i},\lambda^{i}-\lambda^{j})}.$$ Therefore, we get the following wall-crossing formula for $V_{\omega_{1}}$:
$$\boldsymbol{B}J_{i,j}=\left\{
             \begin{array}{ll}
               q^{-\frac{1}{2}(\lambda^{i},\lambda^{j})}J_{j,i}, & \hbox{$i\geq j$;} \\
               q^{-\frac{1}{2}(\lambda^{i},\lambda^{j})}(J_{j,i}+q^{\frac{1}{2}(\lambda^{i},\lambda^{i}-\lambda^{j})}(q^{-(\lambda^{i},\lambda^{i}-\lambda^{j})}-1)J_{i,j}), & \hbox{$j>i$.}
             \end{array}
           \right.
$$

\begin{figure}
 \centering\includegraphics[width=4.5cm]{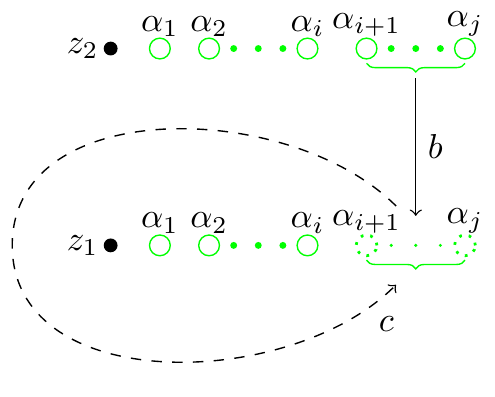}\\
\caption{In the fundamental representation of $A_{n}$ Lie algebra, wall-crossing coefficient $b$ is different from $a$ by the translation of $\{\alpha_{i+1},\alpha_{i+2},...,\alpha_{j}\}$ from $z_{2}$ to $z_{1}$ and $c$ different from $b$ by anti-clockwise translation around $z_{1}$ and $\{\alpha_{1},\alpha_{2},...,\alpha_{i}\}$.}
 \label{fig:a1}
 \end{figure}

\subsubsection{$B_{n}$}\label{subsubsct:B}
Denote the weights of the fundamental representation of $B_{n}$ Lie algebra by $$\lambda^{i}=\left\{
                                                                                        \begin{array}{ll}
                                                                                          \omega_{1}-\sum_{j=1}^{i}\alpha_{j}, & \hbox{$i\leq n$;} \\
                                                                                          \omega_{1}-\sum_{j=1}^{n}\alpha_{j}-\sum_{j=2n+1-i}^{n}\alpha_{j}, & \hbox{$n<i\leq 2n$.}
                                                                                        \end{array}
                                                                                      \right.
$$
Their inner products are:
\begin{equation}
(\lambda^{s},\lambda^{t})=\left\{
                            \begin{array}{ll}
                              1, & \hbox{$s+t\neq2n,s=t$;} \\
                              0, & \hbox{$s+t\neq2n,s\neq t$;} \\
                              0, & \hbox{$s+t=2n,s=t $;} \\
                              -1, & \hbox{$s+t=2n,s\neq t $,}
                            \end{array}
                          \right.
\end{equation}where $s,t=0,1,..,2n$.
For any weight vector $v_{\lambda^{l}}\in V_{\omega_{1}}$, the corresponding Yang-Yang function is as following:
\begin{equation}
\begin{split}\label{equ:BGYY in SB}
{\bm W}_{c}({\bm w},z,\omega_{1},l)
=\sum _{j=1}^{l}(\alpha_{i_{j}},\omega_{1})\log\left( w_{j}-z\right) -&\sum_{1\leq j< k\leq l}(\alpha_{i_{j}},\alpha_{i_{k}})\log \left( w_{j}-w_{k}\right)\\
-&c(\sum_{j=1}^{l}(\alpha_{i_{j}},\rho)w_{j}-(\omega_{1},\rho) z),
\end{split}
\end{equation}where $$i_{j}=\left\{
                                      \begin{array}{ll}
                                        j, & \hbox{$j\leq n$;} \\
                                        2n-j+1, & \hbox{$n<j\leq 2n$.}
                                      \end{array}
                                    \right.
$$
By this notation, when $l\geq n+1$, $\{w_{k}, w_{2n+1-k}\}_{2n+1-l\leq k\leq n}$ are pairs of symmetric coordinates of $\alpha_{k}$ in the function. Define $\bar{w}_{k}=w_{k}+w_{2n+1-k}$ and $\Delta_{k}=(w_{k}-w_{2n+1-k})^{2}$, $2n+1-l\leq k\leq n$.

\begin{lemma}\label{lem:exsol} For the fundamental representation $V_{\omega_{1}}$ of $B_{n}$ Lie algebra, the solutions of the critical point equation \eqref{CE2} of the corresponding Yang-Yang functions ${\bm W}_{c}({\bm w},0,\omega_{1},l)$ are as following:

When $l<n$, \begin{equation}
w_{j}=\sum_{i=1}^{j}\frac{1}{c(l-i+1)}\quad j=1,...,l.\end{equation}

When $l=n$, \begin{equation}w_{j}=\sum_{i=1}^{j}\frac{1}{c(l-i+1/2)}\quad j=1,...,l.\end{equation}

When $l\geq n+1$, \begin{equation}w_{k}=\sum_{i=1}^{k}\frac{1}{c(l-i)}\end{equation} for $k=1,...,2n-l.$
For $2n+1-l\leq k\leq n$, \begin{equation}\begin{split}\bar{w}_{k}=&\frac{1}{c(l-r-1)}+\sum_{j=1}^{2n-l}\frac{2}{c(l-j)}+\sum_{j=2n-l+1}^{k-1}\frac{1}{c(l-j-1)}\\&+\sum_{j=2n-l+1}^{2n-k-1}\frac{1}{c(l-j-1)},\\
\Delta_{k}=&[\sum_{j=k}^{2n-k-1}\frac{1}{c(l-j-1)}]^2-\frac{1}{c^2(l-n-1)^2}.\end{split}\end{equation}
\end{lemma}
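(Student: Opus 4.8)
The plan is to verify, regime by regime in the three ranges $l<n$, $l=n$, $l\geq n+1$, that the displayed tuples ${\bm w}$ satisfy each of the $l$ critical point equations of \eqref{CE2} specialized to the single parameter $z=0$ and $\lambda=\omega_{1}$, i.e. $\frac{(\omega_{1},\alpha_{i_{j}})}{w_{j}}=\sum_{s\neq j}\frac{(\alpha_{i_{j}},\alpha_{i_{s}})}{w_{j}-w_{s}}+c(\rho,\alpha_{i_{j}})$. For $B_{n}$ one substitutes $(\omega_{1},\alpha_{i})=\delta_{i1}$, $(\rho,\alpha_{i})=1$ for $i<n$ and $(\rho,\alpha_{n})=\tfrac12$, $(\alpha_{i},\alpha_{i})=2$ for $i<n$ and $(\alpha_{n},\alpha_{n})=1$, and $(\alpha_{i},\alpha_{k})=-1$ exactly when $|i-k|=1$; each equation then becomes a rational identity in $c$ whose terms are read off from the Dynkin-chain neighbours of $\alpha_{i_{j}}$ together with the second copy of $\alpha_{i_{j}}$ when it is doubled. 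Equivalently these formulas are the $\eta\to0$ limits in \eqref{lpt} of the two-parameter Bethe solutions of \cite{LMV16}, so one could instead carry out that limit; I will describe the self-contained substitution route.

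When $l<n$ the function involves only $\alpha_{1},\dots,\alpha_{l}$, which span a type $A_{l}$ subdiagram with the standard inner products, so its critical point system is identical to the $A$-chain system \eqref{ceA} at $z=0$; the listed solution is exactly the one already obtained in \S\ref{subsubsct:A}, since $w_{j}=\sum_{i=1}^{j}\frac{1}{c(l-i+1)}$ is the reindexing of $w_{i}=z+\sum_{k=0}^{i-1}\frac{1}{(l-k)c}$, and no new work is needed. When $l=n$ the only change is that the $j=n$ equation carries the short root $\alpha_{n}$ and so uses $(\alpha_{n},\alpha_{n})=1$ and $(\rho,\alpha_{n})=\tfrac12$; the half-integer shift in $w_{j}=\sum_{i=1}^{j}\frac{1}{c(l-i+1/2)}$ is precisely what is required so that the consecutive difference $w_{n}-w_{n-1}=\frac{1}{c(1/2)}=\frac2c$ solves $\frac{-1}{w_{n}-w_{n-1}}+\frac{c}2=0$, while the interior equations telescope as in the $A$ case.

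The substantive regime is $l\geq n+1$, where $\alpha_{2n+1-l},\dots,\alpha_{n}$ are each doubled (as $w_{k}$ and $w_{2n+1-k}$). Here the device is the substitution $\bar w_{k}=w_{k}+w_{2n+1-k}$, $\Delta_{k}=(w_{k}-w_{2n+1-k})^{2}$: adding the two equations indexed by $w_{k}$ and $w_{2n+1-k}$ makes the self-interaction terms between the two copies cancel, while subtracting them and clearing $w_{k}-w_{2n+1-k}$, together with $w_{k}w_{2n+1-k}=\frac{\bar w_{k}^{2}-\Delta_{k}}{4}$, turns all remaining denominators into symmetric functions, so that the pair of equations for $\alpha_{k}$ closes up into two equations involving only the $\bar w_{\bullet},\Delta_{\bullet}$ of $\alpha_{k}$ and its neighbours. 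One then solves this reduced system recursively: the tail $w_{1},\dots,w_{2n-l}$ again solves an $A$-chain and gives the first displayed formula, and working inward from the innermost root $\alpha_{n}$ — where the recursion forces $\Delta_{n}<0$, reflecting that some coordinates of the critical point are non-real — down to $\alpha_{2n+1-l}$ produces the stated closed forms, the nested sums in them being the accumulated $A$-chain increments. I expect the bulk of the difficulty to lie exactly here: checking that the sum/difference combination genuinely closes in the $(\bar w,\Delta)$ variables for the interior doubled roots, treating separately the boundary roots $\alpha_{2n-l}$, $\alpha_{2n+1-l}$, $\alpha_{n}$ (and, when $l=2n$, also $\alpha_{1}$, whose equation retains the $\frac{1}{w_{1}}$ term) where the neighbour pattern and the $\tfrac12$ short-root factor intervene, and then pushing the telescoping bookkeeping through to the displayed sums.
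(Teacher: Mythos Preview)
Your plan is sound, but it takes a different route from the paper. The paper does not verify by direct substitution at all; it simply cites the explicit two-parameter solutions from \cite{LMV16} (for ${\bm z}=(0,1)$, ${\bm\lambda}=(\lambda,\omega_{1})$) and then obtains the present one-parameter solutions as the limit $\tilde w_{j}=\lim_{\mu\to+\infty}\mu\bigl(1-w_{j}(\mu c\rho)\bigr)$, which is precisely the limit route you mention as an alternative before opting for the self-contained substitution argument. Your approach has the advantage of not appealing to an external reference and of making transparent why the $A$-chain recursion governs the tail, why the half-integer shift appears at $l=n$, and how the symmetric variables $(\bar w_{k},\Delta_{k})$ close up the doubled-root equations; the price is the case-by-case bookkeeping you flag in the $l\geq n+1$ regime. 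The paper's limit argument trades all of that away: the ``bulk of the difficulty'' you anticipate is entirely outsourced to \cite{LMV16}, and the passage to the symmetry-breaking equation is a one-line rescaling.
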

\begin{proof} The explicit solutions of the critical point equation of Yang-Yang function ${\bm W}({\bm w},{\bm z},{\bm \lambda},{\bm l})$ with ${\bm z}=(0,1)$, ${\bm \lambda}=(\lambda,\omega_{1})$ are already known in \cite{LMV16}. If we denote them by $w_{1}(\lambda),...,w_{l}(\lambda)$, then the solutions $\tilde{w}_{1},...,\tilde{w}_{l}$ of the critical point equation \eqref{CE2} of Yang-Yang functions ${\bm W}_{c}({\bm w},0,\omega_{1},l)$ are just the limitation of the critical solutions of ${\bm W}({\bm w},{\bm z},{\bm \lambda},{\bm l})$ with the data ${\bm z}=(0,\mu)$, ${\bm \lambda}=(\mu c\rho,\omega_{1})$:
$$\tilde{w}_{j}=\lim_{\mu\rightarrow +\infty}\mu(1-w_{j}(\mu c\rho)),\quad j=1,...l.$$
By the limitation, the explicit solutions of equation \eqref{CE2} are straightforward.
\end{proof}

By this lemma, the following property is clear.

\begin{lemma} $\Delta_{k}>0$ for $k=2n+1-l,...,n-1$ and $\Delta_{n}<0$. The coordinates of critical solutions satisfy the following order:

When $l\leq n$, \begin{equation}0<w_{1}<w_{2}<...<w_{l};\end{equation}
When $l\geq n+1$, assume $w_{k}<w_{2n+1-k}$ for $k=2n+1-l,...,n-1$, then
\begin{equation}0<w_{1}<...<w_{2n-l}<w_{2n+1-l}<...<w_{n-1}<\frac{\bar{w}_{n}}{2}<w_{n+2}<...<w_{l}.\end{equation}
\end{lemma}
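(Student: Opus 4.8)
The plan is to verify the two asserted inequalities — $\Delta_k>0$ for $2n+1-l\le k\le n-1$, $\Delta_n<0$ — directly from the closed forms in Lemma~\ref{lem:exsol}, and then deduce the total ordering of the real coordinates as an elementary consequence. All quantities $\frac{1}{c(l-j-1)}$ appearing in $\Delta_k$ carry indices $j$ running over $k\le j\le 2n-k-1$; since $k\le n-1$ forces $l-j-1\ge l-(2n-k-1)-1 = l-2n+k \ge l-2n+(2n+1-l)=1>0$ when $k\ge 2n+1-l$, every such term is strictly positive, and likewise $l-n-1\ge 0$ needs checking: for $l\ge n+1$ we have $l-n-1\ge 0$, with equality exactly at $l=n+1$. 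So first I would dispose of the degenerate case $l=n+1$ separately (there $\frac{1}{c^2(l-n-1)^2}$ blows up, but then the index range $2n+1-l\le k\le n-1$ is just $k=n$, so no $k\le n-1$ case arises and one only needs $\Delta_n<0$, which reads $\Delta_n=[\text{(a positive sum)}]^2-\infty<0$; more honestly, one rewrites the $l=n+1$ solution without the spurious term by re-deriving $\Delta_n$ from the critical point equation in that case). For $l\ge n+2$ the quantity $\frac{1}{c^2(l-n-1)^2}$ is finite and positive.

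Next, for the main range $2n+1-l\le k\le n-1$ I would show $\left[\sum_{j=k}^{2n-k-1}\frac{1}{c(l-j-1)}\right]^2>\frac{1}{c^2(l-n-1)^2}$, equivalently $\sum_{j=k}^{2n-k-1}\frac{1}{l-j-1}>\frac{1}{l-n-1}$ (all terms positive, $c>0$). This is immediate because the index $j=n$ lies in the summation range $k\le j\le 2n-k-1$ precisely when $k\le n$ and $n\le 2n-k-1$, i.e. $k\le n-1$; hence the single term $j=n$ already contributes $\frac{1}{c(l-n-1)}$ and the remaining terms are strictly positive, so the sum strictly exceeds $\frac{1}{l-n-1}$, giving $\Delta_k>0$. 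For $\Delta_n$ one sets $k=n$ in the formula: the sum $\sum_{j=n}^{2n-n-1}=\sum_{j=n}^{n-1}$ is empty, so $\Delta_n=0-\frac{1}{c^2(l-n-1)^2}<0$. Since $\Delta_k=(w_k-w_{2n+1-k})^2$, positivity of $\Delta_k$ means $w_k\ne w_{2n+1-k}$ and we may (as the lemma does) fix the labelling $w_k<w_{2n+1-k}$; negativity of $\Delta_n$ means $w_n,w_{n+1}$ are genuinely complex conjugates with common real part $\bar w_n/2$ and nonzero imaginary part. Note this is exactly the Type~I behaviour flagged for $B_n$ in the discussion of variations of critical points.

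Finally, for the ordering: when $l\le n$ the coordinates are the partial sums $w_j=\sum_{i=1}^j\frac{1}{c(l-i+1)}$ (or with $l-i+1/2$ when $l=n$), and each increment $\frac{1}{c(l-i+1)}>0$ since $i\le l$ forces $l-i+1\ge 1$; hence $0<w_1<\cdots<w_l$ is trivial. When $l\ge n+1$ I would check the chain $0<w_1<\cdots<w_{2n-l}$ from the partial-sum formula with positive increments $\frac{1}{c(l-i)}$ ($i\le 2n-l$ gives $l-i\ge 2l-2n\ge 2>0$), then the jump $w_{2n-l}<w_{2n+1-l}$, the chain $w_{2n+1-l}<\cdots<w_{n-1}$ from the $\bar w_k$ formula, the jump to $\bar w_n/2$, and finally $\bar w_n/2<w_{n+2}<\cdots<w_l$. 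Each of these reduces, after writing $w_k=\tfrac12(\bar w_k-\sqrt{\Delta_k})$ for $2n+1-l\le k\le n-1$ and $w_k=\tfrac12(\bar w_k+\sqrt{\Delta_k})$ for $n+2\le k$ (mirror of the labelling convention), to a comparison of explicit telescoping sums of positive terms; the consecutive differences $\bar w_{k+1}-\bar w_k$ and $\Delta_{k+1}-\Delta_k$ have signs one reads off termwise. The main obstacle — and the only genuinely fiddly point — is this last step: bookkeeping the $\sqrt{\Delta_k}$ corrections so that $w_k<w_{2n+1-k}$ for all $k$ simultaneously nests correctly between the two ``halves'' of the configuration, and in particular verifying $w_{n-1}<\bar w_n/2$, which amounts to showing $\bar w_{n}-\bar w_{n-1}>\sqrt{\Delta_{n-1}}$, i.e. a strict inequality between two explicit positive sums that I expect to follow by comparing them term by term but which requires care with the index ranges in the $\bar w_k$ and $\Delta_k$ formulas.
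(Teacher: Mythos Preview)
Your approach is correct; the paper gives no proof of this lemma at all, merely asserting that ``the following property is clear'' from the explicit formulas in Lemma~\ref{lem:exsol}, so direct verification from those closed forms---exactly as you propose---is the intended route. Your argument for the signs of $\Delta_k$ (the term $j=n$ alone contributes $\tfrac{1}{c(l-n-1)}$ to the sum for $k\le n-1$, while the sum is empty at $k=n$) is clean and correct, and the degeneracy at $l=n+1$ you flag is a genuine wrinkle in the paper's formulas that the paper simply glosses over.
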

If the critical points of ${\bm W}_{c}({\bm w},0,\omega_{1},l)$ are ${\bm w}$, then the critical points of ${\bm W}_{c}({\bm w},z,\omega_{1},l)$ are $z+{\bm w}$. By the previous lemma, it is obvious that $z$ and $\{w_{k}\}_{k\neq n,n+1}$ are on the same horizontal line of ${\bm W}_{c}$ plane except $w_{n}$ and $w_{n+1}$ vertically symmetrical about the center point $\frac{\bar{w}_{n}}{2}$. In sum, we draw the distribution of the critical point near $z$ corresponding to $v_{\lambda^{i}}\in V_{\omega_{1}}$ in different cases in figure \ref{fig:db}.

\begin{figure}
 \centering\includegraphics[width=2.5cm]{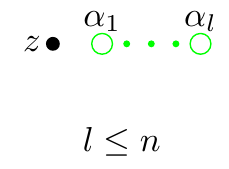}
\centering\includegraphics[width=3cm]{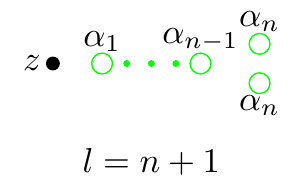}\\
\centering\includegraphics[width=6cm]{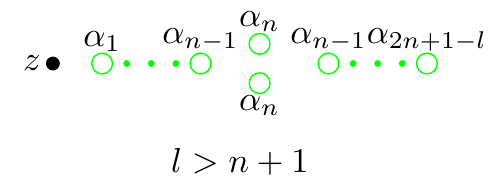}\\
\caption{Coordinates distribution of the $B_{n}$ critical point on ${\bm W}_{c}$ plane near $z$.}
 \label{fig:db}
 \end{figure}

In the following ,we derive monodromy representation for $v_{\lambda^{i}}\otimes v_{\lambda^{j}}\in V_{\omega_{1}}\otimes V_{\omega_{1}}$. it is convenient to consider the case $i+j\neq 2n$ firstly.

i)$i+j\neq 2n\& i\geq j $

There is no wall-crossing, $$\boldsymbol{B}J_{i,j}=q^{-\frac{1}{2}(\lambda^{i},\lambda^{i})}J_{j,i}.$$

ii)$i+j\neq 2n\& i<j \& i\neq n$

By the same method used in the case of $A_{n}$, the variation is of type $II$.
\begin{equation}
\begin{split}
  &\boldsymbol{B}J_{i,j}\\
=&q^{-\frac{1}{2}(\lambda^{i},\lambda^{i})}(J_{j,i}+q^{\frac{1}{2}(\lambda^{i},\lambda^{i}-\lambda^{j})}(q^{-(\lambda^{i},\lambda^{i}-\lambda^{j})}-1)J_{i,j})\\
=&J_{j,i}+(q^{-\frac{1}{2}}-q^{\frac{1}{2}})J_{i,j}.\\
\end{split}
\end{equation}

\begin{figure}
\centering
\centering\includegraphics[width=3cm]{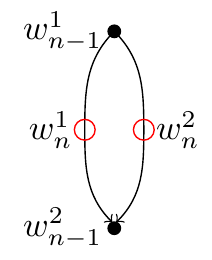}\\
\caption{Thimble with $c=0$ and its critical point $w^{1}_{n}$ and $w^{2}_{n}$.}
 \label{fig:vanf}
 \end{figure}

\begin{figure}
\centering\includegraphics[width=4.5cm]{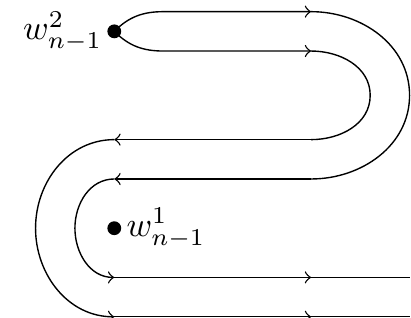}\\
\caption{Homotopic class of the thimble after rotation.}
 \label{fig:vanf2}
 \end{figure}

\begin{figure}
\centering\includegraphics[width=4.5cm]{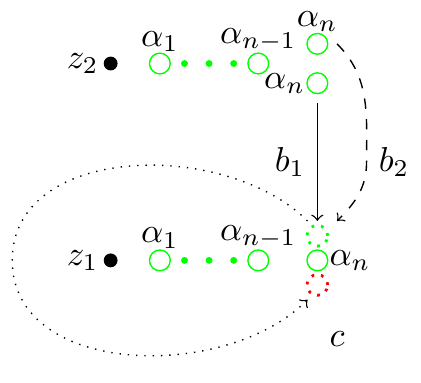}\\
\caption{Type $I$ wall-crossing of $B_{n}$.}
 \label{fig:dbt1}
 \end{figure}

\begin{figure}
\centering\includegraphics[width=5.5cm]{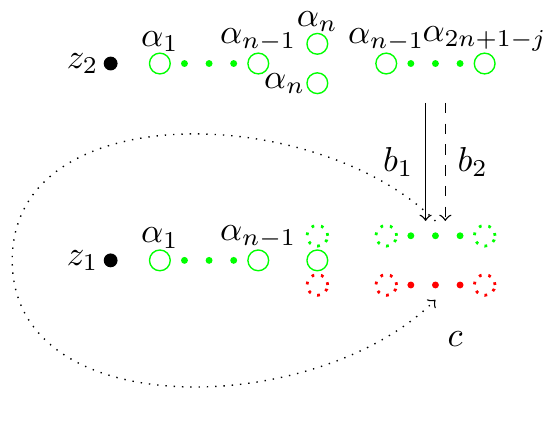}\\
\caption{Combination of type $I$ and $II$.}
 \label{fig:dbt12}
 \end{figure}

iii)$i+j\neq 2n\& i=n\&j=n+1 $

\begin{lemma}\label{rlemma} Let $\bar{w}_{i}=w^{1}_{i}+w^{2}_{i}$, $i=1,2$. Assume that $w^{1}_{1}\neq w^{2}_{1}$, $$\frac{2w^{1}_{1}-\bar{w}_{2}}{(w^{1}_{1})^{2}-\bar{w}_{2}w^{1}_{1}+w^{1}_{2}w^{2}_{2}}=A,\quad\frac{2w^{2}_{1}-\bar{w}_{2}}{(w^{2}_{1})^{2}-\bar{w}_{2}w^{2}_{1}+w^{1}_{2}w^{2}_{2}}=-A$$
and $$A(w^{1}_{1}-w^{2}_{1})\neq 2.$$

Then $$\bar{w}_{2}=\bar{w}_{1},\quad w^{1}_{2}w^{2}_{2}=w^{1}_{1}w^{2}_{1}+A^{-1}(w^{1}_{1}-w^{2}_{1}).$$
\end{lemma}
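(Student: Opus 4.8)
The plan is to treat the two displayed hypotheses as a system in the two unknowns $\bar{w}_2$ and $P:=w^1_2 w^2_2$, regarding $A$, $w^1_1$, $w^2_1$ as given data. Since the two fractions appearing in the hypotheses are well defined, their denominators are nonzero, and I would clear them to obtain the polynomial identities
\begin{align*}
2w^1_1-\bar{w}_2 &= A\big((w^1_1)^2-\bar{w}_2 w^1_1 + P\big),\\
2w^2_1-\bar{w}_2 &= -A\big((w^2_1)^2-\bar{w}_2 w^2_1 + P\big).
\end{align*}
I would also record at the outset that $A\neq 0$: if $A=0$ the right-hand sides vanish, forcing $w^1_1=\bar{w}_2/2=w^2_1$, contradicting the assumption $w^1_1\neq w^2_1$. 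This justifies the appearance of $A^{-1}$ in the conclusion.

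Next I would add the two identities. On the right-hand side the $P$-terms cancel and one is left with $A\big((w^1_1)^2-(w^2_1)^2-\bar{w}_2(w^1_1-w^2_1)\big)=A(w^1_1-w^2_1)(w^1_1+w^2_1-\bar{w}_2)$, while the left-hand side is $2(w^1_1+w^2_1-\bar{w}_2)$. Hence
$$\big(w^1_1+w^2_1-\bar{w}_2\big)\big(2-A(w^1_1-w^2_1)\big)=0.$$
Because $A(w^1_1-w^2_1)\neq 2$ by hypothesis, the second factor is nonzero, so $\bar{w}_2=w^1_1+w^2_1=\bar{w}_1$, which is the first claimed equality.

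Finally I would substitute $\bar{w}_2=w^1_1+w^2_1$ back into the first polynomial identity; the quadratic terms in $w^1_1$ cancel, leaving $w^1_1-w^2_1=A(P-w^1_1 w^2_1)$, and dividing by $A\neq 0$ yields $P=w^1_1 w^2_1+A^{-1}(w^1_1-w^2_1)$, that is, $w^1_2 w^2_2=w^1_1 w^2_1+A^{-1}(w^1_1-w^2_1)$. This completes the proof. I do not expect a genuine obstacle here: the content is elementary algebra. The only points deserving a word of care are verifying $A\neq 0$ (so that both the statement and the final division are meaningful) and observing that the technical hypothesis $A(w^1_1-w^2_1)\neq 2$ is exactly what is needed to discard the spurious factor after adding the two equations, thereby pinning down $\bar{w}_2$ uniquely.
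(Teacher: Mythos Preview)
Your proof is correct and follows essentially the same approach as the paper: clear denominators, add the two equations to obtain the factored identity $(\bar{w}_1-\bar{w}_2)\big(2-A(w^1_1-w^2_1)\big)=0$, use the hypothesis $A(w^1_1-w^2_1)\neq 2$ to conclude $\bar{w}_2=\bar{w}_1$, and substitute back for the second conclusion. The paper's version is more terse (it simply states the factored identity without derivation) and does not explicitly note that $A\neq 0$, so your added justification of that point is a small improvement in completeness.
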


\begin{proof}  The equations imply $$A(w^{1}_{1}-w^{2}_{1})(\bar{w}_{1}-\bar{w}_{2})=2(\bar{w}_{1}-\bar{w}_{2}).$$
Then $A(w^{1}_{1}-w^{2}_{1})\neq 2$ implies $$\bar{w}_{2}=\bar{w}_{1},\quad w^{1}_{2}w^{2}_{2}=w^{1}_{1}w^{2}_{1}+A^{-1}(w^{1}_{1}-w^{2}_{1}).$$
\end{proof}

By this lemma, the following critical point equation with $c=0$ can be solved:
\begin{equation}\label{tsceB1}
  \left\{
    \begin{aligned}
      \frac{1}{w^{1}_{n}-w^{1}_{n-1}}+\frac{1}{w^{1}_{n}-w^{2}_{n-1}}&=\frac{1}{w^{1}_{n}-w^{2}_{n}}\\
      \frac{1}{w^{2}_{n}-w^{1}_{n-1}}+\frac{1}{w^{2}_{n}-w^{2}_{n-1}}&=\frac{1}{w^{2}_{n}-w^{1}_{n}},
    \end{aligned}
  \right.
\end{equation} where $w^{1}_{n},w^{2}_{n}$ are coordinates of $\alpha_{n}$ and $w^{1}_{n-1},w^{2}_{n-1}$ coordinates of $\alpha_{n-1}$.
The solution is $\bar{w}_{n}=\bar{w}_{n-1},\quad
w^{1}_{n}w^{2}_{n}=\frac{(\bar{w}_{n-1})^{2}-w^{1}_{n-1}w^{2}_{n-1}}{3}.
$ If $Re w^{1}_{n-1}=Re w^{2}_{n-1}$, then $\Delta_{n}>0$. The thimble and its critical point are shown in figure \ref{fig:vanf}. When $c\rightarrow +\infty$, the condition $i=n\&j=n+1$ corresponds to the case of $w^{1}_{n},w^{2}_{n}$ tending to $w^{2}_{n-1}$. After rotation of $w^{1}_{n-1}$ and $w^{2}_{n-1}$, the homotopic class of the thimble connecting $w^{2}_{n-1}$ with $\infty$ is shown in figure \ref{fig:vanf2}.
Therefore, as is shown in figure \ref{fig:dbt1}, the variation here is of type $I$ with only one primary root $\alpha_{n}$ moving.

\begin{equation}
\boldsymbol{B}J_{i,j}
=aJ_{j,i}+(c_{1}+c_{2}-b_{1}-b_{2})J_{i,j},
\end{equation}
where $$b_{1}+b_{2}=a\cdot q^{\frac{1}{2}(\lambda^{n-1},\alpha_{n})}(1+q^{-\frac{1}{2}(\alpha_{n},\alpha_{n})})$$ is the sum of phase factors of moving one of two $\alpha_{n}$ to the right of $z_{1}$ and $$c_{1}+c_{2}=(b_{1}+b_{2})\cdot q^{-(\lambda^{n-1},\alpha_{n})+\frac{1}{2}(\alpha_{n},\alpha_{n})}$$
from anti-clockwise rotation of $\alpha_{n}$, $2\pi$ around $\{z_{1}, \alpha_{1}, ... , \alpha_{n-1}\}$ and $\pi$ around $\alpha_{n}$. Minus before $b_{1}$ and $b_{2}$ is from the reverse of the direction of the dimension one sub-thimble connecting $z_{2}$ to $z_{1}$.
Thus, \begin{equation}
\boldsymbol{B}J_{i,j}
=J_{j,i}+(q^{-\frac{1}{2}}-q^{\frac{1}{2}})J_{i,j}.
\end{equation}

iv)$i+j\neq 2n\& i=n\&j>n+1 $

The variation is combination of type $I$ and $II$. As shown in figure \ref{fig:dbt12}, similar to the previous case, there are two possible way of moving $\alpha_{n}$ from $z_{2}$ to $z_{1}$, but $\alpha_{n}$ is now accompanied by $\{\alpha_{n-1}, ..., \alpha_{2n+1-j}\}$ horizontally.
\begin{equation}
\begin{split}
 a&=q^{-\frac{1}{2}(\lambda^{i},\lambda^{i})};\\
b_{1}+b_{2}&=a\cdot q^{\frac{1}{2}(\lambda^{n-1},\alpha_{n})+\frac{1}{2}(\lambda^{n},\lambda^{n+1}-\lambda^{j})}(1+q^{-\frac{1}{2}(\alpha_{n},\alpha_{n})});\\
c_{1}+c_{2}&=(b_{1}+b_{2})\cdot q^{-(\lambda^{n-1},\alpha_{n})+\frac{1}{2}(\alpha_{n},\alpha_{n})-(\lambda^{n},\lambda^{n+1}-\lambda_{j})};\\
\boldsymbol{B}J_{i,j}&=aJ_{j,i}+(c_{1}+c_{2}-b_{1}-b_{2})J_{i,j}\\
&=J_{j,i}+(q^{-\frac{1}{2}}-q^{\frac{1}{2}})J_{i,j}.\\
\end{split}
\end{equation}
Minus is always from reverse of orientation of the dimension one sub-thimble.

In sum, when $i+j\neq 2n$,
\begin{equation}
  \boldsymbol{B}J_{i,j}=\left\{
             \begin{array}{ll}
               q^{-\frac{1}{2}}J_{j,i}, & \hbox{$i+j\neq 2n\&i=j$;} \\
               J_{j,i}, & \hbox{$i+j\neq 2n\&i> j$;} \\
               J_{j,i}+(q^{-\frac{1}{2}}-q^{\frac{1}{2}})J_{i,j}, & \hbox{$i+j\neq 2n\&i<j$.}
             \end{array}
           \right.
\end{equation}

When $i+j=2n$, the situation is more interesting.
The critical point equation of two singularities $z_{1}$ and $z_{2}$ with $c=0$ is as following:
\begin{equation}\label{equ:BNSB}
  \left\{
    \begin{aligned}
      0&=\frac{-1}{w^{1}_{1}-z_{1}}+\frac{-1}{w^{1}_{1}-z_{2}}+\frac{2}{w^{1}_{1}-w^{2}_{1}}+\frac{-1}{w^{1}_{1}-w^{1}_{2}}+\frac{-1}{w^{1}_{1}-w^{2}_{2}}\\
      0&=\frac{-1}{w^{2}_{1}-z_{1}}+\frac{-1}{w^{2}_{1}-z_{2}}+\frac{2}{w^{2}_{1}-w^{1}_{1}}+\frac{-1}{w^{2}_{1}-w^{1}_{2}}+\frac{-1}{w^{2}_{1}-w^{2}_{2}}\\
      ...   \\
      0&=\frac{2}{w^{1}_{k}-w^{2}_{k}}+\frac{-1}{w^{1}_{k}-w^{1}_{k-1}}+\frac{-1}{w^{1}_{k}-w^{2}_{k-1}}+\frac{-1}{w^{1}_{k}-w^{1}_{k+1}}+\frac{-1}{w^{1}_{k}-w^{2}_{k+1}}\\
      0&=\frac{2}{w^{2}_{k}-w^{1}_{k}}+\frac{-1}{w^{2}_{k}-w^{1}_{k-1}}+\frac{-1}{w^{2}_{k}-w^{2}_{k-1}}+\frac{-1}{w^{2}_{k}-w^{1}_{k+1}}+\frac{-1}{w^{2}_{k}-w^{2}_{k+1}}\\
       ...  \\
      0&=\frac{1}{w^{1}_{n}-w^{2}_{n}}+\frac{-1}{w^{1}_{n}-w^{1}_{n-1}}+\frac{-1}{w^{1}_{n}-w^{2}_{n-1}}\\
      0&=\frac{1}{w^{2}_{n}-w^{1}_{n}}+\frac{-1}{w^{2}_{n}-w^{1}_{n-1}}+\frac{-1}{w^{2}_{n}-w^{2}_{n-1}},\\
    \end{aligned}
  \right.
\end{equation}
where $2\leq k\leq n-1$ and $w^{1}_{i},w^{2}_{i}$ are coordinates of $\alpha_{i}$, $1\leq i\leq n$.
Summing up all the equations except the first pair gives   $$\frac{1}{w^{1}_{1}-w^{1}_{2}}+\frac{1}{w^{1}_{1}-w^{2}_{2}}+\frac{1}{w^{2}_{1}-w^{1}_{2}}+\frac{1}{w^{2}_{1}-w^{2}_{2}}=0.$$ Let $A=+\frac{2}{w^{1}_{1}-w^{2}_{1}}+\frac{-1}{w^{1}_{1}-w^{1}_{2}}+\frac{-1}{w^{1}_{1}-w^{2}_{2}}$, then $\frac{2}{w^{2}_{1}-w^{1}_{1}}+\frac{-1}{w^{2}_{1}-w^{1}_{2}}+\frac{-1}{w^{2}_{1}-w^{2}_{2}}=-A$. By using lemma \ref{rlemma} inductively, $\bar{w}_{1}=...=\bar{w}_{n}=z_{1}+z_{2}$. Then substituting $\bar{w}_{l}$ into \eqref{equ:BNSB}, we have the following solution: $$\bar{w}_{1}=...=\bar{w}_{n}=z_{1}+z_{2},\quad w^{1}_{l}w^{2}_{l}=z_{1}z_{2}+\frac{(z_{1}-z_{2})^{2}l(2n-l)}{4n^{2}-1},1\leq l\leq n.$$
It is clear that $\Delta_{k}<0$, $k=1,...,n-1$ and $\Delta_{n}>0$.
The distribution of the coordinates of critical point is shown in figure \ref{fig:nsbb}. Similar to the case of one singularity in figure \ref{fig:db},
the coordinates are symmetric on the same line connecting $z_{1}$ and $z_{2}$ except $w^{1}_{n}$ and $w^{2}_{n}$ vertically in the middle. Assume $Im w^{1}_{i}>Im w^{2}_{i},\quad i=1,...,n-1$. The different imaginary parts of $\{w^{1}_{i},w^{2}_{i}\}$ and $z_{1}, z_{2}$  give the coordinates a partial order: $$z_{1}<w^{1}_{1}<...<w^{1}_{n-1}<w^{1}_{n},w^{2}_{n}<w^{2}_{n-1}<...<w^{2}_{1}<z_{2}.$$
Because of the existence of the thimble above, when $c\rightarrow +\infty$, there is a wall-crossing for every $i\neq 0$ during the transformation $T$ and its homotopic class will keep this order.

To be precise, we redefine the index $i$ and $j$. Let $i$ be the number of primary roots near $z_{2}$ and $j$ be the number of primary roots crossing wall. Assume that
\begin{equation}\label{equ:cof}
\boldsymbol{B}J_{a,b}=\sum_{c,d}\boldsymbol{B}^{c,d}_{a,b}J_{c,d},
\end{equation}
 then from the properties of wall-crossing,
\begin{equation}\label{equ:Bwc}
\boldsymbol{B}J_{2n-i,i}=\sum_{j=0}^{i}\boldsymbol{B}^{i-j,2n-i+j}_{2n-i,i}J_{i-j,2n-i+j}.
\end{equation}
Phase factor from the rotation gives $$\boldsymbol{B}^{i,2n-i}_{2n-i,i}=q^{-\frac{1}{2}(\lambda^{i},\lambda^{2n-i})}.$$
The coefficients $\boldsymbol{B}_{2n-i,i}^{i-j,2n-i+j} (j>0)$ of wall-crossing formula can be computed as in the following cases.

i) $0<j\leq i\leq n-1$

Keeping the order of the coordinates, moving of $\{\alpha_{i-j+1}, ..., \alpha_{i}\}$ from $z_{2}$ to $z_{1}$ is accompanied by the self clockwise rotation of $\pi$, thus the variation is of type $III$.
\begin{equation}
  \boldsymbol{B}J_{2n-i,i}
=aJ_{i,2n-i}+((-1)^{j}b+(-1)^{j+1}c)J_{i-j,2n-i+j}+ ... .
\end{equation}
$a=q^{-\frac{1}{2}(\lambda^{i},\lambda^{2n-i})}$ is from the rotation of $z_{1}$ and $z_{2}$. $$b=a\cdot q^{\frac{1}{4}[(\lambda^{i-j},\lambda^{i-j}-\lambda^{i})+(\lambda^{2n-i},\lambda^{i-j}-\lambda^{i})]+\frac{1}{2}(j-1)}$$ is from moving of $\{\alpha_{i-j+1}, ..., \alpha_{i}\}$ from $z_{2}$ to $z_{1}$.
$c=b\cdot q^{-(\lambda^{2n-i},\lambda^{i-j}-\lambda^{i})}$ is from the anti-clockwise rotation around $z_{1}$ and $2n-i$ primary roots near $z_{1}$.

Thus,
\begin{equation}\label{eqbr1}
\boldsymbol{B}_{2n-i,i}^{i-j,2n-i+j}=(-1)^{j}(b-c)
=(-1)^{j}q^{\frac{j}{2}}(q^{\frac{1}{2}}-q^{-\frac{1}{2}}).
\end{equation}

\begin{figure}
\centering\includegraphics[width=4.5cm]{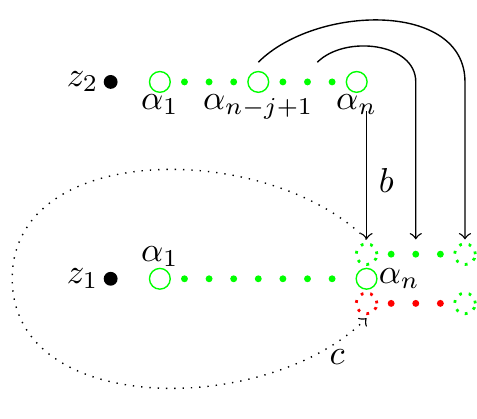}\\
\caption{$0<j\leq i=n$.}
 \label{fig:cw-b1}
 \end{figure}
ii)$0<j\leq i=n$

\begin{equation}
\boldsymbol{B}J_{n,n}
=aJ_{n,n}+((-1)^{j}b+(-1)^{j+1}c)J_{n-j,n+j}+ ... .
\end{equation}
$a=q^{-\frac{1}{2}(\lambda^{n},\lambda^{2n-n})}$ is from the rotation of $z_{1}$ and $z_{2}$.
$$b=a\cdot q^{\frac{1}{4}[(\lambda^{n-j},\lambda^{n-j}-\lambda^{n})+(\lambda^{2n-n},\lambda^{n-j}-\lambda^{n})]+\frac{1}{4}(\alpha_{n},\alpha_{n})+\frac{1}{2}(j-1)}$$ is from moving of $\{\alpha_{n-j+1}, ..., \alpha_{n}\}$ from $z_{2}$ to the position indicated by the green dotted circles as shown in figure \ref{fig:cw-b1}. Additional anti-clockwise rotation to the position indicated by the red dotted circles gives $$c=b\cdot q^{-(\lambda^{2n-n},\lambda^{n-j}-\lambda^{n})-\frac{1}{2}(\alpha_{n},\alpha_{n})}.$$
This variation is different from type $III$ by the position of two $\alpha_{n}$. The variation of $\alpha_{n}$ is of type $I$. In this sense, we call the variation a combination of type $I$ and $III$.
Thus,
\begin{equation}
\boldsymbol{B}_{n,n}^{n-j,n+j}=(-1)^{j}(b-c)
=(-1)^{j}q^{\frac{j}{2}}(1-q^{-\frac{1}{2}}).
\end{equation}

\begin{figure}
\centering\includegraphics[width=6cm]{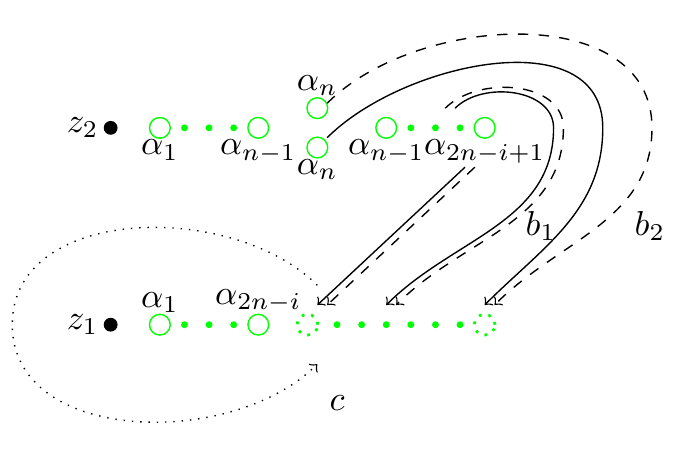}\\
\caption{ $i>n\& i-j=n$.}
 \label{fig:cw-b2}
 \end{figure}
iii) $i>n\& i-j=n$
As shown in figure \ref{fig:cw-b2}, because of two $\alpha_{n}$, there are two possible cases of moving. They give coefficients $b_{1}$ and $b_{2}$.
\begin{equation}
  \boldsymbol{B}J_{2n-i,i}
=aJ_{i,2n-i}+((-1)^{j}(b_{1}+b_{2})+(-1)^{j+1}(c_{1}+c_{2}))J_{n,n}+ ....
\end{equation}
$b_{1}+b_{2}=a\cdot q^{\frac{1}{4}[(\lambda^{i-j},\lambda^{i-j}-\lambda^{i})+(\lambda^{2n-i},\lambda^{i-j}-\lambda^{i})]+\frac{1}{2}(j-1)}(q^{\frac{1}{4}(\alpha_{n},\alpha_{n})}+q^{-\frac{1}{4}(\alpha_{n},\alpha_{n})})$.
Additional anti-clockwise rotation around $z_{1}$ and other $2n-i$  primary roots gives
$c_{1}+c_{2}=(b_{1}+b_{2})\cdot q^{-(\lambda^{2n-i},\lambda^{i-j}-\lambda^{i})}$. The variation is also a combination of type $I$ and $III$. Thus,
\begin{equation}
\boldsymbol{B}_{2n-i,i}^{n,n}
=(-1)^{j}q^{\frac{j}{2}+\frac{1}{4}}(q^{\frac{1}{4}}+q^{-\frac{1}{4}})(1-q^{-1}).
\end{equation}

iv) $i>n\& i-j>n$

The variation is of type $III$. $\boldsymbol{B}_{2n-i,i}^{i-j,2n-i+j}$ is exactly the same as \eqref{eqbr1}.
$$\boldsymbol{B}_{2n-i,i}^{i-j,2n-i+j}
=(-1)^{j}q^{\frac{j}{2}}(q^{\frac{1}{2}}-q^{-\frac{1}{2}}).
$$

v) $i>n\& i-j<n\&i-j\neq 2n-i$

The variation is of type $III$,
$b=a\cdot q^{\frac{1}{4}[(\lambda^{i-j},\lambda^{i-j}-\lambda^{i})+(\lambda^{2n-i},\lambda^{i-j}-\lambda^{i})]+\frac{1}{2}(j-2)}$ and
$c=b\cdot q^{-(\lambda^{2n-i},\lambda^{i-j}-\lambda^{i})}$.
\begin{equation}
\boldsymbol{B}_{2n-i,i}^{i-j,2n-i+j}
=(-1)^{j}q^{\frac{j}{2}}(1-q^{-1}).
\end{equation}

vi) $i>n\& i-j=2n-i\& i\neq n+1$
The variation is of type $IV$. $i-j=2n-i$ means that the moving primary roots are of double multiplicity and symmetrically distributed. Same as type $III$, moving of them from $z_{2}$ to $z_{1}$ is accompanied by the self clockwise rotation of $\pi$.
But because of double multiplicity and symmetric distribution, the order of these primary roots will give three additional terms in the wall-crossing formula, as is shown in figure \ref{fig:t4}.
\begin{equation}
 \begin{split}
\boldsymbol{B}J_{2n-i,i}
&=aJ_{i,2n-i}+((-1)^{j}b+(-1)^{j+1}c_{1}-c_{2}+c_{3})J_{2n-i,i}+ ...;\\
b&=a\cdot q^{\frac{1}{4}[(\lambda^{i-j},\lambda^{i-j}-\lambda^{i})+(\lambda^{2n-i},\lambda^{i-j}-\lambda^{i})]+n-i+j-\frac{3}{2}};\\ c_{1}&=b\cdot q^{-(\lambda^{2n-i},\lambda^{2n-i}-\lambda^{2n-i+1})};\\
c_{2}&=c_{3}\cdot q^{-(\lambda^{2n-i},\lambda^{2n-i}-\lambda^{2n-i+1})};\\
c_{3}&=a\cdot q^{\frac{1}{4}[(\lambda^{i-j},\lambda^{i-j}-\lambda^{i})+(\lambda^{2n-i},\lambda^{i-j}-\lambda^{i})]-(\lambda^{2n-i},\lambda^{i-j}-\lambda^{i})};\\
\boldsymbol{B}_{2n-i,i}^{2n-i,i}
&=((-1)^{j}q^{n-i+j-\frac{1}{2}}-1)(q^{\frac{1}{2}}-q^{-\frac{1}{2}}).\\
\end{split}
\end{equation}

vii) $i=n+1\& j=2$

The variation is of type $I$.
\begin{equation}
 \begin{split}
\boldsymbol{B}J_{n-1,n+1}=&aJ_{n+1,n-1}+((-1)^{2}b+(-1)^{3}(c_{1}+c_{2})+(-1)^{4}c_{3})J_{n-1,n+1}+ ...;\\
b=&a\cdot q^{\frac{1}{4}[(\lambda^{i-j},\lambda^{i-j}-\lambda^{i})+(\lambda^{2n-i},\lambda^{i-j}-\lambda^{i})]-\frac{1}{2}};\\
c_{1}+c_{2}=&b\cdot q^{-(\lambda^{2n-i},\lambda^{2n-i}-\lambda^{2n-i+1})}(1+q^{\frac{1}{2}(\alpha_{n},\alpha_{n})});\\ c_{3}=&b\cdot q^{-2(\lambda^{2n-i},\lambda^{2n-i}-\lambda^{2n-i+1})+\frac{1}{2}(\alpha_{n},\alpha_{n})};\\ \boldsymbol{B}_{n-1,n+1}^{n-1,n+1}
=&q-1-q^{\frac{1}{2}}+q^{-\frac{1}{2}}.\\
\end{split}
\end{equation}

In summarization,

\begin{equation}\label{Bn1}
\begin{split}
  &\boldsymbol{B}_{2n-i,i}^{i-j,2n-i+j}\\
=&\left\{
                            \begin{array}{ll}
                             (-1)^{j}q^{\frac{j}{2}}(q^{\frac{1}{2}}-q^{-\frac{1}{2}}) , & \hbox{$i>n\& i-j>n$;} \\
                                                                                         & \hbox{$0<j\leq i\leq n-1$;} \\
                              (-1)^{j}q^{\frac{j}{2}}(1-q^{-\frac{1}{2}}), & \hbox{$0<j\leq i= n$;} \\
                              (-1)^{j}q^{\frac{j}{2}+\frac{1}{4}}(q^{\frac{1}{4}}+q^{-\frac{1}{4}})(1-q^{-1}), & \hbox{$i>n\& i-j=n$;} \\
                              (-1)^{j}q^{\frac{j-1}{2}}(q^{\frac{1}{2}}-q^{-\frac{1}{2}}), & \hbox{$i>n\& i-j<n\&i-j\neq2n-i$;} \\
                              (q^{\frac{1}{2}}-q^{-\frac{1}{2}})((-1)^{j}q^{n-i+j-\frac{1}{2}}-1), & \hbox{$i>n\& i-j=2n-i$.}
                            \end{array}
                          \right.
\end{split}
\end{equation}

Monodromy representation is as following:

\begin{equation}\label{Bn2}
\begin{split}
  &\boldsymbol{B}_{2n-a,a}^{2n-b,b}\\
=&\left\{
                        \begin{array}{ll}
                          (-1)^{a+n}q^{\frac{a-n}{2}}(q^{\frac{1}{2}}-q^{-\frac{1}{2}})(1+q^{-\frac{1}{2}}), & \hbox{$a>n\&b=n$;} \\
                          (-1)^{a+b}(q^{\frac{1}{2}}-q^{-\frac{1}{2}})q^{-n+\frac{a+b}{2}}, & \hbox{$a<n\&b>n \hbox{  or  } a>n\&b<n$;} \\
                          (q^{\frac{1}{2}}-q^{-\frac{1}{2}})((-1)^{a+b}q^{-n+\frac{a+b-1}{2}}-\delta_{a,b}), & \hbox{$a>n\&b>n$;} \\
                         (-1)^{n+b}q^{\frac{b-n}{2}}(1-q^{-\frac{1}{2}}) , & \hbox{$a=n\&b>n$.}
                        \end{array}
                      \right.
\end{split}
\end{equation}

We have given the description for four types of variations in detail. In the following cases of $C_{n}$ and $D_{n}$, the methods for proofs are similar, so we omit them except for some extraordinary cases.

\subsubsection{$C_{n}$}\label{subsubsct:C}
Denote the weights of the fundamental representation of $C_{n}$ Lie algebra by $$\lambda^{i}=\left\{
                                                                                        \begin{array}{ll}
                                                                                          \lambda-\sum_{j=1}^{i}\alpha_{j}, & \hbox{$i\leq n$;} \\
                                                                                          \lambda-\sum_{j=1}^{n}\alpha_{j}-\sum_{j=2n-i}^{n}\alpha_{j}, & \hbox{$i> n$.}
                                                                                        \end{array}
                                                                                      \right.$$
The inner products of them are:
\begin{equation}
(\lambda^{s},\lambda^{t})=\left\{
                            \begin{array}{ll}
                              \frac{1}{2}, & \hbox{$s=t$;} \\
                              0, & \hbox{$s+t\neq2n-1,s\neq t$;} \\
                              -\frac{1}{2}, & \hbox{$s+t=2n-1$,}
                            \end{array}
                          \right.
\end{equation}where $s,t=0,1,..,2n-1$.

For any weight vector $v_{\lambda^{l}}\in V_{\omega_{1}}$, Yang-Yang function ${\bm W}_{c}({\bm w},z,\omega_{1},l)$ corresponding to it is as following:
\begin{equation}
\begin{split}
{\bm W}_{c}({\bm w},z,\omega_{1},l)
=\sum _{j=1}^{l}(\alpha_{i_{j}},\omega_{1})\log \left( w_{j}-z\right) -&\sum_{1\leq j< k\leq l}(\alpha_{i_{j}},\alpha_{i_{k}}) \log \left( w_{j}-w_{k}\right)\\
-&c(\sum_{j=1}^{l}(\alpha_{i_{j}},\rho)w_{j}-(\omega_{1},\rho)z),
\end{split}
\end{equation}where $$i_{j}=\left\{
                                      \begin{array}{ll}
                                        j, & \hbox{$j\leq n$;} \\
                                        2n-j, & \hbox{$n<j\leq 2n-1$.}
                                      \end{array}
                                    \right.
$$
By this notation, when $l\geq n+1$, $\{w_{k}, w_{2n-k}\}_{2n-l\leq k\leq n-1}$ are pairs of symmetric coordinates of $\alpha_{k}$ in the function. Define $\bar{w}_{k}=w_{k}+w_{2n-k}$ and $\Delta_{k}=(w_{k}-w_{2n-k})^{2}$, $2n-l\leq k\leq n-1$.

\begin{lemma} For the fundamental representation $V_{\omega_{1}}$ of $C_{n}$ Lie algebra,  the solutions of the critical point equation \eqref{CE2} of the corresponding Yang-Yang functions ${\bm W}_{c}({\bm w},0,\omega_{1},l)$ are as following:

When $l<n$, $$w_{j}=\sum_{i=1}^{j}\frac{1}{c(l-i+1)}\quad j=1,...,l.$$

When $l=n$, $$w_{j}=\sum_{i=1}^{j}\frac{1}{c(n-i+2)}\quad j=1,...,n-1,$$
$$w_{n}=\frac{1}{c}+\sum_{i=1}^{n-1}\frac{1}{c(n-i+2)}.$$

When $l\geq n+1$, $$w_{k}=\sum_{i=1}^{j}\frac{1}{c(l+2-i)}$$ for $k=1,...,2n-l-1,$
$$w_{n}=\sum_{i=1}^{2n-l-1}\frac{1}{c(l+2-i)}+\sum_{i=2n-l}^{n}\frac{1}{c(l+1-i)}.$$
For $2n-l\leq k\leq n-1$,
\begin{equation}
\begin{split}
\bar{w}_{k}=&\sum_{i=1}^{2n-l-1}\frac{2}{c(l-i+2)}+\sum_{i=2n-l}^{k-l}\frac{2}{c(l-i+1)}+\sum_{i=k}^{n}\frac{1}{c(l-i+1)}\\&+\sum_{i=k}^{n}\frac{1}{c(l-2n+i+1)},
\end{split}\end{equation}

\begin{equation}
\begin{split}
\Delta_{k}=&[\sum_{i=k}^{n}\frac{1}{c(l-i+1)}+\sum_{i=k}^{n}\frac{1}{c(l-2n+i+1)}]\\
&\times[\sum_{i=k}^{n}\frac{1}{c(l-i+1)}+\sum_{i=k}^{n}\frac{1}{c(l-2n+i+1)}-\frac{4}{c(2l-2n+1)}].
\end{split}\end{equation}
\end{lemma}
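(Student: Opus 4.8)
The plan is to proceed exactly as in the proof of Lemma~\ref{lem:exsol} for $B_{n}$, deriving the $C_{n}$ solutions as a degeneration of the two–parameter solutions of \cite{LMV16}. Recall first that for ${\bm z}=(0,1)$ and ${\bm\lambda}=(\lambda,\omega_{1})$ with $\lambda\in P^{+}$ admissible for ${\bm l}$, the critical points $w_{1}(\lambda),\dots,w_{l}(\lambda)$ of ${\bm W}({\bm w},{\bm z},{\bm\lambda},{\bm l})$ for the fundamental representation of $C_{n}$ are written down explicitly in \cite{LMV16}. By the identity \eqref{lpt}, with the extra point $z_{\infty}$ carrying the weight $\lambda_{\infty}=c\rho$, the one–parameter function ${\bm W}_{c}({\bm w},0,\omega_{1},l)$ is recovered as the limit of ${\bm W}({\bm w},(0,\mu),(\mu c\rho,\omega_{1}),{\bm l})$ as $\mu\to+\infty$; since the critical point equation of ${\bm W}$ is invariant under the simultaneous dilation $(w_{j},z_{a})\mapsto(\mu w_{j},\mu z_{a})$, the critical points of ${\bm W}_{c}({\bm w},0,\omega_{1},l)$ are $\tilde w_{j}=\lim_{\mu\to+\infty}\mu\,(1-w_{j}(\mu c\rho))$. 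Substituting the \cite{LMV16} expressions and expanding each factor to first order in $1/\mu$ reduces the computation to a finite sum of geometric–type terms, and collecting them gives the asserted formulas for $w_{j}$ in the ranges $l<n$ and $l=n$, and for $w_{k}$, $\bar w_{k}=w_{k}+w_{2n-k}$, $\Delta_{k}=(w_{k}-w_{2n-k})^{2}$ in the range $l\ge n+1$.

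An independent confirmation is to plug the claimed $\tilde w_{j}$ directly into the critical point equation \eqref{CE2} with $z=0$, $\lambda=\omega_{1}$, i.e. $\frac{(\alpha_{i_{j}},\omega_{1})}{w_{j}}=\sum_{s\ne j}\frac{(\alpha_{i_{j}},\alpha_{i_{s}})}{w_{j}-w_{s}}+c(\rho,\alpha_{i_{j}})$. For the chain coordinates $w_{1},\dots,w_{2n-l-1}$ and for $w_{n}$ this is the telescoping partial–fraction identity already used for $A_{n}$ in \eqref{ceA}. For a symmetric pair $\{w_{k},w_{2n-k}\}$ one adds the two equations attached to $w_{k}$ and $w_{2n-k}$ and also forms their difference divided by $w_{k}-w_{2n-k}$; this turns the pair of equations into one relation for $\bar w_{k}$ and one relation for the product $w_{k}w_{2n-k}$ (equivalently for $\Delta_{k}$), each of which telescopes against the neighbouring pairs and against the weight term $c(\rho,\alpha_{k})$.

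The delicate point, and the one I expect to be the main obstacle, is the bookkeeping at the long simple root $\alpha_{n}$ of $C_{n}$, whose single coordinate $w_{n}$ is \emph{not} paired and sits between the two symmetric halves $\{w_{1},\dots,w_{n-1}\}$ and $\{w_{n+1},\dots,w_{l}\}$. Because $(\alpha_{n-1},\alpha_{n})\ne 0$ couples $w_{n}$ simultaneously to $w_{n-1}$ and to $w_{n+1}$ (the two coordinates of the symmetric pair attached to $\alpha_{n-1}$), one must check that the $\mu\to+\infty$ limit is compatible with the $C_{n}$ folding: the two $C_{n}$–symmetric branches of the \cite{LMV16} solution must collapse onto a single chain emanating from $z=0$, with $w_{n}$ emerging as the unique unpaired limit. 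Once this compatibility is in place, the formula for $w_{n}$ and those for the adjacent $\bar w_{n-1}$, $\Delta_{n-1}$ follow, the cases $l<n$ and $l=n$ are immediate, and the sign of $\Delta_{k}$ — needed to fix the relative position of $w_{k}$ and $w_{2n-k}$ in the later wall–crossing analysis — is read off from the two bracketed factors in the displayed expression for $\Delta_{k}$.
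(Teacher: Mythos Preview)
Your approach is correct and is essentially the same as the paper's: the paper does not write out a separate proof for the $C_{n}$ lemma, instead remarking that the method of Lemma~\ref{lem:exsol} (taking the $\mu\to+\infty$ limit of the explicit two-parameter solutions from \cite{LMV16} via $\tilde w_{j}=\lim_{\mu\to+\infty}\mu(1-w_{j}(\mu c\rho))$) carries over verbatim. Your additional direct-verification paragraph and the discussion of the unpaired long-root coordinate $w_{n}$ go somewhat beyond what the paper records, but they are consistent with, and a useful expansion of, the same argument.
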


\begin{lemma} For any $2n-l\leq k\leq n-1$, $\Delta_{k}>0$. Assume $w_{k}<w_{2n-k}$ for $k=2n-l,...,n-1$, then for any $l>0$, the coordinates of critical solutions satisfy the following order:
$$0<w_{1}<w_{2}<...<w_{l}.$$

\end{lemma}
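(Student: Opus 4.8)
The plan is to derive both assertions by direct computation from the explicit solutions recorded in the preceding lemma, distinguishing the regimes $l<n$, $l=n$ and $l\geq n+1$, and using throughout that $c>0$. For the positivity of $\Delta_k$ I would start from the factorization
\[
\Delta_k=S_k\,(S_k-T_l),\qquad S_k=\sum_{i=k}^{n}\frac{1}{c(l-i+1)}+\sum_{i=k}^{n}\frac{1}{c(l-2n+i+1)},\qquad T_l=\frac{4}{c(2l-2n+1)},
\]
which is just a rewriting of the formula for $\Delta_k$. In the range $2n-l\leq k\leq n-1$ each denominator occurring in $S_k$ is a positive integer not exceeding $2l-2n+1$ (for the first sum because $l-n+1\leq l-i+1\leq l-k+1\leq 2l-2n+1$, for the second because $1\leq l-2n+k+1\leq l-2n+i+1\leq l-n+1<2l-2n+1$), so each of the $2(n-k+1)$ summands of $S_k$ exceeds $\tfrac{1}{c(2l-2n+1)}$; since $k\leq n-1$ gives $2(n-k+1)\geq 4$, this yields $S_k>T_l>0$ and hence $\Delta_k>0$. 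As $\bar w_k=w_k+w_{2n-k}$ is real by its explicit formula and $w_k-w_{2n-k}=\pm\sqrt{\Delta_k}$ is then real, the $w_k$ are genuine real numbers, and the normalization $w_k<w_{2n-k}$ in the statement means $w_k=\tfrac12(\bar w_k-\sqrt{\Delta_k})$, $w_{2n-k}=\tfrac12(\bar w_k+\sqrt{\Delta_k})$ with $\sqrt{\Delta_k}>0$.

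For the ordering, when $l\leq n$ the $w_j$ are partial sums of positive reciprocals, so $w_1>0$ and $w_{j+1}-w_j>0$ are immediate (with $w_n-w_{n-1}=\tfrac1c>0$ in the borderline case $l=n$). When $l\geq n+1$ I would split the chain $0<w_1<\cdots<w_l$ into the unpaired head $w_1<\cdots<w_{2n-l-1}$, the left block $w_{2n-l}<\cdots<w_{n-1}$, the middle coordinate $w_n$, and the right block $w_{n+1}<\cdots<w_l$, which is precisely the list of partners $w_{2n-k}$ for $k=n-1,n-2,\dots,2n-l$ (when $l=2n-1$ the head is empty and $w_1=w_{2n-l}$ is itself a left-block coordinate). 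Three pieces are short. Positivity $w_k>0$ for a paired index holds because $\bar w_k\geq S_k$ (the extra terms in $\bar w_k$ being positive) forces $\bar w_k^{\,2}\geq S_k^{2}>S_k(S_k-T_l)=\Delta_k$. The head is a strictly increasing partial sum. The right block is strictly increasing because both $\bar w_k$ and $\Delta_k=S_k(S_k-T_l)$ are strictly decreasing in $k$ (each built from positive quantities that lose a term as $k$ increases), so $w_{2n-k}=\tfrac12(\bar w_k+\sqrt{\Delta_k})$ strictly decreases in $k$. For the left-block step $w_k<w_{k+1}$ I would rewrite the needed inequality as $\bar w_k-\bar w_{k+1}<\sqrt{\Delta_k}-\sqrt{\Delta_{k+1}}$, compute from the telescoping structure that $\bar w_k-\bar w_{k+1}=\tfrac{2(n-k)}{cpP}$ and $\Delta_k-\Delta_{k+1}=\tfrac{2(l-n+1)}{cpP}(S_k+S_{k+1}-T_l)$ with $p=l-2n+k+1$, $P=l-k+1$, and then use $\sqrt{\Delta_j}<S_j-\tfrac{T_l}{2}$ to obtain $\sqrt{\Delta_k}-\sqrt{\Delta_{k+1}}=\tfrac{\Delta_k-\Delta_{k+1}}{\sqrt{\Delta_k}+\sqrt{\Delta_{k+1}}}>\tfrac{2(l-n+1)}{cpP}>\tfrac{2(n-k)}{cpP}$, the last step because $n-k\leq l-n<l-n+1$ in the paired range.

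What remains are the two transition inequalities, $w_{2n-l-1}<w_{2n-l}$ (needed only when $2n-l-1\geq1$) and $w_{n-1}<w_n<w_{n+1}$. Each I would settle by moving the lone square root to one side and squaring: the first is equivalent to $2w_{2n-l-1}<\bar w_{2n-l}$ together with $(\bar w_{2n-l}-2w_{2n-l-1})^{2}>\Delta_{2n-l}$, the second to $(2w_n-\bar w_{n-1})^{2}<\Delta_{n-1}$. I expect this to be the main obstacle: $w_n$ and the head coordinates carry denominators of the form $l+1-i$ and $l+2-i$, whereas $\bar w_k$ and $\Delta_k$ carry $l-i+1$, $l-i+2$ and $l-2n+i+1$, so expressing $2w_n-\bar w_{n-1}$ (resp.\ $\bar w_{2n-l}-2w_{2n-l-1}$) as a single explicit sum of reciprocals and then checking the squared inequality termwise is delicate bookkeeping, even though no genuinely new idea is needed. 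An alternative that bypasses the square roots is to argue directly from the rational critical-point equation \eqref{CE2} with $z=0$, $c>0$: once $\Delta_k>0$ has forced all coordinates onto the real axis, the ordering should follow from the interlacing of the real simple poles of the left-hand side of \eqref{CE2} between successive coordinates, the only extra checks being that no coordinate escapes to the left of $0$ (precluded by $(\rho,\alpha_{i_j})>0$) and that consecutive coordinates do not collide (precluded by $\Delta_k>0$); the same scheme would presumably cover the $B_n$ and $D_n$ lemmas uniformly.
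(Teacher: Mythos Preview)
The paper does not actually prove this lemma: it is stated immediately after the explicit-solution lemma and used without argument (the $B_n$ analogue is introduced with the phrase ``the following property is clear''). So there is no paper proof to compare against; what you have written is already far more than the authors supply.

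Your argument is essentially sound. The factorisation $\Delta_k=S_k(S_k-T_l)$ is correct, and the denominator bounds you give do place every summand of $S_k$ at or above $\tfrac{1}{c(2l-2n+1)}$. One small imprecision: you write that each summand \emph{exceeds} this value, but for $k=2n-l$ and $i=k$ the first sum contributes exactly $\tfrac{1}{c(2l-2n+1)}$. This is harmless, since for $k\leq n-2$ there are at least six summands, while for $k=n-1$ the four denominators are $l-n,\,l-n+1,\,l-n+1,\,l-n+2$ and the pair $\tfrac{1}{l-n}+\tfrac{1}{l-n+2}>\tfrac{2}{l-n+1}$ already gives $S_{n-1}>\tfrac{4}{l-n+1}>\tfrac{4}{2l-2n+1}$. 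Your left-block and right-block monotonicity steps are correct: the identities $\bar w_k-\bar w_{k+1}=\tfrac{2(n-k)}{cpP}$ and $\Delta_k-\Delta_{k+1}=\tfrac{2(l-n+1)}{cpP}(S_k+S_{k+1}-T_l)$ follow from the explicit formulas as you indicate, the bound $\sqrt{\Delta_j}<S_j-\tfrac{T_l}{2}$ is just $(S_j-\tfrac{T_l}{2})^2-\Delta_j=\tfrac{T_l^2}{4}>0$, and $n-k\leq l-n$ holds throughout the paired range $k\geq 2n-l$.

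The one place your write-up stops short is exactly where you say it does: the two transition inequalities $w_{2n-l-1}<w_{2n-l}$ and $(2w_n-\bar w_{n-1})^2<\Delta_{n-1}$ are stated but not verified. These are genuine (if mechanical) computations; the paper's formulas for $w_n$ and $\bar w_k$ contain evident typos (the upper limit ``$k-l$'' should be $k-1$, and ``$\sum_{i=1}^{j}$'' in the head should be $\sum_{i=1}^{k}$), so you will want to rederive them from \eqref{CE2} before squaring. Your alternative, deducing the ordering directly from the real-pole structure of \eqref{CE2} once all $w_j$ are known to be real, is cleaner in spirit and would indeed treat $B_n,C_n,D_n$ uniformly, but as stated it is only a heuristic: you would still need to rule out configurations where two coordinates associated with \emph{orthogonal} roots coincide or swap order, which the equations themselves do not forbid.
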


By this lemma, $z$ and the critical coordinates $\{w_{j}\}_{j=1,...,l}$ of ${\bm W}_{c}({\bm w},z,\omega_{1},l)$ are on the same horizontal line of ${\bm W}$ plane as shown in figure \ref{fig:dc}.

\begin{figure}
 \centering\includegraphics[width=3cm]{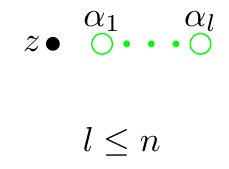}
\centering\includegraphics[width=6cm]{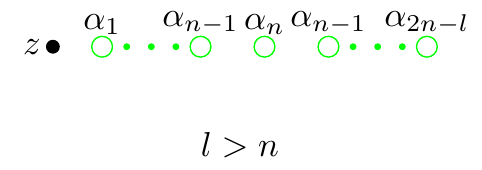}\\
\caption{Coordinates distribution of the $C_{n}$ critical point on ${\bm W}_{c}({\bm w},{\bm z},{\bm \lambda},{\bm l})$ plane near $z$ .}
 \label{fig:dc}
 \end{figure}

To derive monodromy representation, firstly we consider the case $v_{\lambda^{i}}\otimes v_{\lambda^{j}}\in V_{\omega_{1}}\otimes V_{\omega_{1}}, \quad i+j\neq 2n-1$.

i)
When $i+j\neq 2n-1\& i\geq j $, there is no wall-crossing, $$\boldsymbol{B}J_{i,j}=q^{-\frac{1}{2}(\lambda^{i},\lambda^{j})}J_{j,i}.$$

ii)
When $i+j\neq 2n-1\& i<j$,  the variation of critical point in wall-crossing is of type $II$.

\begin{equation}
\begin{split}
  &\boldsymbol{B}J_{i,j}\\
=&q^{-\frac{1}{2}(\lambda^{i},\lambda^{j})}(J_{j,i}+q^{\frac{1}{2}(\lambda^{i},\lambda^{i}-\lambda^{j})}(q^{-(\lambda^{i},\lambda^{i}-\lambda^{j})}-1)J_{i,j})\\
=&J_{j,i}+(q^{-\frac{1}{2}}-q^{\frac{1}{2}})J_{i,j}.\\
\end{split}
\end{equation}

In sum,

\begin{equation}
  \boldsymbol{B}J_{i,j}=\left\{
             \begin{array}{ll}
               q^{-\frac{1}{4}}J_{j,i}, & \hbox{$i+j\neq 2n-1\&i=j$;} \\
               J_{j,i}, & \hbox{$i+j\neq 2n-1\&i> j$;} \\
               J_{j,i}+(q^{-\frac{1}{4}}-q^{\frac{1}{4}})J_{i,j}, & \hbox{$i+j\neq 2n-1\&i<j$.}
             \end{array}
           \right.
\end{equation}

When $i+j=2n-1$, the critical point equation of two singularities $z_{1}$ and $z_{2}$ with $c=0$ is as following:
\begin{equation}\label{equ:CNSB}
  \left\{
    \begin{aligned}
      0&=\frac{-\frac{1}{2}}{w^{1}_{1}-z_{1}}+\frac{-\frac{1}{2}}{w^{1}_{1}-z_{2}}+\frac{1}{w^{1}_{1}-w^{2}_{1}}+\frac{-\frac{1}{2}}{w^{1}_{1}-w^{1}_{2}}+\frac{-\frac{1}{2}}{w^{1}_{1}-w^{2}_{2}}\\
      0&=\frac{-\frac{1}{2}}{w^{2}_{1}-z_{1}}+\frac{-\frac{1}{2}}{w^{2}_{1}-z_{2}}+\frac{1}{w^{2}_{1}-w^{1}_{1}}+\frac{-\frac{1}{2}}{w^{2}_{1}-w^{1}_{2}}+\frac{-\frac{1}{2}}{w^{2}_{1}-w^{2}_{2}}\\
      ...   \\
      0&=\frac{1}{w^{1}_{k}-w^{2}_{k}}+\frac{-\frac{1}{2}}{w^{1}_{k}-w^{1}_{k-1}}+\frac{-\frac{1}{2}}{w^{1}_{k}-w^{2}_{k-1}}+\frac{-\frac{1}{2}}{w^{1}_{k}-w^{1}_{k+1}}+\frac{-\frac{1}{2}}{w^{1}_{k}-w^{2}_{k+1}}\\
      0&=\frac{1}{w^{2}_{k}-w^{1}_{k}}+\frac{-\frac{1}{2}}{w^{2}_{k}-w^{1}_{k-1}}+\frac{-\frac{1}{2}}{w^{2}_{k}-w^{2}_{k-1}}+\frac{-\frac{1}{2}}{w^{2}_{k}-w^{1}_{k+1}}+\frac{-\frac{1}{2}}{w^{2}_{k}-w^{2}_{k+1}}\\
       ...  \\
      0&=\frac{1}{w^{1}_{n-1}-w^{2}_{n-1}}+\frac{-\frac{1}{2}}{w^{1}_{n-1}-w^{1}_{n-2}}+\frac{-\frac{1}{2}}{w^{1}_{n-1}-w^{2}_{n-2}}+\frac{-1}{w^{1}_{n-1}-w_{n}}\\
      0&=\frac{1}{w^{2}_{n-1}-w^{1}_{n-1}}+\frac{-\frac{1}{2}}{w^{2}_{n-1}-w^{1}_{n-2}}+\frac{-\frac{1}{2}}{w^{2}_{n-1}-w^{2}_{n-2}}+\frac{-1}{w^{2}_{n-1}-w_{n}}\\
      0&=\frac{-1}{w_{n}-w^{1}_{n-1}}+\frac{-1}{w_{n}-w^{2}_{n-1}},\\
    \end{aligned}
  \right.
\end{equation}
where $2\leq k\leq n-2$.
By lemma \ref{rlemma}, its solution is as following: $$\bar{w}_{1}=\bar{w}_{k}=\bar{w}_{n-1}=z_{1}+z_{2},\quad \bar{w}_{n}=\frac{z_{1}+z_{2}}{2}$$ and $$w^{1}_{l}w^{2}_{l}=z_{1}z_{2}+\frac{(z_{1}-z_{2})^{2}l(2n+1-l)}{4n(n+1)},\quad 1\leq l\leq n-1.$$ The distribution of the coordinates of critical point is shown in figure \ref{fig:nsbc}. Similar to the case of one singularity in figure \ref{fig:dc}, the coordinates are symmetric on the same line connecting $z_{1}$ and $z_{2}$, which gives coordinates of the thimble and singularities $z_{1}$ and $z_{2}$ an order:$$z_{1}<w^{1}_{1}<...<w^{1}_{n-1}<w_{n}<w^{2}_{n-1}<...<w^{2}_{1}<z_{2}.$$
Because of the existence of the thimble above, there is a wall-crossing whenever $i\neq 0$ and its homotopic class will keep this order.

For $v_{\lambda^{i}}\otimes v_{\lambda^{j}}\in V_{\omega_{1}}\otimes V_{\omega_{1}}, \quad i+j=2n-1$, the coefficients of equation \eqref{equ:cof} are $$\boldsymbol{B}_{2n-1-i,i}^{i,2n-1-i}=q^{-\frac{1}{2}(\lambda^{2n-1-i},\lambda^{i})}=q^{\frac{1}{4}}$$ and
 $\boldsymbol{B}_{2n-1-i,i}^{i-j,2n-1-i+j} (j>0)$ can be computed as in the following cases.

i) $0<j\leq i<n$

The variation is of type $III$.

\begin{equation}\label{eqbr1C}
\begin{split}
b&=a\cdot q^{\frac{1}{4}[(\lambda^{i-j},\lambda^{i-j}-\lambda^{i})+(\lambda^{2n-1-i},\lambda^{i-j}-\lambda^{i})]+\frac{1}{4}(j-1)};\\
c&=b\cdot q^{-(\lambda^{2n-1-i},\lambda^{i-j}-\lambda^{i})};\\
\boldsymbol{B}_{2n-1-i,i}^{i-j,2n-1-i+j}
&=(-1)^{j}q^{\frac{j+1}{4}}(1-q^{-\frac{1}{2}}).\\
\end{split}
\end{equation}

ii)$1<j\leq i=n$

The variation is of type $III$.
\begin{equation}\label{eqbr2C}
\begin{split}
b&=a\cdot q^{\frac{1}{4}[(\lambda^{i-j},\lambda^{i-j}-\lambda^{i})+(\lambda^{2n-1-i},\lambda^{i-j}-\lambda^{i})]+\frac{1}{4}j};\\
c&=b\cdot q^{-(\lambda^{2n-1-i},\lambda^{i-j}-\lambda^{i})};\\
\boldsymbol{B}_{2n-1-i,i}^{i-j,2n-1-i+j}
&=(-1)^{j}q^{\frac{j+2}{4}}(1-q^{-\frac{1}{2}}).\\
\end{split}
\end{equation}

iii) $i=n\& j=1$

The variation is of type $III$.
\begin{equation}
\begin{split}
b&=a\cdot q^{\frac{1}{4}[(\lambda^{i-j},\lambda^{i-j}-\lambda^{i})+(\lambda^{2n-1-i},\lambda^{i-j}-\lambda^{i})]};\\
c&=b\cdot q^{-(\lambda^{n-1},\lambda^{n-1}-\lambda^{n})};\\
\boldsymbol{B}_{n-1,n}^{n-1,n}
&=q^{\frac{3}{4}}(-1+q^{-1}).\\
\end{split}
\end{equation}

iv) $n<i\leq 2n-1\& i-j\geq n$

Same as \eqref{eqbr1C}.

v)  $n<i\leq 2n-1\& i-j=n-1$

Same as \eqref{eqbr2C}.

vi) $i>n\& i-j<n-1\& i-j\neq 2n-1-i$

The variation is of type $III$.
\begin{equation}
\begin{split}
b&=a\cdot q^{\frac{1}{4}[(\lambda^{i-j},\lambda^{i-j}-\lambda^{i})+(\lambda^{2n-1-i},\lambda^{i-j}-\lambda^{i})]+\frac{1}{4}j};\\
c&=b\cdot q^{-(\lambda^{2n-1-i},\lambda^{i-j}-\lambda^{i})};\\
\boldsymbol{B}_{2n-1-i,i}^{i-j,2n-1-i+j}
&=(-1)^{j}q^{\frac{j+2}{4}}(1-q^{-\frac{1}{2}}).\\
\end{split}
\end{equation}

vii) $i>n \& i-j<n-1\&i-j=2n-1-i$

The variation is of type $IV$.
\begin{equation}
\begin{split}
b=&a\cdot q^{\frac{1}{4}[(\lambda^{i-j},\lambda^{i-j}-\lambda^{i})+(\lambda^{2n-1-i},\lambda^{i-j}-\lambda^{i})]+\frac{1}{2}(n-i+j-1)};\\
c_{1}=&b\cdot q^{-(\lambda^{2n-1-i},\lambda^{i-j}-\lambda^{i-j+1})};\\
c_{2}=&a\cdot q^{\frac{1}{4}[(\lambda^{i-j},\lambda^{i-j}-\lambda^{i})+(\lambda^{2n-1-i},\lambda^{i-j}-\lambda^{i})]}\\ &\cdot q^{-(\lambda^{2n-1-i},\lambda^{i-j}-\lambda^{i})+(\lambda^{2n-1-i},\lambda^{i-j}-\lambda^{i-j+1})};\\
c_{3}=&a\cdot q^{\frac{1}{4}[(\lambda^{i-j},\lambda^{i-j}-\lambda^{i})+(\lambda^{2n-1-i},\lambda^{i-j}-\lambda^{i})]-(\lambda^{2n-1-i},\lambda^{i-j}-\lambda^{i})};\\
\boldsymbol{B}_{2n-1-i,i}^{i-j,2n-1-i+j}
=&(-1)^{j}q^{\frac{1}{2}(n-i+j)+\frac{1}{4}}(1-q^{-\frac{1}{2}})+q^{-\frac{1}{4}}(1-q^{\frac{1}{2}}).\\
\end{split}
\end{equation}

In sum,
\begin{equation}\label{Cn1}
\begin{split}
&\boldsymbol{B}_{2n-1-i,i}^{i-j,2n-1-i+j}\\
=&\left\{
                              \begin{array}{ll}
                                 q^{\frac{j+1}{4}}((-1)^{j}+(-1)^{j+1}q^{-\frac{1}{2}}), & \hbox{$0<j\leq i<n$;} \\
                                q^{\frac{3}{4}}(q^{-1}-1), & \hbox{$1=j\leq i=n$;} \\
                                q^{\frac{j+2}{4}}((-1)^{j}+(-1)^{j+1}q^{-\frac{1}{2}}), & \hbox{$1<j\leq i=n$;} \\
                                q^{\frac{j+1}{4}}((-1)^{j}+(-1)^{j+1}q^{-\frac{1}{2}}), & \hbox{$n<i\leq 2n-1\&j>0\& i-j\geq n$;} \\
                                q^{\frac{j+2}{4}}((-1)^{j}+(-1)^{j+1}q^{-\frac{1}{2}}), & \hbox{$n<i\leq 2n-1\& i-j=n-1$;} \\
                                q^{\frac{j+2}{4}}((-1)^{j}+(-1)^{j+1}q^{-\frac{1}{2}}), & \hbox{$n<i\leq 2n-1$} \\
                                                                                        & \hbox{$\& 2n-1-i<i-j<n-1$;} \\
                                (1-(-1)^{j}q^{\frac{n-i+j}{2}})(q^{-\frac{1}{4}}-q^{\frac{1}{4}}), & \hbox{$n<i\leq 2n-1$} \\
                                                                    & \hbox{$\&2n-1-i=i-j<n-1$;} \\
                                q^{\frac{j+2}{4}}((-1)^{j}+(-1)^{j+1}q^{-\frac{1}{2}}), & \hbox{$n<i\leq 2n-1\& i-j<2n-1-i$.}
                              \end{array}
                            \right.
\end{split}
\end{equation}

Monodromy representation is as following:

\begin{equation}\label{Cn2}
\begin{split}
  &\boldsymbol{B}_{2n-1-a,a}^{2n-1-b,b}\\
=&\left\{
                            \begin{array}{ll}
                              (-1)^{a+b+1}q^{-\frac{2n-a-b-1}{4}}(q^{\frac{1}{4}}-q^{-\frac{1}{4}}), & \hbox{$a\leq n-1\& b\geq n$;} \\
&\hbox{$a\geq n\& b\leq n-1$;} \\
                              (-1)^{a+b+1}q^{-\frac{2n-a-b-2}{4}}(q^{\frac{1}{4}}-q^{-\frac{1}{4}})\\
-\delta_{a,b}(q^{\frac{1}{4}}-q^{-\frac{1}{4}}), & \hbox{$a\geq n\& b\geq n$.}
                            \end{array}
                          \right.
\end{split}
\end{equation}

\subsubsection{$D_{n}$}\label{subsubsct:D}

Denote the weights of the fundamental representation of $D_{n}$ Lie algebra by

\begin{align}
&\lambda^{i}=\left\{
                                                                                        \begin{array}{ll}
                                                                                          \lambda-\sum_{j=1}^{i}\alpha_{j}, & \hbox{$i\leq n$;} \\
                                                                                          \lambda-\sum_{j=1}^{n}\alpha_{j}-\sum_{j=2n-i}^{n}\alpha_{j}, & \hbox{$i> n$.}\\
                                                                                        \end{array}
                                                                                      \right.\\
&\lambda^{{n-1}^{'}}=\lambda-\sum_{i=1}^{n-2}\alpha_{i}-\alpha_{n}.
\end{align}

For convenience of discussion, we define the order of $i=0,1,..,n-2,n-1,n-1',n,..2n-2$ as following
\begin{equation}
o(i)=\left\{
       \begin{array}{ll}
         i, & \hbox{$i=0,1,..,n-2,n-1$;} \\
         n, & \hbox{$i=n-1'$;} \\
         i+1, & \hbox{$i=n,n+1,..,2n-2$.}
       \end{array}
     \right.
\end{equation}

The inner products of them are:
\begin{equation}
(\lambda^{s},\lambda^{t})=\left\{
                            \begin{array}{ll}
                             1, & \hbox{$o(s)+o(t)\neq 2n-1,s=t$;} \\
                              0, & \hbox{$o(s)+o(t)\neq 2n-1,s\neq t$;} \\
                              -1, & \hbox{$o(s)+o(t)=2n-1$,}
                            \end{array}
                          \right.
\end{equation}where $s,t=0,1,..,n-2,n-1,n-1^{'},n,n+1,..,2n-2$.

Every weight vector $v_{\lambda^{l}}\in V_{\lambda}$ has Yang-Yang function ${\bm W}_{c}({\bm w},z,\omega_{1},l)$ corresponding to it.
When $l\neq n-1^{'}$,

\begin{equation}
\begin{split}
{\bm W}_{c}({\bm w},z,\omega_{1},l)
=\sum _{j=1}^{l}(\alpha_{i_{j}},\omega_{1})\log \left( w_{j}-z\right) -&\sum_{1\leq j< k\leq l}(\alpha_{i_{j}},\alpha_{i_{k}}) \log \left( w_{j}-w_{k}\right)\\
-&c(\sum_{j=1}^{l}(\alpha_{i_{j}},\rho)w_{j}-(\omega_{1},\rho) z),
\end{split}
\end{equation}where $$i_{j}=\left\{
                                      \begin{array}{ll}
                                        j, & \hbox{$j\leq n$;} \\
                                        2n-1-j, & \hbox{$n+1\leq j\leq 2n-2$.}
                                      \end{array}
                                    \right.
$$
When $l= n-1^{'}$,
\begin{equation}
\begin{split}
{\bm W}_{c}({\bm w},z,\omega_{1},l)
=\sum _{j=1}^{n-1}(\alpha_{i_{j}},\omega_{1})\log \left( w_{j}-z\right) -&\sum_{ j< k\leq n-1}(\alpha_{i_{j}},\alpha_{i_{k}}) \log \left( w_{j}-w_{k}\right)\\
-&c(\sum_{j=1}^{n-1}(\alpha_{i_{j}},\rho)w_{j}-(\omega_{1},\rho) z),
\end{split}
\end{equation}where $$i_{j}=\left\{
                                      \begin{array}{ll}
                                        j, & \hbox{$j=1,...,n-2$;} \\
                                        n, & \hbox{$j=n-1$.}
                                      \end{array}
                                    \right.
$$

By this notation, when $l\geq n+1$, $\{w_{k}, w_{2n-1-k}\}_{2n-1-l\leq k\leq n-2}$ are pairs of symmetric coordinates of $\alpha_{k}$ in the function. Define $\bar{w}_{k}=w_{k}+w_{2n-1-k}$ and $\Delta_{k}=(w_{k}-w_{2n-1-k})^{2}$, $2n-1-l\leq k\leq n-2$.

\begin{lemma} For the fundamental representation $V_{\omega_{1}}$ of $D_{n}$ Lie algebra,   the solutions of the critical point equation \eqref{CE2} of the corresponding Yang-Yang functions ${\bm W}_{c}({\bm w},0,\omega_{1},l)$ are as following:

When $l\leq n-1$, $$
w_{j}=\sum_{i=1}^{j}\frac{1}{c(l-i+1)}\quad j=1,...,l;$$

When $l=n-1^{'}$, $$w_{j}=\sum_{i=1}^{j}\frac{1}{c(n-i)}\quad j=1,...,n-1;$$

When $l=n$, $$w_{j}=\sum_{i=1}^{j}\frac{1}{c(n+1-i)}\quad j=1,...,n-2,$$ and $$w_{n-1}=w_{n}=\frac{1}{c}+\sum_{i=1}^{n-2}\frac{1}{c(n+1-i)};$$

When $l\geq n+1$, $$w_{k}=\sum_{i=1}^{k}\frac{1}{c(l+1-i)}$$ for $k=1,...,2n-2-l,$
$$w_{n-1}=w_{n}=\sum_{i=1}^{2n-2-l}\frac{1}{c(l+1-i)}+\sum_{i=2n-1-l}^{n-2}\frac{1}{c(l-i)}+\frac{1}{c(l+1-n)},$$

\begin{equation}\begin{split}
\bar{w}_{k}=&\sum_{i=1}^{2n-l-2}\frac{2}{c(l+1-i)}+\sum_{i=2n-1-l}^{k-1}\frac{1}{c(l-i)}+\sum_{i=2n-1-l}^{n-2}\frac{1}{c(l-i)}\\
&+\frac{2}{c(l-n+1)}+\sum_{i=k}^{n-2}\frac{1}{c(l+i+2-2n)},\end{split}\end{equation} and
\begin{equation}\begin{split}\Delta_{k}=&[\sum_{i=k}^{n-2}\frac{1}{c(l-i)}+\frac{2}{c(l-n+1)}+\sum_{i=k}^{n-2}\frac{1}{c(l+i+2-2n)}]^2\\&+\sum_{i=k}^{n-2}\frac{1}{c(l-i)}
+\frac{2}{c(l-n+1)}+\sum_{i=k}^{n-2}\frac{1}{c(l+i+2-2n)}-\frac{4}{c(2l-2n+1)},\end{split}\end{equation}
for $2n-1-l\leq k\leq n-2$.
\end{lemma}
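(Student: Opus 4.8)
The plan is to follow verbatim the argument used for $B_{n}$ in Lemma~\ref{lem:exsol}: first obtain the closed forms by a limiting procedure from the known two-parameter solutions, and then, as a check, confirm them by direct substitution into the critical point equation.

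First I would invoke the $D_{n}$ case of \cite{LMV16}, which gives the explicit critical points $w_{1}(\lambda),\dots,w_{l}(\lambda)$ of the two-parameter Yang-Yang function ${\bm W}({\bm w},{\bm z},{\bm\lambda},{\bm l})$ with ${\bm z}=(0,1)$, ${\bm\lambda}=(\lambda,\omega_{1})$. As in \eqref{lpt}, ${\bm W}_{c}({\bm w},0,\omega_{1},l)$ is obtained by sending the weight-$\omega_{1}$ marked point to infinity while the weight at the origin grows linearly: taking ${\bm z}=(0,\mu)$, ${\bm\lambda}=(\mu c\rho,\omega_{1})$ and setting
$$\tilde{w}_{j}=\lim_{\mu\to+\infty}\mu\,(1-w_{j}(\mu c\rho)),\qquad j=1,\dots,l,$$
the critical point equation for ${\bm W}({\bm w},(0,\mu),(\mu c\rho,\omega_{1}),{\bm l})$ degenerates, after the substitution $w_{j}=\mu-\tilde{w}_{j}$ and using $(\alpha_{i_{j}},\mu c\rho)=\mu c(\rho,\alpha_{i_{j}})$, precisely to \eqref{CE2} for ${\bm W}_{c}({\bm w},0,\omega_{1},l)$; this is the same computation already performed for $B_{n}$. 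Reading off $\lim_{\mu}\mu(1-w_{j}(\mu c\rho))$ from the $D_{n}$ formulas of \cite{LMV16} then produces the displayed expressions, organised according to whether $l\le n-1$, $l=n-1'$, $l=n$, or $l\ge n+1$.

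For a self-contained verification one substitutes the claimed formulas directly into \eqref{CE2} with $m=1$, $z_{1}=0$, $\lambda_{1}=\omega_{1}$, using $(\omega_{1},\alpha_{i})=\delta_{i1}$, $(\rho,\alpha_{i})=1$ and the $D_{n}$ Cartan data. For $l\le n-1$, and for $l=n-1'$ where $\alpha_{n-1}$ is replaced by $\alpha_{n}$, the supports form a single Dynkin chain, the consecutive gaps $w_{j}-w_{j-1}=\frac{1}{c(l-j+1)}$ are read off term by term, and each equation becomes an identity (the $j=1$ equation reads $\frac{1}{w_{1}}=cl=c(l-1)+c$, and for $j>1$ the telescoping gaps cancel against $c(\rho,\alpha_{i_{j}})=c$). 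For $l=n$ the only new ingredient is the fork at $\alpha_{n-2}$: the two equal coordinates $w_{n-1}=w_{n}=w_{n-2}+\frac{1}{c}$ each contribute $c$, so that equation balances as well. For $l\ge n+1$ one separates the single coordinates near $z$ from the symmetric pairs $\{w_{k},w_{2n-1-k}\}$, which are the two roots of $w^{2}-\bar{w}_{k}w+\frac14(\bar{w}_{k}^{2}-\Delta_{k})=0$; adding the two paired equations gives the recursion for $\bar{w}_{k}$, subtracting them gives the recursion for $\Delta_{k}$, and the $\alpha_{n-1},\alpha_{n}$ equations fix the boundary terms. The ordering of the coordinates asserted in the next lemma then reads off directly from these formulas.

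The main obstacle is the case $l\ge n+1$: one has to track how the symmetric pairs propagate through the branch node $\alpha_{n-2}$ and check that the pair-difference equations close up into the quadratic telescoping responsible for the $\Delta_{k}$ formula, while also treating the exceptional index $l=n-1'$ and the split $w_{n-1}=w_{n}$ separately. Granting the $D_{n}$ solutions of \cite{LMV16}, however, none of this needs a new idea; it is the same limit computation already carried out for $B_{n}$, so, consistently with the paper's treatment of $C_{n}$ and $D_{n}$, I would relegate the routine algebra and record only the outcome together with the order of the coordinates.
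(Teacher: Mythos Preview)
Your proposal is correct and follows exactly the route the paper intends: the paper states this $D_{n}$ lemma without proof, relying implicitly on the argument already given for $B_{n}$ in Lemma~\ref{lem:exsol}, namely the limit $\tilde{w}_{j}=\lim_{\mu\to+\infty}\mu(1-w_{j}(\mu c\rho))$ applied to the two-parameter solutions from \cite{LMV16}. Your additional direct-substitution check is not in the paper but is a harmless supplement to the same approach.
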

\begin{lemma} $\Delta_{k}>0$ for $k=2n-1-l,...,n-2$. Assume $w_{k}<w_{2n-1-k}$, then the coordinates of critical solutions satisfy the following order:

When $l\leq n-1$, $$0<w_{1}<w_{2}<...<w_{l};$$

When $l=n-1^{'}$, $$0<w_{1}<w_{2}<...<w_{n-1};$$

When $l\geq n$,
$$0<w_{1}<...<w_{n-2}<w_{n-1}=w_{n}<w_{n+1}<...<w_{l}.$$

\end{lemma}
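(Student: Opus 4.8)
The plan is to read every inequality off the explicit solutions of the preceding lemma. There all coordinates $w_j$ are real and are finite sums of unit fractions $\frac{1}{cd}$ with $c>0$ and $d$ a positive integer, so each claim reduces to a positivity or comparison statement about such sums, settled by clearing denominators. First I would dispose of the three elementary ranges. For $l\le n-1$ and for $l=n-1'$ the consecutive differences telescope, $w_j-w_{j-1}=\frac{1}{c(l-j+1)}$ (resp. $\frac{1}{c(n-j)}$), with positive integer denominators for $1\le j\le l$ (resp. $j\le n-1$), and $w_1>0$; this gives the strict chains. For $l=n$ the same works through $j=n-2$, then $w_{n-1}-w_{n-2}=1/c>0$ and $w_{n-1}=w_n$, while the index range $\{2n-1-l,\dots,n-2\}$ is empty, so $\Delta_k>0$ holds vacuously and there are no paired coordinates.

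The substantial case is $l\ge n+1$, where the indices split into four consecutive blocks: the outer coordinates $\{1,\dots,2n-2-l\}$, the small members $\{2n-1-l,\dots,n-2\}$ of the symmetric pairs, the double point $\{n-1,n\}$, and the large members $\{n+1,\dots,l\}$ of the pairs (where $w_{2n-1-k}$ is paired with $w_k$). The outer chain $w_1<\dots<w_{2n-2-l}$ again telescopes, the denominators $l+1-i$ being at least $2l-2n+3\ge 5$. For $\Delta_k>0$ I would write the explicit value of $\Delta_k$ in its factored form $\Delta_k=S_k\bigl(S_k-\frac{4}{c(2l-2n+1)}\bigr)$ (as for $C_n$), with $S_k=\sum_{i=k}^{n-2}\frac{1}{c(l-i)}+\frac{2}{c(l-n+1)}+\sum_{i=k}^{n-2}\frac{1}{c(l+i+2-2n)}$ a sum of positive terms for $2n-1-l\le k\le n-2$. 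Since $S_k$ decreases in $k$, it suffices to bound its smallest value $S_{n-2}=\frac{1}{c(m+2)}+\frac{2}{c(m+1)}+\frac{1}{cm}$, with $m:=l-n\ge 1$, below by $\frac{4}{c(2m+1)}$, which follows termwise from $\frac{2}{m+1}>\frac{2}{2m+1}$ and $\frac{1}{m}>\frac{2}{2m+1}$. Hence both factors of $\Delta_k$ are positive, so $\Delta_k=(w_{2n-1-k}-w_k)^2>0$ and each pair consists of two distinct reals.

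It then remains to order the pairs. Using the hypothesis $w_k<w_{2n-1-k}$ I would write $w_k=\frac{\bar{w}_k-\sqrt{\Delta_k}}{2}$ and $w_{2n-1-k}=\frac{\bar{w}_k+\sqrt{\Delta_k}}{2}$. From the explicit formula $\bar{w}_{k+1}-\bar{w}_k=\frac{1}{c(l-k)}-\frac{1}{c(l+k+2-2n)}<0$, since $(l-k)-(l+k+2-2n)=2(n-1-k)>0$, and $S_k$ (hence $\sqrt{\Delta_k}$) also decreases in $k$; so $w_{2n-1-k}$ decreases in $k$, which is exactly the chain $w_{n+1}<w_{n+2}<\dots<w_l$. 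For the chain $w_{2n-1-l}<\dots<w_{n-2}$ I would show $w_{k+1}>w_k$, that is $\sqrt{\Delta_k}-\sqrt{\Delta_{k+1}}>\bar{w}_k-\bar{w}_{k+1}$. The one genuinely non-mechanical ingredient is the AM--GM estimate $\sqrt{\Delta_k}=\sqrt{S_k\bigl(S_k-\frac{4}{c(2l-2n+1)}\bigr)}\le S_k-\frac{2}{c(2l-2n+1)}$, giving $\sqrt{\Delta_k}+\sqrt{\Delta_{k+1}}\le S_k+S_{k+1}-\frac{4}{c(2l-2n+1)}$; combined with $\Delta_k-\Delta_{k+1}=(S_k-S_{k+1})\bigl(S_k+S_{k+1}-\frac{4}{c(2l-2n+1)}\bigr)$ and $S_k-S_{k+1}=\frac{1}{c(l-k)}+\frac{1}{c(l+k+2-2n)}$ this yields $\sqrt{\Delta_k}-\sqrt{\Delta_{k+1}}\ge S_k-S_{k+1}>\frac{1}{c(l+k+2-2n)}-\frac{1}{c(l-k)}=\bar{w}_k-\bar{w}_{k+1}$, as needed.

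Finally the two block junctions $w_{2n-2-l}<w_{2n-1-l}$ and $w_{n-2}<w_{n-1}=w_n<w_{n+1}$ reduce, after substituting the explicit formulas, to elementary inequalities (for instance $(2w_{n-1}-\bar{w}_{n-2})^2<\Delta_{n-2}$), which I would verify by clearing denominators. Apart from the AM--GM bound, the whole argument is bookkeeping: keeping the four index blocks in order and turning each comparison into the positivity of a sum of unit fractions. That bookkeeping, rather than any single inequality, is what I expect to be the main obstacle to writing the proof cleanly, and it is presumably why the text handles $C_n$ and $D_n$ only up to ``the methods for proofs are similar''.
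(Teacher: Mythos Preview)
Your proposal is correct and in fact supplies considerably more than the paper does: the paper states this lemma (as it does the analogous $B_n$ and $C_n$ lemmas) without any proof, simply appending it to the explicit-solution lemma as a ``clear'' consequence. Your strategy of reading every comparison off the closed formulas, handling the three easy ranges by telescoping and attacking $l\ge n+1$ block by block, is exactly the natural way to fill this in.

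Two remarks. First, you take $\Delta_k=S_k\bigl(S_k-\tfrac{4}{c(2l-2n+1)}\bigr)$, mirroring the $C_n$ formula; the paper's printed $D_n$ expression is $S_k^{\,2}+S_k-\tfrac{4}{c(2l-2n+1)}$, which is dimensionally inconsistent (it mixes $1/c^2$ and $1/c$ terms), so your factored form is almost certainly the intended one and your AM--GM step is legitimate. Even if one insisted on the printed formula, your bound $S_k>\tfrac{4}{c(2l-2n+1)}$ still gives $\Delta_k>0$, so only the ordering argument for the small pair-members would need a minor rewrite. Second, the two ``block junction'' inequalities you defer to clearing denominators are genuinely routine: for instance $w_{n-2}<w_{n-1}<w_{n+1}$ amounts to $|2w_{n-1}-\bar w_{n-2}|<\sqrt{\Delta_{n-2}}$, and both sides are explicit rational expressions in $m=l-n$. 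So the only non-mechanical ingredient in the whole argument is indeed the AM--GM bound you isolate, which the paper neither states nor hints at.
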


By this lemma, $z$ and the critical coordinates $\{w_{j}\}_{j=1,...,l}$ of
${\bm W}_{c}({\bm w},z,\omega_{1},l)$ are on the same horizontal line of ${\bm W}$ plane as shown in figure \ref{fig:dd}.

When $l=2n-2$, the critical point equation of two singularities $z_{1}$ and $z_{2}$ with $c=0$ is as following:
\begin{equation}\label{equ:DNSB}
  \left\{
    \begin{aligned}
      0&=\frac{-1}{w^{1}_{1}-z_{1}}+\frac{-1}{w^{1}_{1}-z_{2}}+\frac{2}{w^{1}_{1}-w^{2}_{1}}+\frac{-1}{w^{1}_{1}-w^{1}_{2}}+\frac{-1}{w^{1}_{1}-w^{2}_{2}}\\
      0&=\frac{-1}{w^{2}_{1}-z_{1}}+\frac{-1}{w^{2}_{1}-z_{2}}+\frac{2}{w^{2}_{1}-w^{1}_{1}}+\frac{-1}{w^{2}_{1}-w^{1}_{2}}+\frac{-1}{w^{2}_{1}-w^{2}_{2}}\\
      ...   \\
      0&=\frac{2}{w^{1}_{k}-w^{2}_{k}}+\frac{-1}{w^{1}_{k}-w^{1}_{k-1}}+\frac{-1}{w^{1}_{k}-w^{2}_{k-1}}+\frac{-1}{w^{1}_{k}-w^{1}_{k+1}}+\frac{-1}{w^{1}_{k}-w^{2}_{k+1}}\\
      0&=\frac{2}{w^{2}_{k}-w^{1}_{k}}+\frac{-1}{w^{2}_{k}-w^{1}_{k-1}}+\frac{-1}{w^{2}_{k}-w^{2}_{k-1}}+\frac{-1}{w^{2}_{k}-w^{1}_{k+1}}+\frac{-1}{w^{2}_{k}-w^{2}_{k+1}}\\
      0&=\frac{2}{w^{1}_{n-2}-w^{2}_{n-2}}+\frac{-1}{w^{1}_{n-2}-w^{1}_{n-3}}+\frac{-1}{w^{1}_{n-2}-w^{2}_{n-3}}\\&+\frac{-1}{w^{1}_{n-2}-w_{n-1}}+\frac{-1}{w^{1}_{n-2}-w_{n}}\\
      0&=\frac{2}{w^{2}_{n-2}-w^{1}_{n-2}}+\frac{-1}{w^{2}_{n-2}-w^{1}_{n-3}}+\frac{-1}{w^{2}_{n-2}-w^{2}_{n-3}}\\&+\frac{-1}{w^{2}_{n-2}-w_{n-1}}+\frac{-1}{w^{2}_{n-2}-w_{n}}\\
      0&=\frac{-1}{w_{n-1}-w^{1}_{n-2}}+\frac{-1}{w_{n-1}-w^{2}_{n-2}}\\
      0&=\frac{-1}{w_{n}-w^{1}_{n-2}}+\frac{-1}{w_{n}-w^{2}_{n-2}},\\
    \end{aligned}
  \right.
\end{equation}
where $2\leq k\leq n-3$. By lemma \ref{rlemma},
the solution is as following: $$\bar{w}_{1}=\bar{w}_{k}=\bar{w}_{n-2}=z_{1}+z_{2},w_{n-1}=w_{n}=\frac{z_{1}+z_{2}}{2}$$ and $$w^{1}_{l}w^{2}_{l}=z_{1}z_{2}+\frac{(z_{1}-z_{2})^{2}l(2n-1-l)}{4n(n-1)},\quad 1\leq l\leq n-2.$$ The distribution of the coordinates of critical point is shown in figure \ref{fig:nsbd}. Similar to the case of one singularity in figure  \ref{fig:dd}, the coordinates are symmetric on the same line connecting $z_{1}$ and $z_{2}$, which gives coordinates of the thimble and singularities $z_{1}$ and $z_{2}$ a total order:$$z_{1}<w^{1}_{1}<w^{1}_{k}<w^{1}_{k+1}<w^{1}_{n}=w^{2}_{n}<w^{2}_{k+1}<w^{2}_{k}<w^{2}_{1}<z_{2}.$$
Because of the existence of the thimble above, there is a wall-crossing whenever $o(i)\neq 0$ and its homotopic class will keep this order.

\begin{figure}
 \centering\includegraphics[width=3cm]{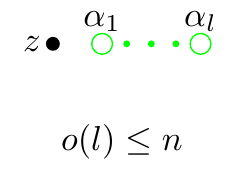}
 \centering\includegraphics[width=3.9cm]{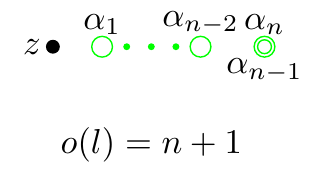}\\
\centering\includegraphics[width=6.5cm]{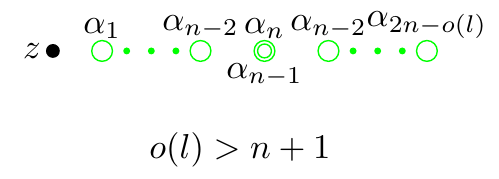}\\
\caption{Coordinates distribution of the $D_{n}$ critical point on ${\bm W}_{c}({\bm w},{\bm z},{\bm \lambda},{\bm l})$ plane near $z$ .}
 \label{fig:dd}
 \end{figure}

Firstly, we consider the case $v_{\lambda^{i}}\otimes v_{\lambda^{j}}\in V_{\omega_{1}}\otimes V_{\omega_{1}}, \quad o(i)+o(j)\neq 2n-1$.

i)$o(i)+o(j)\neq 2n-1\& o(i)\geq o(j) $

There is no wall-crossing, $$\boldsymbol{B}J_{i,j}=q^{-\frac{1}{2}(\lambda^{i},\lambda^{i})}J_{j,i}.$$

ii)$o(i)+o(j)\neq 2n-1\& i<j \& i\neq n$

The variation is of type $II$.
\begin{equation}
\begin{split}
  &\boldsymbol{B}J_{i,j}\\
=&q^{-\frac{1}{2}(\lambda^{i},\lambda^{i})}(J_{j,i}+q^{\frac{1}{2}(\lambda^{i},\lambda^{i}-\lambda^{j})}(q^{-(\lambda^{i},\lambda^{i}-\lambda^{j})}-1)J_{i,j})\\
=&J_{j,i}+(q^{-\frac{1}{2}}-q^{\frac{1}{2}})J_{i,j}.\\
\end{split}
\end{equation}

iii)$o(i)+o(j)\neq 2n-1\& i=n\&j=n+1 $

The variation is of type $I$. As is shown in figure \ref{fig:dbt1}, $$b_{1}+b_{2}=a\cdot q^{\frac{1}{2}(\lambda^{n-1},\alpha_{n})}(1+q^{-\frac{1}{2}(\alpha_{n},\alpha_{n})})$$ is the sum of phase factors of moving one of two $\alpha_{n}$ to the right of $z_{1}$. $$c_{1}+c_{2}=(b_{1}+b_{2})\cdot q^{-(\lambda^{n-1},\alpha_{n})+\frac{1}{2}(\alpha_{n},\alpha_{n})}$$
is from anti-clockwise rotation of $\alpha_{n}$, $2\pi$ around $\{z_{1},\alpha_{1}, ... , \alpha_{n-1}\}$ and $\pi$ around $\alpha_{n}$.
\begin{equation}
\begin{split}
  &\boldsymbol{B}J_{i,j}\\
=&aJ_{j,i}+(c_{1}+c_{2}-b_{1}-b_{2})J_{i,j}\\
=&q^{-\frac{1}{2}(\lambda^{i},\lambda^{i})}(J_{j,i}+q^{\frac{1}{2}(\lambda^{n-1},\alpha_{n})}(1+q^{-\frac{1}{2}(\alpha_{n},\alpha_{n})})(q^{-(\lambda^{n-1},\alpha_{n})+\frac{1}{2}(\alpha_{n},\alpha_{n})}-1)J_{i,j})\\
=&J_{j,i}+(q^{-\frac{1}{2}}-q^{\frac{1}{2}})J_{i,j}.\\
\end{split}
\end{equation}

iv)$o(i)+o(j)\neq 2n-1\& i=n\&j>n+1 $

The variation is a combination of type $I$ and $II$. As shown in figure \ref{fig:dbt12}, there are two possible ways of moving $\alpha_{n}$ from $z_{2}$ to $z_{1}$. But at this time, $\alpha_{n}$ is followed by $\{\alpha_{n-1}, ..., \alpha_{2n+1-j}\}$.
\begin{equation}
\begin{split}
a=&q^{-\frac{1}{2}(\lambda^{i},\lambda^{i})};\\
b_{1}+b_{2}=&a\cdot q^{\frac{1}{2}(\lambda^{n-1},\alpha_{n})+\frac{1}{2}(\lambda^{n},\lambda^{n+1}-\lambda^{j})}(1+q^{-\frac{1}{2}(\alpha_{n},\alpha_{n})});\\ c_{1}+c_{2}=&(b_{1}+b_{2})\cdot q^{-(\lambda^{n-1},\alpha_{n})+\frac{1}{2}(\alpha_{n},\alpha_{n})-(\lambda^{n},\lambda^{n+1}-\lambda_{j})};\\
\boldsymbol{B}J_{i,j}
=&aJ_{j,i}+(c_{1}+c_{2}-b_{1}-b_{2})J_{i,j}\\
=&J_{j,i}+(q^{-\frac{1}{2}}-q^{\frac{1}{2}})J_{i,j}.\\
\end{split}
\end{equation}

In sum,

\begin{equation}
  \boldsymbol{B}J_{i,j}=\left\{
             \begin{array}{ll}
               q^{-\frac{1}{2}}J_{j,i}, & \hbox{$i+j\neq 2n-1\&i=j$;} \\
               J_{j,i}, & \hbox{$i+j\neq 2n-1\&i> j$;} \\
               J_{j,i}+(q^{-\frac{1}{2}}-q^{\frac{1}{2}})J_{i,j}, & \hbox{$i+j\neq 2n-1\&i<j$.}
             \end{array}
           \right.
\end{equation}

For $v_{\lambda^{i}}\otimes v_{\lambda^{j}}\in V_{\omega_{1}}\otimes V_{\omega_{1}}, \quad o(i)+o(j)= 2n-1$, the coefficients of equation \eqref{equ:cof} are $$\boldsymbol{B}_{o^{-1}(2n-1-o(i)),i}^{i,o^{-1}(2n-1-o(i))}=q^{-\frac{1}{2}(\lambda^{o^{-1}(2n-1-o(i))},\lambda^{i})}=q^{\frac{1}{2}}.$$

Let $j>0$ be the difference of the order, then $\boldsymbol{B}_{o^{-1}(2n-1-o(i)),i}^{o^{-1}(o(i)-j),o^{-1}(2n-1-o(i)+j)}$ can be computed as in the following cases.

i) $0<j\leq o(i)<n$ or $o(i)>n\&o(i)-j\geq n$

The variation is of type $III$ and $j$ is the number of moving primary roots. \begin{equation}
\begin{split}
a=&q^{-\frac{1}{2}(\lambda^{o^{-1}(2n-1-o(i))},\lambda^{i})};\\
b=&a\cdot q^{\frac{1}{4}[(\lambda^{o^{-1}(o(i)-j)},\lambda^{o^{-1}(o(i)-j)}-\lambda^{i})+(\lambda^{o^{-1}(2n-1-o(i))},\lambda^{o^{-1}(o(i)-j)}-\lambda^{i})]+\frac{j-1}{2}};\\
c=&b\cdot q^{-(\lambda^{o^{-1}(2n-1-o(i))},\lambda^{o^{-1}(o(i)-j)}-\lambda^{i})}.\\
\end{split}
\end{equation}
$$\boldsymbol{B}_{o^{-1}(2n-1-o(i)),i}^{o^{-1}(o(i)-j),o^{-1}(2n-1-o(i)+j)}
=(-1)^{j}q^{\frac{j+1}{2}}(1-q^{-1}).$$

ii) $n<o(i)\leq 2n-1\& o(i)-j<n-1\& 2o(i)-j\neq 2n-1$

The variation is of type $III$. At this time, the number of primary roots moving from $z_{2}$ to $z_{1}$ is not $j$, but $j-1$. The sign before $b$ is $(-1)^{j-1}$ and the phase factor from $-\pi$ self rotation of $j-1$ primary roots equals to $q^{\frac{j-2}{2}}$. Thus,
\begin{equation}
\begin{split}
b=&a\cdot q^{\frac{1}{4}[(\lambda^{o^{-1}(o(i)-j)},\lambda^{o^{-1}(o(i)-j)}-\lambda^{i})+(\lambda^{o^{-1}(2n-1-o(i))},\lambda^{o^{-1}(o(i)-j)}-\lambda^{i})]+\frac{j-2}{2}};\\
c=&b\cdot q^{-(\lambda^{o^{-1}(2n-1-o(i))},\lambda^{o^{-1}(o(i)-j)}-\lambda^{i})}.\\
\end{split}
\end{equation}
$$\boldsymbol{B}_{o^{-1}(2n-1-o(i)),i}^{o^{-1}(o(i)-j),o^{-1}(2n-1-o(i)+j)}
=(-1)^{j-1}q^{\frac{j}{2}}(1-q^{-1}).$$

iii) $n<o(i)\& o(i)-j<n-1\& 2o(i)-j= 2n-1$

The variation is of type $IV$. The number of primary roots moving from $z_{2}$ to $z_{1}$ is $j-1$. Thus,
\begin{equation}
 \begin{split}
b=&a q^{\frac{1}{4}[(\lambda^{o^{-1}(o(i)-j)},\lambda^{o^{-1}(o(i)-j)}-\lambda^{i})+(\lambda^{o^{-1}(2n-1-o(i))},\lambda^{o^{-1}(o(i)-j)}-\lambda^{i})]+n-o(i)+j-2};\\
c_{1}=&b\cdot q^{-(\lambda^{o^{-1}(2n-1-o(i))},\lambda^{o^{-1}(o(i)-j)}-\lambda^{o^{-1}(o(i)-j+1)})};\\
c_{2}
=&a\cdot q^{\frac{1}{4}[(\lambda^{o^{-1}(o(i)-j)},\lambda^{o^{-1}(o(i)-j)}-\lambda^{i})+(\lambda^{o^{-1}(2n-1-o(i))},\lambda^{o^{-1}(o(i)-j)}-\lambda^{i})]}\\
&q^{{-(\lambda^{o^{-1}(2n-1-o(i))},\lambda^{o^{-1}(o(i)-j)}-\lambda^{i})+(\lambda^{o^{-1}(2n-1-o(i))},\lambda^{o^{-1}(o(i)-j)}-\lambda^{o^{-1}(o(i)-j+1)})}};\\
c_{3}=&a\cdot q^{\frac{1}{4}[(\lambda^{o^{-1}(o(i)-j)},\lambda^{o^{-1}(o(i)-j)}-\lambda^{i})+(\lambda^{o^{-1}(2n-1-o(i))},\lambda^{o^{-1}(o(i)-j)}-\lambda^{i})]}\\
&q^{-(\lambda^{o^{-1}(2n-1-o(i))},\lambda^{o^{-1}(o(i)-j)}-\lambda^{i})}.\\
\end{split}
\end{equation}
$$\boldsymbol{B}_{o^{-1}(2n-1-o(i)),i}^{o^{-1}(o(i)-j),o^{-1}(2n-1-o(i)+j)}
=(-1)^{j-1}q^{\frac{j}{2}}(1-q^{-1})+q^{-\frac{1}{2}}(1-q).$$

iv)$n<o(i)\& o(i)-j=n-1$

The variation is of type $III$. The number of primary roots moving from $z_{2}$ to $z_{1}$ is $j-1$. Thus,
\begin{equation}
\begin{split}
b=&a\cdot q^{\frac{1}{4}[(\lambda^{o^{-1}(o(i)-j)},\lambda^{o^{-1}(o(i)-j)}-\lambda^{i})+(\lambda^{o^{-1}(2n-1-o(i))},\lambda^{o^{-1}(o(i)-j)}-\lambda^{i})]+\frac{j-2}{2}};\\
c=&b\cdot q^{-(\lambda^{o^{-1}(2n-1-o(i))},\lambda^{o^{-1}(o(i)-j)}-\lambda^{i})}.\\
\end{split}
\end{equation}
\begin{equation}
\begin{split}
&\boldsymbol{B}_{o^{-1}(2n-1-o(i)),i}^{o^{-1}(o(i)-j),o^{-1}(2n-1-o(i)+j)}\\
=&\boldsymbol{B}_{o^{-1}(2n-1-o(i)),i}^{n-1,n}\\
=&(-1)^{j-1}q^{\frac{j}{2}}(1-q^{-1}).\\
\end{split}
\end{equation}

v)$o(i)=n\& j\geq 2$

The variation is of type $III$. The number of primary roots moving from $z_{2}$ to $z_{1}$ is $j-1$. Thus,
\begin{equation}
\begin{split}
b=&a\cdot q^{\frac{1}{4}[(\lambda^{o^{-1}(o(i)-j)},\lambda^{o^{-1}(o(i)-j)}-\lambda^{i})+(\lambda^{o^{-1}(2n-1-o(i))},\lambda^{o^{-1}(o(i)-j)}-\lambda^{i})]+\frac{j-2}{2}};\\
c=&b\cdot q^{-(\lambda^{o^{-1}(2n-1-o(i))},\lambda^{o^{-1}(o(i)-j)}-\lambda^{i})}.\\
\end{split}
\end{equation}
$$\boldsymbol{B}_{o^{-1}(2n-1-o(i)),i}^{o^{-1}(o(i)-j),o^{-1}(2n-1-o(i)+j)}
=(-1)^{j-1}q^{\frac{j}{2}}(1-q^{-1}).$$

\begin{equation}\label{Dn1}
\begin{split}
  &\boldsymbol{B}^{o^{-1}(o(i)-j),o^{-1}(2n-1-o(i)+j)}_{o^{-1}(2n-1-o(i)),i}\\
=&\left\{
                                                                 \begin{array}{ll}
                                                                   q^{\frac{j+1}{2}}(-1)^{j}(1-q^{-1}), & \hbox{$o(i)>n\&o(i)-j\geq n\hbox{ or  }0<j\leq o(i)<n$;} \\
                                                                   q^{\frac{j}{2}}(-1)^{j-1}(1-q^{-1}), & \hbox{$o(i)>n\&o(i)-j<n-1\&2o(i)-j\neq 2n-1$;} \\
                                                                   q^{\frac{j}{2}}(-1)^{j-1}(1-q^{-1})\\
+q^{-\frac{1}{2}}-q^{\frac{1}{2}}, & \hbox{$o(i)>n\&o(i)-j<n-1\&2o(i)-j= 2n-1$;} \\
                                                                   q^{\frac{j}{2}}(-1)^{j-1}(1-q^{-1}), & \hbox{$o(i)>n\&o(i)-j=n-1$;} \\
                                                                   q^{\frac{j}{2}}(-1)^{j-1}(1-q^{-1}), & \hbox{$o(i)=n\&j\geq 2$.}
                                                                 \end{array}
                                                               \right.
\end{split}
\end{equation}

\begin{equation}\label{Dn2}
\begin{split}
  &\boldsymbol{B}^{o^{-1}(o(i)-j),o^{-1}(2n-1-o(i)+j)}_{o^{-1}(2n-1-o(i)),i}\\
=&\left\{
                                                                 \begin{array}{ll}
                                                                   q^{\frac{j+1}{2}}(-1)^{j}(1-q^{-1}), & \hbox{$o(i)>n\&o(i)-j\geq n\hbox{ or  }0<j\leq o(i)<n$;}  \\
                                                                   q^{\frac{j}{2}}(-1)^{j-1}(1-q^{-1})\\
+\delta_{2o(i)-j,2n-1}(q^{-\frac{1}{2}}-q^{\frac{1}{2}}), & \hbox{$o(i)\geq n\&o(i)-j\leq n-1$;}
                                                                 \end{array}
                                                               \right.
\end{split}
\end{equation}

Monodromy representation is as following:

\begin{equation}\label{Dn3}
\begin{split}
  &\boldsymbol{B}_{o^{-1}(2n-1-o(a)),a}^{o^{-1}(2n-1-o(b)),b}\\
=&\left\{
                                                  \begin{array}{ll}
                                                    (-1)^{o(a)+o(b)-1}q^{-n+\frac{o(a)+o(b)+1}{2}}(q^{\frac{1}{2}}-q^{-\frac{1}{2}}), & \hbox{$o(a)\leq n-1\& o(b)\geq n$;} \\
&\hbox{$o(a)> n\& o(b)\leq n-1$;} \\
                                                     (-1)^{o(a)+o(b)-2}q^{-n+\frac{o(a)+o(b)}{2}}(q^{\frac{1}{2}}-q^{-\frac{1}{2}})\\
                                                      -\delta_{o(a),o(b)}(q^{\frac{1}{2}}-q^{-\frac{1}{2}}), & \hbox{$o(a)\geq n\& o(b)\geq n$.}
                                                  \end{array}
                                                \right.
\end{split}
\end{equation}

\subsection{Similarity transformation $Q$}
By comparing the monodromy representation with the braid group representation induced by the universal R-matrices (see 7.3B and 7.3C of \cite{Chari}) of quantum group $U_{h}(g)$, we have $\boldsymbol{B_{YY}}=Q\cdot \boldsymbol{B_{U_{h}(g)}}\cdot Q^{-1}$, $Q\in End(V_{\omega_{1}}\otimes V_{\omega_{1}})$.

For $A_{n}$, $Q=Id$;

For $B_{n}$, $Q=\left(
                  \begin{array}{ccc}
                    Id &  &  \\
                     & Q_{2n+1} &  \\
                     &  & Id \\
                  \end{array}
                \right)
$, where \begin{equation}
Q_{2n+1}
=\left(
                     \begin{array}{ccccccccc}
                (-1)^{0}&    &  &   &   &   &    \\
                        &   ...&  &  &  &  &    \\
                        &    & (-1)^{n-1} &   &  &  &    \\
                        &    &  & \frac{(-1)^{n}}{q^{\frac{1}{4}}+q^{-\frac{1}{4}}} &  &  &    \\
                        &    &  &  &  (-1)^{n+1}&  &    \\
                        &    &  &  &    &... &      \\
                        &    &  &  &  &  & (-1)^{2n}   \\
                     \end{array}
                   \right);
\end{equation}

For $C_{n}$ and $D_{n}$, $Q=\left(
                  \begin{array}{ccc}
                    Id &  &  \\
                     & Q_{2n} &  \\
                     &  & Id \\
                  \end{array}
                \right)
$, where\begin{equation}
Q_{2n}
=\left(
                     \begin{array}{ccccccccc}
                (-1)^{0}&    &  &   &   &   &    \\
                        &   ...&  &  &  &  &    \\
                        &    & (-1)^{n-1} &   &  &  &    \\
                        &    &  &   (-1)^{n-1}&  &    \\
                        &    &  &  &   ... &      \\
                        &    &  &  &  &  (-1)^{2n-2}   \\
                     \end{array}
                   \right).
\end{equation}
This proves the first main theorem.

\section{Commutation of two parameter deformations}\label{sec:conclusion}

In the previous section, we have studied the transformation $T(1)$ of parameters $z_{1}$ and $z_{2}$. In this section, we consider the continuous deformation of the real parameter $c$ and its relation with $T(1)$.
As shown in theorem \ref{1sing}, for the fundamental representation $V_{\omega_{1}}$ of each classical complex simple Lie algebra, the lowest weight vectors in $SingV_{\omega_{1}}\otimes V_{\omega_{1}}$ are
\begin{equation}
\begin{split}
A_{n}:&\quad v_{2\omega_{1}-\alpha_{1}}\\
B_{n}:&\quad v_{2\omega_{1}-2\alpha_{1}-...-2\alpha_{n}}\\
C_{n}:&\quad v_{2\omega_{1}-2\alpha_{1}-...-2\alpha_{n-1}-\alpha_{n}}\\
D_{n}:&\quad v_{2\omega_{1}-2\alpha_{1}-...-2\alpha_{n-2}-\alpha_{n-1}-\alpha_{n}}.
\end{split}
\end{equation}
For $A_{n}$, the Yang-Yang function with $c=0$ corresponding to the singular vector $v_{2\omega_{1}-\alpha_{1}}$ is
$${\bm W}({\bm w},{\bm z},{\bm \lambda},1)=\sum_{a=1}^{2}(\alpha_{1},\omega_{1})\log(w_{1}-z_{a})-(\omega_{1},\omega_{1})\log(z_{1}-z_{2}).$$
Its unique critical point is $w_{1}=\frac{z_{1}+z_{2}}{2}$.
For $B_{n}$, $C_{n}$ and $D_{n}$, the corresponding critical point equations \eqref{equ:BNSB} \eqref{equ:CNSB} \eqref{equ:DNSB} are already solved.
Thus, corresponding to the lowest weight vector in $SingV_{\omega_{1}}\otimes V_{\omega_{1}}$ of each classical Lie algebra, there exists an unique thimble $J\subset \mathfrak{J}_{0}(P)$ for $\|e^{-\frac{{\bm W}}{k+h^{v}}}\|$ connecting $z_{1}$ and $z_{2}$ .
The distributions of the coordinates of the critical point on ${\bm W}$ plane are shown in figures \ref{fig:nsba}, \ref{fig:nsbb}, \ref{fig:nsbc} and \ref{fig:nsbd} respectively. As shown in section \ref{subsct:bundle}, $\mathfrak{J}_{0}(P)$ is a $\mathbb{Z}[t,t^{-1}]$-module. Thus, $J$ naturally generates a one dimensional $\mathbb{Z}[t,t^{-1}]$-submodule of $\mathfrak{J}_{0}(P)$. We denote also by $\boldsymbol{B_{YY}}$ the monodromy representation induced by $T(1)$ on the one dimensional $\mathbb{Z}[t,t^{-1}]$-submodule generated by $J$.

\begin{figure}
 \centering\includegraphics[width=2.5cm]{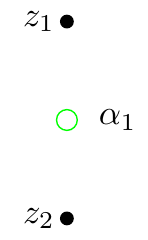}\\
\caption{Coordinates distribution of the $A_{n}$ critical point on ${\bm W}$ plane at $c=0$.}
 \label{fig:nsba}
 \end{figure}

\begin{figure}
 \centering\includegraphics[width=3.5cm]{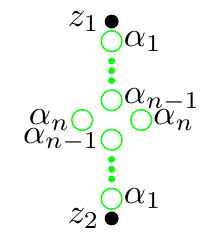}\\
\caption{Coordinates distribution of the $B_{n}$critical point on ${\bm W}$ plane at $c=0$.}
 \label{fig:nsbb}
 \end{figure}

\begin{figure}
 \centering\includegraphics[width=3cm]{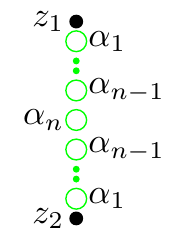}\\
\caption{Coordinates distribution of the $C_{n}$ critical point on ${\bm W}$ plane at $c=0$.}
 \label{fig:nsbc}
 \end{figure}

\begin{figure}
\centering\includegraphics[width=4cm]{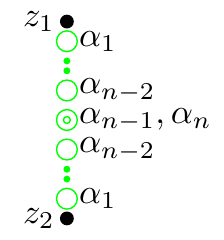}\\
\caption{Coordinates distribution of the $D_{n}$ critical point on ${\bm W}$ plane at $c=0$.}
 \label{fig:nsbd}
 \end{figure}

Let $\alpha_{I}=\{\alpha_{i_{1}},\alpha_{i_{2}}, ...,\alpha_{i_{m}}\}$ be the set of all the primary roots in the lowest weight $2\omega_{1}-\alpha_{i_{1}}-\alpha_{i_{2}}...-\alpha_{i_{m}}$ and $I=\{i_{1},i_{2},...,i_{m}\}$ the set of their indexes. Define $SI(\alpha_{I})=\sum_{i_{j},i_{s}\in I\&j<s}(\alpha_{i_{j}},\alpha_{i_{s}})$ as the sum of the pairwise inner products of the primary roots in  $\alpha_{I}$. By a direct calculation for the phase factor difference from the rotation $T(1)$, we have

\begin{lemma}
\begin{equation}\label{equ:c0}
  \boldsymbol{B_{YY}}J={\bm d} J, \quad {\bm d}=q^{-\frac{1}{2}[(\omega_{1},\omega_{1})-2(\omega_{1},\sum_{i_{j}\in I} \alpha_{i_{j}})+SI(\alpha_{I})]}.
\end{equation}
For $A_{n}$ Lie algebra, $m=1$, $i_{1}=1$, $$SI(\alpha_{I})=0.$$
For $B_{n}$ Lie algebra, $m=2n$, $i_{j}=\left\{
                                          \begin{array}{ll}
                                            j, & \hbox{$j\leq n$;} \\
                                            2n+1-j, & \hbox{$j>n$,}
                                          \end{array}
                                        \right.
$ $$SI(\alpha_{I})=-2(n-1)+1=-2n+3.$$
For $C_{n}$ Lie algebra, $m=2n-1$, $i_{j}=\left\{
                                          \begin{array}{ll}
                                            j, & \hbox{$j\leq n$;} \\
                                            2n-j, & \hbox{$j>n$,}
                                          \end{array}
                                        \right.$ $$SI(\alpha_{I})=-(n-2)-1=-n+1.$$
For $D_{n}$ Lie algebra, $m=2n-2$, $i_{j}=\left\{
                                          \begin{array}{ll}
                                            j, & \hbox{$j\leq n$;} \\
                                            2n-1-j, & \hbox{$j>n$,}
                                          \end{array}
                                        \right.$
$$SI(\alpha_{I})=-2(n-1)=-2n+2.$$
\end{lemma}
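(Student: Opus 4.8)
The plan is to prove the Lemma in three steps: (i) show that the rank-one $\mathbb{Z}[t,t^{-1}]$-submodule spanned by $J$ is $\boldsymbol{B_{YY}}$-invariant, so that $\boldsymbol{B_{YY}}J={\bm d}\,J$ for a single Laurent monomial ${\bm d}$; (ii) evaluate ${\bm d}$ by following the branch of the critical value of the Yang--Yang function around the loop $T$; (iii) read off the four numbers $SI(\alpha_I)$ from the explicit root multisets.

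For step (i), I would use that the Yang--Yang function ${\bm W}({\bm w},{\bm z},(\omega_1,\omega_1),{\bm l})$ attached to the lowest-weight singular vector has a \emph{single} critical point: $w_1=\tfrac12(z_1+z_2)$ for $A_n$, and the solution of \eqref{equ:BNSB}, \eqref{equ:CNSB}, \eqref{equ:DNSB} for $B_n$, $C_n$, $D_n$, which in every case satisfies $\bar w_k=z_1+z_2$. Hence the finite set $\{z_1,z_2\}\cup\{w_j\}$ is preserved by the point reflection through the midpoint $m=\tfrac12(z_1+z_2)$. Since $T(s)$ of \eqref{equ:T} is exactly the clockwise rotation $u\mapsto m+e^{-\mathbbm{i}\pi s}(u-m)$ about $m$, the whole configuration (singularities and critical point) rotates rigidly during $T(s)$, distances $|w_j(s)-z_a(s)|$ stay constant and nonzero, and at $s=1$ everything returns to its starting position (with $z_1\leftrightarrow z_2$). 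Because this Yang--Yang function has no second critical value, no gradient flow between distinct critical points can occur, i.e.\ there is no wall-crossing, so $J$ is carried to a scalar multiple of itself.

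For step (ii), recall that $\mathrm{Im}\,{\bm W}$ is constant along the thimble, so the factor picked up by $J$ is $e^{-\Delta{\bm W}/(\kappa+h^\vee)}$, where $\Delta{\bm W}$ is the change of the critical value ${\bm W}_{\mathrm{crit}}$ along the loop; its only source is the logarithms. Since each difference $u-v$ of configuration points is multiplied by $e^{-\mathbbm{i}\pi s}$, the relevant branch of $\log(u-v)$ changes by $-\mathbbm{i}\pi$ at $s=1$, while the reflection $w_j\mapsto 2m-w_j$ permutes the coordinates belonging to each fixed simple root among themselves and leaves the coefficients of ${\bm W}$ unchanged. Summing the three blocks of logarithmic terms with signs $+,-,-$ and using $\lambda_1=\lambda_2=\omega_1$ gives
\[
\Delta{\bm W}=\mathbbm{i}\pi\Bigl[(\omega_1,\omega_1)-2\bigl(\omega_1,\textstyle\sum_{i_j\in I}\alpha_{i_j}\bigr)+SI(\alpha_I)\Bigr],
\]
and since $q=e^{2\pi\mathbbm{i}/(\kappa+h^\vee)}$, i.e.\ $q^{-1/2}=e^{-\mathbbm{i}\pi/(\kappa+h^\vee)}$, this is exactly ${\bm d}=q^{-\frac12[(\omega_1,\omega_1)-2(\omega_1,\sum_{i_j\in I}\alpha_{i_j})+SI(\alpha_I)]}$ of \eqref{equ:c0}. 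For step (iii), I would evaluate $SI(\alpha_I)=\sum_{j<s}(\alpha_{i_j},\alpha_{i_s})$ directly, using the identity $SI(\alpha_I)=\tfrac12\bigl[(\alpha({\bm l}),\alpha({\bm l}))-\sum_j(\alpha_{i_j},\alpha_{i_j})\bigr]$ together with $\alpha({\bm l})=\alpha_1$ for $A_n$ and $\alpha({\bm l})=2\omega_1$ for $B_n,C_n,D_n$ (the bottom summand of $V_{\omega_1}\otimes V_{\omega_1}$ being the trivial representation in the last three cases) and the squared lengths of the simple roots in the normalization of Section~\ref{sec:main}; plugging in the multiplicities of the roots in each lowest weight then yields the listed values.

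The one substantive step is (i): the absence of wall-crossing for $J$, which rests on the uniqueness of the critical point and its reflection symmetry about $m$. After that, computing ${\bm d}$ is pure branch bookkeeping, the only delicate point being to pin down the orientation of $T$ and the branch of $\log$ so that the sign of the exponent matches \eqref{equ:c0}, and the case-by-case evaluation of $SI(\alpha_I)$ is then routine Lie-theoretic arithmetic.
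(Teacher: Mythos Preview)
Your approach is exactly what the paper does: it states only that the lemma follows ``by a direct calculation for the phase factor difference from the rotation $T(1)$,'' and your steps (i)--(iii) make that calculation explicit, with the reflection symmetry of the unique $c=0$ critical point nicely explaining why no wall-crossing occurs. One caveat on step (iii): your identity $SI(\alpha_I)=\tfrac12\bigl[(\alpha({\bm l}),\alpha({\bm l}))-\sum_j(\alpha_{i_j},\alpha_{i_j})\bigr]$ applied to $D_n$ gives $\tfrac12[4-(2n-2)\cdot 2]=4-2n$, and a direct pairwise count confirms this, so the value $-2n+2$ printed in the statement appears to be a typo rather than a flaw in your argument.
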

Although theorem \ref{1prm} demands that $c\in \mathbb{Z}_{\geq0}$, the thimble structure can be defined continuously on $c\geq0$, thus we can consider the continuous deformation of $c\rightarrow+\infty$ from $c=0$.
For any $J\subset \mathfrak{J}_{0}(P)$, when $c\rightarrow +\infty$, by lemma \ref{split}, the coordinates of its critical point tend either to $z_{1}$ or to $z_{2}$, thus it gives several possible thimbles in $J_{a,b}\subset \mathfrak{J}(P)$.
Multiply to each thimble $J_{a,b}$ the phase factor generated from the deformation and sum them up. Define the symmetry breaking transformation $\boldsymbol{S}$ by
$$\boldsymbol{S}:\mathfrak{J}_{0}(P)\rightarrow\mathfrak{J}(P)$$
\begin{equation}\label{equ:bt}
  \boldsymbol{S} J=\sum_{a,b}e^{a,b}J_{a,b},
\end{equation}

where coefficients $e^{a,b}$ are the phase factor difference of $e^{-\frac{{\bm W}({\bm w},{\bm z},{\bm \lambda},{\bm l})}{\kappa+h^{\vee}}}$ in the process of $c\rightarrow +\infty$ from $0$. For $J\subset \mathfrak{J}_{0}(P)$ corresponding to the lowest weight vector in $Sing V_{\omega_{1}}\otimes V_{\omega_{1}}$, by a direct calculation, we have

\begin{lemma}
For $A_{n}$,
$$\boldsymbol{S} J=q^{-\frac{1}{4}(\omega_{1},\alpha_{1})}J_{n,0}+q^{\frac{1}{4}(\omega_{1},\alpha_{1})}J_{0,n}=q^{-\frac{1}{4}}J_{n,0}+q^{\frac{1}{4}}J_{0,n}.$$
For $B_{n}$,
\begin{equation}
  \begin{split}
  \boldsymbol{S} J=&\sum_{i<n}(-1)^{i}q^{-\frac{n-i}{2}+\frac{1}{4}}J_{2n-i,i}+(-1)^{n}(q^{-\frac{1}{4}+\frac{1}{4}})J_{n,n}\\
  &+\sum_{i>n}(-1)^{i}q^{-\frac{n-i}{2}-\frac{1}{4}}J_{2n-i,i}.
  \end{split}
\end{equation}
For $C_{n}$, $$\boldsymbol{S} J=\sum_{i<n}(-1)^{i}q^{-\frac{n-i}{4}}J_{2n-1-i,i}+\sum_{i\geq n}(-1)^{i}q^{-\frac{n-i-1}{4}}J_{2n-1-i,i}.$$
For $D_{n}$,
\begin{equation}
  \begin{split}
\boldsymbol{S} J=&\sum_{i<n-1}(-1)^{i}q^{-\frac{n-i-1}{2}}J_{2n-2-i,i}+\sum_{i\geq n}^{2n-2}(-1)^{i}q^{-\frac{n-1-i}{2}}J_{2n-2-i,i}\\
&+(-1)^{n-1}(J_{n-1,n-1'}+J_{n-1',n-1}).
\end{split}
\end{equation}
\end{lemma}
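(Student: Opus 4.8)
The plan is to compute the coefficients $e^{a,b}$ of \eqref{equ:bt} directly, type by type, as accumulated phase factors, in the spirit of the wall-crossing analysis of Section~\ref{subsct:monodromy} but carried out along the deformation $c\colon 0\to+\infty$ rather than along the rotation $T$. First I would record the initial data. The thimble $J\subset\mathfrak{J}_{0}(P)$ attached to the lowest-weight singular vector is the thimble of ${\bm W}_{0}({\bm w},{\bm z},{\bm\lambda},{\bm l})$ with $\alpha({\bm l})$ the indicated sum of roots, so that $l=1$ for $A_{n}$, $l=2n$ for $B_{n}$, $l=2n-1$ for $C_{n}$, and $l=2n-2$ for $D_{n}$. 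Its critical point at $c=0$ is already known: $w_{1}=\tfrac{z_{1}+z_{2}}{2}$ for $A_{n}$, and for $B_{n},C_{n},D_{n}$ it is the solution of \eqref{equ:BNSB}, \eqref{equ:CNSB}, \eqref{equ:DNSB} produced by Lemma~\ref{rlemma}, whose coordinates sit symmetrically on the segment joining $z_{1}$ and $z_{2}$ (Figures~\ref{fig:nsba}--\ref{fig:nsbd}): the pairs $(w^{1}_{k},w^{2}_{k})$ with $\Delta_{k}<0$ are vertically separated, while the middle coordinates ($w^{1}_{n},w^{2}_{n}$ with $\Delta_{n}>0$ for $B_{n}$; $w_{n}$ for $C_{n}$; the coincident $w_{n-1}=w_{n}$ for $D_{n}$) lie on the $z_{1}z_{2}$-line.

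Next I would follow $J$ as $c$ increases to $+\infty$. By Lemma~\ref{split} every coordinate tends to $z_{1}$ or to $z_{2}$, and the explicit one-parameter solutions (Lemma~\ref{lem:exsol} and the corresponding lemmas for $C_{n}$ and $D_{n}$) show that the approach is $w_{j}-z_{a}\sim(\text{positive real})/c$; hence every $\log(w_{j}-z_{a})$ in ${\bm W}_{0}$ that blows up does so along the real axis, and $\operatorname{Im}{\bm W}_{0}$ at the critical point stays finite throughout the deformation. Each symmetric pair $(w^{1}_{k},w^{2}_{k})$ must split so that one member reaches $z_{1}$ and the other $z_{2}$; for each pair there are two ways to do this, and at the value of $c$ where the two corresponding critical values acquire equal imaginary part the thimble becomes, exactly as for $T$, a $\mathbb{Z}[t,t^{-1}]$-combination of the two sub-cases, one carrying a reversed orientation on the one-dimensional connecting sub-thimble. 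Collecting the terminal configurations shows that $\boldsymbol{S}J$ is supported precisely on the thimbles $J_{a,b}$ whose combined root content equals $\alpha({\bm l})$, that is, the ones appearing in the statement, each with nonzero coefficient.

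Finally, for each terminal thimble $J_{a,b}$ I would evaluate $e^{a,b}$. Since $\operatorname{Im}{\bm W}_{0}$ is conserved along a thimble~\cite{HL1}, $e^{a,b}$ equals $\exp\!\bigl(-\tfrac{\mathbbm i}{\kappa+h^{\vee}}\,\Theta_{a,b}\bigr)$ times a sign for each sub-thimble whose orientation is reversed en route, where $\Theta_{a,b}$ is the finite change of $\operatorname{Im}{\bm W}_{0}$ at the critical point along the chosen path ending at $(a,b)$; this change is read off from the explicit trajectories as a product of winding numbers of the $(w_{j}-z_{a})$ and $(w_{j}-w_{k})$ factors, exactly the bookkeeping done in the four types of Section~\ref{subsct:monodromy}. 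For $A_{n}$ the single coordinate sweeping from the midpoint to $z_{1}$ (resp.\ to $z_{2}$) winds so as to contribute $q^{-\frac14(\omega_{1},\alpha_{1})}$ (resp.\ $q^{\frac14(\omega_{1},\alpha_{1})}$), giving the two stated terms; for $B_{n},C_{n},D_{n}$ one sums over which member of each pair migrates to $z_{1}$ and over the placement of the middle root(s), the alternating signs $(-1)^{i}$ coming from the orientation reversals and the powers $q^{\mp\frac{n-i}{2}}$, $q^{-\frac{n-i}{4}}$, $q^{-\frac{n-i-1}{2}}$ from the windings, and --- for $D_{n}$ --- the two orderings of $\alpha_{n-1},\alpha_{n}$ at the coincident middle point producing the unit-coefficient pair $(-1)^{n-1}(J_{n-1,n-1'}+J_{n-1',n-1})$. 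Summing and collecting like terms gives the displayed formulas.

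The main obstacle is the bookkeeping for $B_{n},C_{n},D_{n}$: the thimble has dimension $2n$, $2n-1$, $2n-2$, so the passage $c\colon 0\to+\infty$ involves many successive wall-crossings, and one has to verify that the accumulated contributions to each $J_{a,b}$ collapse to the single monomial of the statement --- that the many partial terms cancel --- and that the orientation signs organize into the clean $(-1)^{i}$ and $(-1)^{n-1}$. The delicate points are the horizontally separated middle pair $(w^{1}_{n},w^{2}_{n})$ for $B_{n}$, which (as in case~iii of Section~\ref{subsubsct:B}) produces a symmetric $q^{\frac14}+q^{-\frac14}$-type factor that must combine correctly with the surrounding windings, and the degenerate pair $w_{n-1}=w_{n}$ for $D_{n}$, whose splitting must be resolved into exactly the two terms above. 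If convenient, the auxiliary-point realization \eqref{lpt}, in which the $c$-deformation is the motion of a marked point with the extra $(1-\eta w_{j})$ and $(1-\eta z_{a})$ winding factors cancelling as $\eta\to0$, gives an equivalent route to the same windings.
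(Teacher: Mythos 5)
Your overall strategy---tracking the phase of $e^{-{\bm W}/(\kappa+h^{\vee})}$ at the critical point along the deformation $c\colon 0\to+\infty$ and reading off $e^{a,b}$ from windings and orientation reversals, in parallel with the wall-crossing analysis of Section~\ref{subsct:monodromy}---is the same kind of ``direct calculation'' the paper appeals to, and your treatment of the rank-one $A_{n}$ case is correct in spirit. But there is a genuine gap in the mechanism you describe for $B_{n},C_{n},D_{n}$, and it is inconsistent with the very formulas you are trying to prove. If each symmetric pair $(w^{1}_{k},w^{2}_{k})$ were forced to split with one member reaching $z_{1}$ and the other $z_{2}$, and if the only jumps occurring along the deformation merely exchanged ``the two sub-cases'' of such a pair, then every terminal configuration would carry exactly one copy of each root at each puncture, so the support of $\boldsymbol{S}J$ would be confined to the evenly split thimbles (essentially $J_{n,n}$ for $B_{n}$, and the analogous middle terms for $C_{n},D_{n}$). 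The stated formulas are instead dominated by the asymmetric thimbles $J_{2n-i,i}$ for \emph{all} admissible $i$, including the extreme cases where the entire root content $\alpha({\bm l})$ ends up at a single puncture; your later (correct) assertion that the support consists of all $J_{a,b}$ with combined root content $\alpha({\bm l})$ therefore does not follow from the splitting mechanism you set up, and no argument is offered for how these uneven redistributions arise or how they are weighted.

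The second shortfall is that the lemma \emph{is} its explicit coefficients, and for $B_{n},C_{n},D_{n}$ you never produce them: the signs $(-1)^{i}$, the powers $q^{-\frac{n-i}{2}\pm\frac14}$, $q^{-\frac{n-i}{4}}$, $q^{-\frac{n-i-1}{2}}$, the special $i=n$ term for $B_{n}$, and the unit-coefficient pair $(-1)^{n-1}(J_{n-1,n-1'}+J_{n-1',n-1})$ for $D_{n}$ are all deferred to ``bookkeeping'' whose collapse to single monomials you yourself flag as unverified. The paper admittedly gives no more detail than ``by a direct calculation,'' but a blind proof has to carry that calculation out (or replace it), so as written your proposal establishes only the $A_{n}$ case. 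One economical way to pin down the higher-rank coefficients would be to prove independently that $(e^{a,b})$ must be a left eigenvector of the monodromy matrices of Section~\ref{subsct:monodromy} with eigenvalue ${\bm d}$ of \eqref{equ:c0} (the identity \eqref{eigenvector}), check uniqueness of that eigenvector in the conserved-content block, and fix the normalization by the one easy extreme coefficient; but note that in the paper \eqref{eigenvector} is verified \emph{using} this lemma, so invoking it here requires a separate derivation to avoid circularity.
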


By the monodromy representation $\boldsymbol{B}_{a,b}^{c,d}$ in section \ref{subsct:monodromy} and the lemma above, it is straight forward to prove that $e^{a,b}$ is the eigenvector of the monodromy representation.
\begin{lemma}
\begin{equation}\label{eigenvector}
  \sum_{a,b}e^{a,b}\boldsymbol{B}^{c,d}_{a,b}={\bm d}e^{c,d},
\end{equation}
where ${\bm d}$ is defined in \eqref{equ:c0}.
\end{lemma}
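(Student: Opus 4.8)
The identity \eqref{eigenvector} asserts that the coefficient vector $(e^{a,b})$ of $\boldsymbol{S}J$ is an eigenvector of the monodromy $\boldsymbol{B_{YY}}$ on $\mathfrak{J}(P)$ with eigenvalue ${\bm d}$: indeed, applying $\boldsymbol{B_{YY}}$ to \eqref{equ:bt} and using $\boldsymbol{B}J_{a,b}=\sum_{c,d}\boldsymbol{B}^{c,d}_{a,b}J_{c,d}$ gives $\boldsymbol{B_{YY}}\boldsymbol{S}J=\sum_{c,d}(\sum_{a,b}e^{a,b}\boldsymbol{B}^{c,d}_{a,b})J_{c,d}$, which equals ${\bm d}\,\boldsymbol{S}J$ exactly when \eqref{eigenvector} holds; so this is Theorem \ref{thm2} restricted to the one-dimensional $\mathbb{Z}[t,t^{-1}]$-submodule generated by the lowest-weight $J$. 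Since the two preceding lemmas give $\boldsymbol{S}J$ and ${\bm d}$ in closed form, while Section \ref{subsct:monodromy} gives $\boldsymbol{B}^{c,d}_{a,b}$ in closed form, the plan is simply to verify \eqref{eigenvector} by direct substitution, one classical type at a time, using \eqref{Bn2}, \eqref{Cn2}, \eqref{Dn3} and the $A_n$ wall-crossing formula of Section \ref{subsubsct:A}.

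First I would record the reductions that make the sum on the left of \eqref{eigenvector} finite. By the conservation law of wall-crossing both $\boldsymbol{S}$ and $\boldsymbol{B_{YY}}$ preserve the total number $m$ of primary roots in the lowest weight $2\omega_{1}-\alpha_{i_{1}}-\cdots-\alpha_{i_{m}}$, so $\boldsymbol{S}J$ lies in a single summand of \eqref{11crspd} and the whole identity takes place there; concretely $m=1$ for $A_{n}$, $m=2n$ for $B_{n}$, $m=2n-1$ for $C_{n}$, $m=2n-2$ for $D_{n}$, which is precisely the ``diagonal'' locus $o(a)+o(b)=m$ on which the nontrivial wall-crossing of Section \ref{subsct:monodromy} was computed. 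Triangularity of $\boldsymbol{B}$ (primary roots only migrate from $z_{2}$ toward $z_{1}$) then kills $\boldsymbol{B}^{c,d}_{a,b}$ unless $(a,b)$ lies on the appropriate side of $(c,d)$, so for each fixed $(c,d)$ with $o(c)+o(d)=m$ the left side of \eqref{eigenvector} is a finite sum over $(a,b)$ with $o(a)+o(b)=m$.

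With these reductions the verification becomes, for each type, an elementary identity in $q^{1/2}$ (in $q^{1/4}$ for $C_{n}$). I would plug in the component $e^{a,b}$ — which by the $\boldsymbol{S}J$-lemma is $\pm q^{(\text{affine-linear in the indices})}$, with the $q^{\pm 1/4}$ jump across $o(a)=n$ in the $B_{n}$ case and the $n-1,\,n-1'$ term in the $D_{n}$ case — multiply by the matching entry of \eqref{Bn2}/\eqref{Cn2}/\eqref{Dn3} (a signed monomial multiple of $(q^{1/2}-q^{-1/2})$, resp. $(q^{1/4}-q^{-1/4})$, plus a $\delta_{c,d}$ piece coming from the type-IV wall-crossing), and collect powers of $q$. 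The off-diagonal contributions form a geometric series that telescopes, and the $\delta$-term absorbs the boundary, leaving a single monomial; one then checks it equals ${\bm d}\,e^{c,d}$ using the explicit $SI(\alpha_{I})$ appearing with \eqref{equ:c0} (namely $0,\,-2n+3,\,-n+1,\,-2n+2$ for $A_{n},B_{n},C_{n},D_{n}$). The diagonal target $(c,d)$ with $o(c)=o(d)$, which occurs only for $B_{n},C_{n},D_{n}$, must be treated separately since there the $\delta$-term in the monodromy contributes an extra summand.

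The main obstacle is purely the sign-and-exponent bookkeeping across the four families, most delicately for $B_{n}$, where the middle weight vectors (the coordinates $w_{n},w_{n+1}$) give a type-$I$ variation so that both the eigenvector entry and the $b=n$ rows of \eqref{Bn2} carry the half-integer $q^{\pm 1/4}$ shift and a factor $q^{1/4}+q^{-1/4}$ must cancel, and for $D_{n}$, where the order function $o(\cdot)$ and the unordered pair $n-1,\,n-1'$ make the summation range and the signs $(-1)^{o(a)}$ easy to misplace. The telescoping sums themselves are routine; to guard against an off-by-$q^{1/2}$ or a sign slip I would cross-check the result against the $A_{1}$ computation of Section \ref{subsec:ex} and against the low-rank cases $B_{2}$ and $C_{2}$, where the matrices are small enough to expand directly.
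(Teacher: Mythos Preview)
Your proposal is correct and is exactly the approach taken in the paper: the paper's proof is simply the one-line assertion that, using the explicit monodromy entries from Section~\ref{subsct:monodromy} together with the preceding lemma for $\boldsymbol{S}J$, it is straightforward to verify directly that $(e^{a,b})$ is an eigenvector with eigenvalue ${\bm d}$. Your write-up merely spells out the bookkeeping (conservation-law reduction to the block $o(a)+o(b)=m$, triangularity, telescoping of the geometric series, and the special care at the middle weights for $B_{n}$ and at $n-1,\,n-1'$ for $D_{n}$) that the paper leaves implicit.
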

By this lemma, $$\boldsymbol{B_{YY}} \boldsymbol{S} J=\boldsymbol{B_{YY}}\sum_{a,b}e^{a,b}J_{a,b}={\bm d}\sum_{a,b,c,d}e^{a,b}\boldsymbol{B}^{c,d}_{a,b}J_{c,d}=\boldsymbol{S}\boldsymbol{B_{YY}}J.$$ The second main theorem \ref{thm2} follows.

From the equation \eqref{equ:bt}, it is natural to define a $ \mathbb{Z}[t,t^{-1}]$ linear operator $$M: \mathbb{Z}[t,t^{-1}]\rightarrow V_{\omega_{1}}\otimes V_{\omega_{1}},$$ satisfying $$M(1)=\sum_{a,b}e^{a,b}v_{\lambda^{a}}\otimes v_{\lambda^{b}}.$$

We can also define a creation matrix associated with $M$ as
$$M^{a,b}=e^{a,b}.$$
Annihilation matrix $M_{a,b}$ is its inverse satisfying
$$\sum_{b}M_{a,b}M^{b,c}=\delta_{a}^{c}.$$ With these two matrices, quantum trace of any $(m,m)$ tensor $T_{i_{1}i_{2}...i_{m}}^{j_{1}j_{2}...j_{m}}$ is just as following:
$$Tr_{q}T=\sum_{i_{1},i_{2},...i_{m},j_{1},j_{2},...,j_{m}}T_{i_{1}i_{2}...i_{m}}^{j_{1}j_{2}...j_{m}}\eta_{j_{1}}^{i_{1}}\eta_{j_{2}}^{i_{2}}...\eta_{j_{m}}^{i_{m}},$$
where $\eta^{i_{k}}_{j_{k}}=\sum_{l}M^{i_{k},l}M_{l,j_{k}}$. By Alexander's theorem (p91 I.7 of \cite{Kauffman}), the ambient knots invariants defined by contraction of $\boldsymbol{B}_{a,b}^{c,d}$, $M_{a,b}$ and $M^{a,b}$ in the decomposition of some knot projection diagram coincide with the quantum trace of the $(m,m)$ tensor associated with the braid. In \cite{HL1} and \cite{HL2}, knots invariants associated with the fundamental representation of $A_{n}$ Lie algebra and $B_{n},C_{n},D_{n}$ Lie algebra are proved to be HOMFLY-PT polynomial and Kauffman polynomial respectively.

\begin{remark} Knots invariant depends on representations of Lie algebra. Different representations may give different knots invariants. Generally, the corresponding knots invariant is not necessary to be  HOMFLY-PT polynomial or Kauffman polynomial. We will focus on this point elsewhere.
\end{remark}

\bibliography{Yang-Yang}
\bibliographystyle{unsrt}
\end{document}